\renewcommand*{\backrefalt}[4]{%
    \ifcase #1 \footnotesize{(Not cited.)}%
    \or        \footnotesize{(Cited on page~#2.)}%
    \else      \footnotesize{(Cited on pages~#2.)}%
    \fi}
\long\def\comment#1{}
\newtheorem{theorem}{Theorem}[section]
\newtheorem{corollary}[theorem]{Corollary}
\newtheorem{lemma}[theorem]{Lemma}
\newtheorem{proposition}[theorem]{Proposition}
\newtheorem{definition}{Definition}
\newtheorem{remark}[theorem]{Remark}
\newcommand{\Br}{\mathbb{R}}
\newcommand{\st}{\textnormal{s.t.}}
\newcommand{\argmin}{\mathop{\rm argmin}}
\newcommand{\argmax}{\mathop{\rm argmax}}
\newcommand{\LCal}{\mathcal{L}}
\newcommand{\br}{\mathbb{R}}
\newcommand{\ba}{\begin{array}}
\newcommand{\ea}{\end{array}}
\newcommand{\Rspace}{\mathbb{R}}
\newcommand{\one}{\textbf{1}}
\newcommand{\zero}{\textbf{0}}
\newcommand{\bigO}{O}
\newcommand{\bigOtil}{\widetilde{O}}
\newcommand{\mydefn}{:=}
\begin{document}


\begin{center}

{\bf{\LARGE{On the Efficiency of Entropic Regularized \\ [.2cm] Algorithms for Optimal Transport}}}

\vspace*{.2in}
{\large{
\begin{tabular}{ccc}
Tianyi Lin$^\diamond$ & Nhat Ho$^\star$ & Michael I. Jordan$^{\diamond, \dagger}$ \\
 \end{tabular}
}}

\vspace*{.2in}

\begin{tabular}{c}
Department of Electrical Engineering and Computer Sciences$^\diamond$ \\
Department of Statistics$^\dagger$ \\ 
University of California, Berkeley \\
Department of Statistics and Data Sciences, University of Texas, Austin$^\star$
\end{tabular}

\vspace*{.2in}

\today

\vspace*{.2in}

\begin{abstract} 
We present several new complexity results for the entropic regularized algorithms that approximately solve the optimal transport (OT) problem between two discrete probability measures with at most $n$ atoms. First, we improve the complexity bound of a greedy variant of Sinkhorn, known as \textit{Greenkhorn}, from $\bigOtil(n^2\varepsilon^{-3})$ to $\bigOtil(n^2\varepsilon^{-2})$. Notably, our result can match the best known complexity bound of Sinkhorn and help clarify why Greenkhorn significantly outperforms Sinkhorn in practice in terms of row/column updates as observed by~\citet{Altschuler-2017-Near}. Second, we propose a new algorithm, which we refer to as \textit{APDAMD} and which generalizes an adaptive primal-dual accelerated gradient descent (APDAGD) algorithm~\citep{Dvurechensky-2018-Computational} with a prespecified mirror mapping $\phi$. We prove that APDAMD achieves the complexity bound of $\bigOtil(n^2\sqrt{\delta}\varepsilon^{-1})$ in which $\delta>0$ stands for the regularity of $\phi$. In addition, we show by a counterexample that the complexity bound of $\bigOtil(\min\{n^{9/4}\varepsilon^{-1}, n^2\varepsilon^{-2}\})$ proved for APDAGD before is invalid and give a refined complexity bound of $\bigOtil(n^{5/2}\varepsilon^{-1})$. Further, we develop a \textit{deterministic} accelerated variant of Sinkhorn via appeal to estimated sequence and prove the complexity bound of $\bigOtil(n^{7/3}\varepsilon^{-4/3})$. As such, we see that accelerated variant of Sinkhorn outperforms Sinkhorn and Greenkhorn in terms of $1/\varepsilon$ and APDAGD and accelerated alternating minimization (AAM)~\citep{Guminov-2021-Combination} in terms of $n$. Finally, we conduct the experiments on synthetic and real data and the numerical results show the efficiency of Greenkhorn, APDAMD and accelerated Sinkhorn in practice.
\end{abstract}
\end{center}

\section{Introduction}
From its origins in the seminal works by~\citet{Monge-1781-Memoire} and~\citet{Kantorovich-1942-Translocation} respectively in the eighteenth and twentieth centuries, and through to present day, the optimal transport (OT) problem has played a \textit{determinative} role in the theory of mathematics~\citep{Villani-2009-Optimal}. It also has found a wide range of applications in problem domains beyond the original setting in logistics. In the current era, the strong and increasing linkage between optimization and machine learning has brought new applications of OT to the fore;~\citep[see, e.g.,][]{Nguyen-2013-Convergence, Cuturi-2014-Fast, Srivastava-2015-WASP, Rolet-2016-Fast, Peyre-2016-Averaging, Nguyen-2016-Borrowing, Carriere-2017-Sliced, Arjovsky-2017-Wasserstein, Gulrajani-2017-Improved, Courty-2017-Optimal, Srivastava-2018-Scalable, Dvurechenskii-2018-Decentralize, Tolstikhin-2018-Wasserstein, Sommerfeld-2019-Optimal, Lin-2019-Sparsemax, Ho-2019-Probabilistic}. In these data-driven applications, the focus is on the probability distributions underlying the OT formulation; indeed, these distributions are either empirical distributions which are obtained by placing unit masses at data points, or are probability models of a putative underlying data-generating process. The OT problem accordingly often has a direct inferential meaning --- as the definition of an estimator~\citep{Dudley-1969-Speed, Fournier-2015-Rate, Weed-2019-Sharp, Lei-2020-Convergence}, the definition of a likelihood~\citep{Sommerfeld-2018-Inference, Bernton-2019-Parameter, Blanchet-2019-Quantifying}, or as the robust variant of an estimator~\citep{Blanchet-2019-Robust, Paty-2019-Subspace, Balaji-2020-Robust}. The key challenge is computational~\citep{Peyre-2019-Computational}. Indeed, the underlying distributions generally involve high-dimensional data and complex models in machine learning (ML) applications.

We study the OT problem in a discrete setting where we assume that the target and source probability distributions each have at most $n$ atoms. In this setting, the OT problem can be solved exactly using linear programming (LP) solver based on specialized interior-point methods~\citep{Pele-2009-Fast, Lee-2014-Path, Van-2021-Minimum}, reflecting the LP formulation of the OT problem. In this context,~\citet{Van-2021-Minimum} have provided a bunch of randomized interior-point algorithms with improved runtimes for solving linear programs with two-sided constraints, leading to a new OT algorithm based on the Laplacian system solvers that achieved the best known complexity bounds of $\tilde{O}(n^2)$. However, it does not provide an effective solution to large-scale machine learning problems in practice since efficient implementations of Laplacian approach are yet unknown.  Furthermore, many combinatorial techniques give exact algorithms for the OT problem. Indeed, the Hungarian algorithm~\citep{Kuhn-1955-Hungarian, Kuhn-1956-Variants, Munkres-1957-Algorithms} solves the assignment problem in $O(n^3)$ time while there are several combinatorial algorithms that can solve the OT problem exactly in $\tilde{O}(n^{2.5})$ time~\citep{Gabow-1991-Faster, Orlin-1992-New}. Combined with the scaling technique, the network simplex algorithms~\citep{Orlin-1993-Polynomial, Orlin-1997-Polynomial} can be used to solve the OT problem exactly in $\tilde{O}(n^3)$ time and~\citet{Lahn-2019-Graph} have recently developed a faster approximation algorithm for the OT problem via appeal to the modification of the algorithm developed in~\citet{Gabow-1991-Faster}. However, computing the OT problem exactly results in an output that is \textit{not} differentiable with respect to measures' locations or weights~\citep{Bertsimas-1997-Introduction}. Moreover, OT suffers from the curse of dimensionality~\citep{Dudley-1969-Speed, Fournier-2015-Rate} and is thus likely to be meaningless when used on samples from high-dimensional densities. 

An alternative to solve the OT problem is a class of approximation algorithms based on the entropy regularization which has been investigated in optimization and transportation science long before~\citep{Sinkhorn-1974-Diagonal,Schneider-1990-Comparative, Kalantari-1996-Complexity, Knight-2008-Sinkhorn, Kalantari-2008-Complexity, Chakrabarty-2018-Better}. It was~\citet{Cuturi-2013-Sinkhorn} that popularized the use of entropy regularization for OT in the machine learning community and then initiated a productive line of research where an entropic regularization was imposed to approximate the non-negative constraints in the original OT problem. The resulting problem is referred to as \textit{entropic regularized OT} and the corresponding class of approximation algorithms are called \textit{entropic regularized algorithms}. It is worth mentioning that the entropic regularized OT has many favorable properties that the OT does not enjoy, motivating us to study the computational efficiency of entropic regularized algorithms in this paper. More specifically, from a statistical point of view, the entropic regularized OT enjoys significantly better sample complexity that is polynomial in the dimension~\citep{Genevay-2019-Sample, Mena-2019-Statistical, Chizat-2020-Faster}, demonstrating that adding an entropy regularization will reduce the curse of dimensionality. Even from a computational point of view, such regularization in OT leads to \textit{Sinkhorn} which attains a first near-linear time guarantee for the OT problem~\citep{Cuturi-2013-Sinkhorn, Altschuler-2017-Near, Dvurechensky-2018-Computational}, and also makes the problem differentiable with regards to distributions~\citep{Feydy-2019-Interpolating}; hence, the entropic regularized algorithms are more easily applicable to deep learning applications~\citep{Courty-2017-Optimal, Cuturi-2019-Differentiable, Balaji-2020-Robust} as opposed to combinatorial algorithms. This point was highlighted in~\citet{Dong-2020-study} and further necessitated the development of faster entropic regularized algorithms. In this regard, the greedy variant of Sinkhorn -- Greenkhorn -- was proposed and shown to outperform Sinkhorn empirically~\citep{Altschuler-2017-Near}. However, a sizable gap exists here since the best known complexity bound of $\bigOtil(n^2\varepsilon^{-3})$ for Greenkhorn~\citep{Altschuler-2017-Near} is worse than that of $\bigOtil(n^2\varepsilon^{-2})$ for Sinkhorn~\citep{Dvurechensky-2018-Computational}. 

Further progress has been made by adapting first-order optimization algorithms for the OT problem~\citep{Cuturi-2016-Smoothed, Genevay-2016-Stochastic, Blondel-2018-Smooth, Dvurechensky-2018-Computational, Altschuler-2019-Massively, Guo-2020-Fast, Guminov-2021-Combination}. Among these approaches, two of representatives are an adaptive primal-dual accelerated gradient descent (APDAGD) algorithm~\citep{Dvurechensky-2018-Computational} with the claimed complexity bound of $\bigOtil(\min\{n^{9/4}\varepsilon^{-1}, n^2\varepsilon^{-2}\})$ and an accelerated alternating minimization (AAM) algorithm~\citep{Guminov-2021-Combination} with the complexity bound of $\bigOtil(n^{5/2}\varepsilon^{-1})$. Moreover, there are several second-order optimization algorithms~\citep{Allen-2017-Much, Cohen-2017-Matrix} which are adapted for the OT problem~\citep{Blanchet-2018-Towards, Quanrud-2019-Approximating} and guaranteed to achieve the improved complexity bound of $\bigOtil(n^2\varepsilon^{-1})$. However, the aforementioned second-order algorithms do not provide effective solutions to large-scale machine learning problems due to the lack of efficient implementations in practice. 

\paragraph{Contributions.} Given the advantages of entropic regularization in OT, we focus in his paper the computational efficiency of a class of entropic regularized algorithms for the OT problem and our theoretical analysis lead to several improvements over the state-of-the-art algorithms in the literature. We summarize the contributions as follows:
\begin{enumerate}
\item We improve the complexity bound of Greenkhorn from $\bigOtil(n^2\varepsilon^{-3})$ to $\bigOtil(n^2\varepsilon^{-2})$, which matches the best existing bound of Sinkhorn. The proof techniques are new and different from that used in~\citet{Dvurechensky-2018-Computational} for analyzing Sinkhorn. In particular, Greenkhorn only updates a single row or column at each iteration and quantifying the per-iteration progress is more difficult than the measurement in Sinkhorn. 

\item We propose an adaptive primal-dual accelerated mirror descent (APDAMD) algorithm which generalizes APDAGD with a prespecified mirror mapping $\phi$ and prove that APDAMD achieves the complexity bound of $\bigOtil(n^2\sqrt{\delta}\varepsilon^{-1})$ where $\delta > 0$ refers to the regularity of $\phi$ w.r.t. $\ell_\infty$ norm. We show by a counterexample that the complexity bound of $\bigOtil(\min\{n^{9/4}\varepsilon^{-1}, n^2\varepsilon^{-2}\})$ proved for APDAGD~\citep{Dvurechensky-2018-Computational} is invalid and give a refined complexity bound of $\bigOtil(n^{5/2}\varepsilon^{-1})$ which is worse than the claimed bound in terms of $n$. 

\item We propose a deterministic accelerated variant of Sinkhorn via appeal to an estimated sequence and prove the complexity bound of $\bigOtil(n^{7/3}\varepsilon^{-4/3})$. In particular, accelerated Sinkhorn consists in an exact minimization for main iterates accompanied by another sequence of iterates based on coordinate gradient updates and monotone search. Our results show that accelerated Sinkhorn outperforms Sinkhorn and Greenkhorn in terms of $1/\varepsilon$ and APDAGD and AAM in terms of $n$. 
\end{enumerate}
We note that a preliminary version with only the analysis for Greenkhorn and APDAMD has been accepted by ICML~\citep{Lin-2019-Efficient}. After our conference paper was published, some new algorithms were developed for solving the OT problem~\citep{Jambulapati-2019-Direct, Lahn-2019-Graph}. In particular, ~\citet{Jambulapati-2019-Direct} developed a dual extrapolation algorithm with the complexity bound $\bigOtil \left(n^2\varepsilon^{-1}\right)$ using an area-convex mapping~\citep{Sherman-2017-Area}. Despite the theoretically sound complexity bound, the lack of simplicity and ease-of-implementation make this algorithm less competitive with Sinkhorn and Greenkhorn which remain the baseline solution methods in practice~\citep{Flamary-2017-Pot}. 

Different from the algorithm in~\citet{Jambulapati-2019-Direct}, the combinatorial algorithm in~\citet{Lahn-2019-Graph} is a practical solution method for the OT problem. It is worth mentioning that the algorithm in~\citet{Lahn-2019-Graph} and other combinatorial algorithms, e.g., the Hungarian algorithm, outperform our algorithms in practice. This is in consistence with the observation in~\citet{Dong-2020-study} who pointed out that combinatorial algorithms can outperform entropic regularized algorithms in speed even the latter ones are asymptotically faster for OT (i.e., the case of large $n$). However, we believe our results are still valuable due to the importance of entropic regularized algorithms as mentioned before.

\paragraph{Organization.} The remainder is organized as follows. In Section~\ref{sec:setup}, we present the basic setup for the primal and dual form of the entropic regularized OT problem. In Section~\ref{sec:greenkhorn}, we provide the complexity analysis for Greenkhorn. In Section~\ref{sec:apdamd}, we propose APDAMD for solving entropic regularized OT and provide several results on the complexity bound of APDAGD and APDAMD. In Section~\ref{sec:acceleration}, we propose and analyze an accelerated variant of Sinkhorn. In Section~\ref{sec:experiments}, we conduct the experiments on synthetic and real data and the numerical results show the efficiency of our algorithms. We conclude this paper in Section~\ref{sec:conclusion}.

\paragraph{Notation.} For $n \geq 2$, we let $[n]$ be the set $\{1, 2, \ldots, n\}$ and $\Rspace^n_+$ be the set of all vectors in $\Rspace^n$ with non-negative coordinates. The notation $\Delta^n = \{v \in \Rspace_+^n: \sum_{i = 1}^{n} v_{i} = 1\}$ stands for a probability simplex in $n-1$ dimensions. For a vector $x \in \Rspace^n$ and let $1 \leq p < +\infty$, the notation $\|x\|_p$ stands for the $\ell_p$-norm and $\|x\|$ indicates an $\ell_2$-norm. $\text{diag}(x)$ is a diagonal matrix which has the vector $x$ on its diagonal. $\one_n$ and $\zero_n$ are $n$-dimensional vector with all components being $1$ and 0. For a matrix $A \in \Rspace^{n \times n}$, we denote $\text{vec}(A)$ as the vector in $\Rspace^{n^2}$ obtained from concatenating the rows and columns of $A$. The notation $\|A\|_{1 \rightarrow 1}$ stands for $\sup_{\|x\|_1=1} \|Ax\|_1$ and the notations $r(A) = A\one_n$ and $c(A) = A^\top\one_n$ stand for the row and column sums respectively. For a function $f$, the notation $\nabla_x f$ denotes a partial derivative with respect to $x$. For the dimension $n$ and tolerance $\varepsilon > 0$, the notations $a = \bigO(b(n, \varepsilon))$ and $a = \Omega(b(n,\varepsilon))$ indicate that $a \leq C_1 \cdot b(n, \varepsilon)$ and $a \geq C_2 \cdot b(n, \varepsilon)$ respectively where $C_1$ and $C_2$ are independent of $n$ and $\varepsilon$. We also denote $a = \Theta(b(n,\varepsilon))$ iff $a = \bigO(b(n,\varepsilon)) = \Omega(b(n,\varepsilon))$. Similarly, we denote $a = \bigOtil(b(n, \varepsilon))$ to indicate the previous inequality where $C_1$ depends on some logarithmic function of $n$ and $\varepsilon$. 

\section{Problem Setup}\label{sec:setup}
In this section, we first present the linear programming (LP) representation of the optimal transport (OT) problem as well as a specification of an approximate transportation plan. We also present an entropic regularized variant of the OT problem and derive the dual form where the objective function is in the form of the logarithm of sum of exponents. Finally, we establish several properties of that dual form which are useful for the subsequent analysis.

\subsection{Linear programming representation}
According to~\citet{Kantorovich-1942-Translocation}, the problem of approximating the OT distance is equivalent to solving the following linear programming (LP) problem:
\begin{equation}\label{prob:OT}
\min\limits_{X \in \br^{n \times n}} \langle C, X\rangle \quad \st \ X\one_n = r, X^\top\one_n = c, X \geq 0. 
\end{equation}
In the above formulation, $X$ refers to the \textit{transportation plan}, $C = (C_{ij}) \in \br_+^{n \times n}$ stands for a cost matrix with non-negative components, and $r \in \br^n$ and $c \in \br^n$ are two probability distributions in the simplex $\Delta^n$. 

We see from Eq.~\eqref{prob:OT}, that the OT problem is a LP with $2n$ equality constraints and $n^2$ variables and can be solved by the interior-point method; however, this method performs poorly on large-scale problems due to its high per-iteration computational cost. In general, the solution that the algorithms aim at achieving is an $\varepsilon$-approximate transportation plan $\widehat{X} \in \br_+^{n \times n}$ satisfying the marginal distribution constraints $\widehat{X}\one_n = r$ and $\widehat{X}^\top\one_n = c$ and the inequality given by 
\begin{equation*}
\langle C, \widehat{X}\rangle \leq \langle C, X^\star\rangle + \varepsilon.
\end{equation*}
Here $X^\star$ is defined as an optimal transportation plan for the OT problem. For simplicity, we respectively denote $\langle C, \hat{X}\rangle$ an \emph{$\varepsilon$-approximate transportation cost} and $\hat{X}$ an \emph{$\varepsilon$-approximate transportation plan} for the original problem. Formally, we have the following definition of $\varepsilon$-approximate transportation plan.
\begin{definition}
\label{def:eps-approximation}
The matrix $\hat{X} \in \br_+^{n \times n}$ is called an $\varepsilon$-approximate transportation plan if $\hat{X}\one_n = r$ and $\hat{X}^\top\one_n = c$ and the following inequality holds true,
\begin{equation*}
\langle C, \widehat{X}\rangle \leq \langle C, X^\star\rangle + \varepsilon.
\end{equation*}
where $X^\star$ is defined as an optimal transportation plan for the OT problem. 
\end{definition}
With this definition in mind, the goal of this paper is to study the OT problem from a computational point of view. Indeed, we hope to derive an improved complexity bound of the current state-of-the-art algorithms and seek new practical algorithms whose running time required to obtain an $\varepsilon$-approximate transportation plan has better dependence on $1/\varepsilon$ than the benchmark algorithms in the literature. The aforementioned new algorithms are favorable in the machine learning applications where high precision ($\varepsilon$ is small) is necessary. 

\subsection{Entropic regularized OT and its dual form}
Seeking another formulation for OT distance that is more amenable to computationally efficient algorithms,~\citet{Cuturi-2013-Sinkhorn} proposed to solve an entropic regularized version of the OT problem in Eq.~\eqref{prob:OT}, which is given by
\begin{equation}\label{prob:OT_regularized}
\min\limits_{X \in \br^{n \times n}} \langle C, X\rangle - \eta H(X), \quad \st \ X\one_n = r, X^\top\one_n = c, 
\end{equation}
where $\eta > 0$ denotes the regularization parameter and $H(X)$ denotes the entropic regularization term, which is given by:
\begin{equation*}
H(X) \mydefn - \langle X, \log(X)-\mathbf{1}_{n \times n}\rangle.
\end{equation*}
Note that, the optimal solution of the entropic regularized OT problem exists since the objective function $\langle C, X\rangle -\eta H(X)$ is continuous and the feasible region $\{X \in \Br^{n \times n}: X \geq 0, X\textbf{1}_n = r, X^\top\textbf{1}_n = c\}$ is compact. Furthermore, that optimal solution is also unique since the objective function $\langle C, X\rangle -\eta H(X)$ is strongly convex over the feasible region with respect to $\ell_1$-norm. However, the optimal value of the entropic regularized OT problem (cf. Eq~\eqref{prob:OT_regularized}) yields a poor approximation to the unregularized OT problem if $\eta$ is large. An additional issue of entropic regularization is that the sparsity of the solution is lost. Even though an $\varepsilon$-approximate transportation plan can be found efficiently, it is not clear how different the sparsity pattern of this solution is with respect to the solution of the actual OT problem. In contrast, the actual OT distance suffers from the curse of dimensionality~\citep{Dudley-1969-Speed, Fournier-2015-Rate, Weed-2019-Sharp} and is significantly worse than its entropic regularized version in terms of the sample complexity~\citep{Genevay-2019-Sample, Mena-2019-Statistical, Chizat-2020-Faster}.

While there is an ongoing debate in the literature on the merits of solving the OT problem~\textit{v.s.} its entropic regularized version, we adopt here the viewpoint that reaching an additive approximation of the actual OT cost matters and therefore propose to scale $\eta$ as a function of the desired accuracy of the approximation. Then, we proceed to derive the dual form of the entropic regularized OT problem in Eq.~\eqref{prob:OT_regularized} and show that it remains an unconstrained smooth optimization problem. By introducing the dual variables $\alpha, \beta \in \br^n$, we define the Lagrangian function of the entropic regularized OT problem as follows:
\begin{equation}\label{opt:Lagrangian}
\LCal(X, \lambda_1, \ldots, \lambda_m) = \langle C, X\rangle - \eta H(X) - \alpha^\top(X\one_n - r) - \beta^\top(X^\top\one_n - c). 
\end{equation}
In order to derive the smooth dual objective function, we consider the following minimization problem:
\begin{equation*}
\min_{X: \|X\|_1=1} \langle C, X\rangle - \eta H(X) - \alpha^\top(X\one_n - r) - \beta^\top(X^\top\one_n - c). 
\end{equation*}
The above objective function is strongly convex over the domain $\{X \in \br_+^{n \times n} \mid \|X\|_1=1\}$. Thus, the optimal solution is unique. After the simple calculations, the optimal solution $\bar{X} = X(\alpha, \beta)$ has the following form:
\begin{equation}\label{opt:OT_plan}
\bar{X}_{ij} = \frac{e^{\eta^{-1}(\alpha_i + \beta_j - C_{ij})}}{\sum_{1 \leq i, j \leq n} e^{\eta^{-1}(\alpha_i + \beta_j - C_{ij})}}. 
\end{equation}
Plugging Eq.~\eqref{opt:OT_plan} into Eq.~\eqref{opt:Lagrangian} yields that the dual form is: 
\begin{equation*}
\max_{\alpha, \beta} \ \left\{- \eta\log\left(\sum_{1 \leq i, j \leq n} e^{\eta^{-1}(\alpha_i + \beta_j - C_{ij})}\right) + \alpha^\top r + \beta^\top c\right\}.
\end{equation*}
In order to streamline our presentation, we perform a change of variables, $u = \eta^{-1}\alpha$ and $v = \eta^{-1}\beta$, and reformulate the above problem as
\begin{equation*}
\min_{\alpha, \beta} \ \varphi(\alpha, \beta) \mydefn \log\left(\sum_{1 \leq i, j \leq n} e^{u_i + v_j - \frac{C_{ij}}{\eta}}\right) - u^\top r - v^\top c.  
\end{equation*}
To further simplify the notation, we define $B(u, v) \mydefn (B_{ij})_{1 \leq i, j \leq n} \in \br^{n \times n}$ by
\begin{equation*}
B_{ij} = e^{u_i + v_j - \frac{C_{ij}}{\eta}}. 
\end{equation*}
To this end, we obtain the \emph{dual entropic regularized OT problem} defined by
\begin{equation}\label{prob:OT_regularized_dual}
\min \limits_{u, v} \ \varphi(u, v) \mydefn \log(\|B(u, v)\|_1) - u^\top r - v^\top c.  
\end{equation}
\begin{remark}\label{remark:OT_regularized_dual}
The first part of the objective function $\varphi$ is in the form of the logarithm of sum of exponents while the second part is a linear function. This is different from the objective function used in previous dual entropic regularized OT problem~\citep{Cuturi-2013-Sinkhorn, Altschuler-2017-Near, Dvurechensky-2018-Computational, Lin-2019-Efficient}. Notably, Eq.~\eqref{prob:OT_regularized_dual} is a special instance of a softmax minimization problem, and the objective function $\varphi$ is known to be smooth~\citep{Nesterov-2005-Smooth}. Finally, we point out that the same formulation has been derived in~\citet{Guminov-2021-Combination} for analyzing AAM. 
\end{remark}
In the remainder of the paper, we also denote $(u^\star, v^\star) \in \Rspace^{2n}$ as an optimal solution of the dual entropic regularized OT problem in Eq.~\eqref{prob:OT_regularized_dual}.

\subsection{Properties of dual entropic regularized OT}
We present several useful properties of the dual entropic regularized OT in Eq.~\eqref{prob:OT_regularized_dual}. In particular, we show that there exists an optimal solution $(u^\star, v^\star) \in \Rspace^{2n}$ such that it has an upper bound in terms of the $\ell_\infty$-norm. 
\begin{lemma}\label{lemma:dual-bound-infinity}
For the dual entropic regularized OT problem in Eq.~\eqref{prob:OT_regularized_dual}, there exists an optimal solution $(u^\star, v^\star)$ such that 
\begin{equation*}
\|u^\star\|_\infty \leq R, \qquad \|v^\star\|_\infty \leq R, 
\end{equation*}
where $R \mydefn \eta^{-1}\|C\|_\infty + \log(n) - \log(\min_{1 \leq i, j \leq n} \{r_i, c_j\})$ depends on $C$, $r$ and $c$. 
\end{lemma}
\begin{proof}
First, we claim that there exists an optimal solution $(u^\star, v^\star)$ such that
\begin{equation}\label{claim-dual-bound-first}
\max_{1 \leq i \leq n} u_i^\star \geq 0 \geq \min_{1 \leq i \leq n} u_i^\star, \qquad \max_{1 \leq i \leq n} v_i^\star \geq 0 \geq \min_{1 \leq i \leq n} v_i^\star. 
\end{equation}
Indeed, letting $(\widehat{u}^\star, \widehat{v}^\star)$ be an optimal solution to Eq.~\eqref{prob:OT_regularized_dual}, the claim holds true if $(\widehat{u}^\star, \widehat{v}^\star)$ satisfies Eq.~\eqref{claim-dual-bound-first}. Otherwise, we define the shift term given by 
\begin{eqnarray*}
\widehat{\Delta}_u & = & \frac{\max_{1 \leq i \leq n} \widehat{u}_i^\star + \min_{1 \leq i \leq n} \widehat{u}_i^\star}{2}, \\
\widehat{\Delta}_v & = & \frac{\max_{1 \leq i \leq n} \widehat{v}_i^\star + \min_{1 \leq i \leq n} \widehat{v}_i^\star}{2},  
\end{eqnarray*}
and define $(u^\star, v^\star)$ by  
\begin{equation*}
u^\star = \widehat{u}^\star - \widehat{\Delta}_u \one_n, \qquad v^\star = \widehat{v}^\star - \widehat{\Delta}_v \one_n. 
\end{equation*}
By definition, we have $(u^\star, v^\star)$ satisfies Eq.~\eqref{claim-dual-bound-first}. Since $\one_n^\top r = \one^\top_n c = 1$, we have $(u^\star)^\top r = (\widehat{u}^\star)^\top r - \widehat{\Delta}_u$ and $(v^\star)^\top c = (\widehat{v}^\star)^\top c - \widehat{\Delta}_v$. In addition, $\log(\|B(u^\star, v^\star)\|_1) = \log(\|B(\widehat{u}^\star, \widehat{v}^\star)\|_1) + \widehat{\Delta}_u + \widehat{\Delta}_v$. Putting these pieces together yields $\varphi(u^\star, v^\star) = \varphi(\widehat{u}^\star, \widehat{v}^\star)$. Therefore, $(u^\star, v^\star)$ is an optimal solution of the dual entropic regularized OT that satisfies Eq.~\eqref{claim-dual-bound-first}.

Then, we show that  
\begin{align}
\max_{1 \leq i \leq n} u_i^\star - \min_{1 \leq i \leq n} u_i^\star & \leq \frac{\|C\|_\infty}{\eta} - \log\left(\min_{1 \leq i, j \leq n} \{r_i, c_j\}\right), \label{claim-dual-bound-u} \\
\max_{1 \leq i \leq n} v_i^\star - \min_{1 \leq i \leq n} v_i^\star & \leq \frac{\|C\|_\infty}{\eta} - \log\left(\min_{1 \leq i, j \leq n} \{r_i, c_j\}\right). \label{claim-dual-bound-v}
\end{align}
Indeed, for any $1 \leq i \leq n$, we derive from the optimality condition of $(u^\star, v^\star)$ that
\begin{equation*}
\frac{e^{u_i^\star}(\sum_{j=1}^n e^{v_j^\star - \eta^{-1}C_{ij}})}{\|B(u^\star, v^\star)\|_1} = r_i, \quad \textnormal{for all } i \in [n].  
\end{equation*}
Since $C_{ij} \geq 0$ for all $1 \leq i, j \leq n$ and $r_i \geq \min_{1 \leq i, j \leq n} \{r_i, c_j\}$ for all $1 \leq i \leq n$, we have
\begin{equation*}
u_i^\star \geq \log\left(\min_{1 \leq i, j \leq n} \{r_i, c_j\}\right) - \log \left(\sum_{j=1}^n e^{v_j^\star} \right) + \log(\|B(u^\star, v^\star)\|_1), \quad \textnormal{for all } i \in [n].  
\end{equation*}
Since $0 < r_i \leq 1$ and $C_{ij} \leq \|C\|_\infty$, we have
\begin{equation*}
u_i^\star \leq \frac{\|C\|_\infty}{\eta} -  \log \left(\sum_{j=1}^n e^{v_j^\star} \right) + \log(\|B(u^\star, v^\star)\|_1), \quad \textnormal{for all } i \in [n].  
\end{equation*}
Putting these pieces together yields Eq.~\eqref{claim-dual-bound-u}. By the similar argument, we can prove Eq.~\eqref{claim-dual-bound-v}. 

Finally, we prove our main results. Indeed, Eq.~\eqref{claim-dual-bound-first} and Eq.~\eqref{claim-dual-bound-u} imply that  
\begin{equation*}
-\frac{\|C\|_{\infty}}{\eta} + \log\left(\min_{1 \leq i, j \leq n} \{r_i, c_j\}\right) \leq \min\limits_{1 \leq i \leq n} u_i^\star \leq 0,  
\end{equation*}  
and 
\begin{equation*}
0 \leq \max\limits_{1 \leq i \leq n} u_i^\star \leq \frac{\|C\|_\infty}{\eta} - \log\left(\min_{1 \leq i, j \leq n} \{r_i, c_j\}\right). 
\end{equation*}
Combining the above two inequalities with the definition of $R$ implies that $\|u^\star\|_\infty \leq R$. By the similar argument, we can prove that $\|v^\star\|_\infty \leq R$. As a consequence, we obtain the conclusion of the lemma.
\end{proof}
The upper bound for the $\ell_{\infty}$-norm of an optimal solution of dual entropic regularized OT in Lemma~\ref{lemma:dual-bound-infinity} directly leads to the following direct bound for the $\ell_2$-norm. 
\begin{corollary}\label{corollary:dual-bound-l2}
For the dual entropic regularized OT problem in Eq.~\eqref{prob:OT_regularized_dual}, there exists an optimal solution $(u^\star, v^\star)$ such that 
\begin{equation*}
\|u^\star\| \leq \sqrt{n}R, \qquad \|v^\star\| \leq \sqrt{n}R, 
\end{equation*}
where $R > 0$ is defined in Lemma~\ref{lemma:dual-bound-infinity}. 
\end{corollary}
Since the function $-H(X)$ is strongly convex with respect to the $\ell_1$-norm on the probability simplex $Q \subseteq \Rspace^{n \times n}$, the entropic regularized OT problem in Eq.~\eqref{prob:OT_regularized} is a special case of the following linearly constrained convex optimization problem: 
\begin{equation*}
\min_{x \in Q} \ f(x), \quad \st \ Ax = b, 
\end{equation*}
where $f$ is strongly convex with respect to the $\ell_1$-norm on the set $Q$: 
\begin{equation*}
f(x') - f(x) - (x' - x)^\top\nabla f(x) \geq \frac{\eta}{2}\|x' - x\|_1^2 \textnormal{ for any } x', x \in Q.  
\end{equation*}
By~\citet[Theorem~1]{Nesterov-2005-Smooth} with the $\ell_2$-norm for the dual space of the Lagrange multipliers, the dual objective function $\tilde{\varphi}$ satisfies the following inequality:  
\begin{equation*}
\widetilde{\varphi}(\alpha', \beta') - \widetilde{\varphi}(\alpha, \beta) - \begin{pmatrix} \alpha' - \alpha \\ \beta' - \beta\end{pmatrix}^\top \nabla\widetilde{\varphi}(\alpha, \beta) \leq \frac{\|A\|_{1 \rightarrow 2}^2}{2\eta}\left\|\begin{pmatrix} \alpha' - \alpha \\ \beta' - \beta\end{pmatrix}\right\|^2 \textnormal{ for any } (\alpha', \beta'), (\alpha, \beta) \in \Rspace^{2n}.   
\end{equation*}
Recall that the function $\tilde{\varphi}$ is given by 
\begin{equation}\label{def:dual-regOT-objective}
\widetilde{\varphi}(\alpha, \beta) = - \eta\log\left(\sum_{1 \leq i, j \leq n} e^{\eta^{-1}(\alpha_i + \beta_j - C_{ij})}\right) + \alpha^\top r + \beta^\top c. 
\end{equation}
We notice that the function $\varphi$ in Eq.~\eqref{prob:OT_regularized_dual} satisfies that $\varphi(u, v) = - \eta^{-1}\widetilde{\varphi}(\eta u, \eta v)$. After some simple calculations, we have
\begin{equation}\label{inequality-gradient-objective}
\varphi(u', v') - \varphi(u, v) -  \begin{pmatrix} u' - u \\ v' - v\end{pmatrix}^\top\nabla\varphi(u, v) \leq \left(\frac{\|A\|_{1 \rightarrow 2}^2}{2}\right)\left\|\begin{pmatrix} u' - u \\ v' - v\end{pmatrix}\right\|^2. 
\end{equation}
In the entropic regularized OT problem, each column of the matrix $A$ contains no more than two nonzero elements which are equal to one. Since $\|A\|_{1 \rightarrow 2}$ is equal to maximum $\ell_2$-norm of the column of this matrix, we have $\|A\|_{1 \rightarrow 2} = \sqrt{2}$. Thus, the dual objective function $\varphi$ is 2-gradient Lipschitz with respect to the $\ell_2$-norm.

\begin{algorithm}[!t]\small
\caption{\textsc{Greenkhorn}$(C, \eta, r, c, \varepsilon')$} \label{Algorithm:GK}
\begin{algorithmic}
\STATE \textbf{Input:} $t = 0$ and $u^0 = v^0 = \zero_n$. 
\WHILE{$E_t > \varepsilon'$}
\STATE Compute $I = \argmax_{1 \leq i \leq n} \rho(r_i, r_i(B(u^t, v^t)))$ where $\rho(a, b) = b - a + a\log(a/b)$ and $(B(u^{t}, v^{t}))_{i'j'} = e^{u^{t}_{i'} + v^{t}_{j'} - \frac{C_{i'j'}}{\eta}}$ for all $(i', j')$. 
\STATE Compute $J = \argmax_{1 \leq j \leq n} \rho(c_j, c_j(B(u^t, v^t)))$.
\IF{$\rho(r_i, r_i(B(u^t, v^t))) > \rho(c_j, c_j(B(u^t, v^t)))$}
\STATE $u_I^{t+1} = u_I^t + \log(r_I) - \log(r_I(B(u^t, v^t)))$. 
\ELSE
\STATE $v_J^{t+1} = v_J^t + \log(c_J) - \log(c_J(B(u^t, v^t)))$.
\ENDIF
\STATE Increment by $t = t + 1$. 
\ENDWHILE
\STATE \textbf{Output:} $B(u^t, v^t)$.  
\end{algorithmic}
\end{algorithm} 
\section{Greenkhorn}\label{sec:greenkhorn}
In this section, we present a complexity analysis for Greenkhorn. In particular, we improve the existing best known complexity bound $\bigO(n^2\|C\|_\infty^3\log(n)\varepsilon^{-3})$~\citep{Altschuler-2017-Near} to $\bigO(n^2\|C\|_\infty^2 \log(n)\varepsilon^{-2})$, which matches the current state-of-the-art complexity bound for Sinkhorn~\citep{Dvurechensky-2018-Computational}. 

To facilitate the subsequent discussion, we present the pseudocode of Greenkhorn in Algorithm~\ref{Algorithm:GK} and its application to regularized OT in Algorithm~\ref{Algorithm:ApproxOT_GK}. The function for quantifying the progress in the dual objective value between two consecutive iterates is given by $\rho(a, b) = b - a + a\log(a/b)$ and we recall that $(u, v)$ is an optimal solution of the dual entropic regularized OT problem in Eq.~\eqref{prob:OT_regularized_dual} if $r(B(u, v)) - r = \zero_n$ and $c(B(u, v)) - c = \zero_n$. This leads to the quantity which measures the error of the $t$-th iterate in Algorithm~\ref{Algorithm:GK}:
\begin{equation*}
E_t \mydefn \|r(B(u^t, v^t)) - r\|_1 + \|c(B(u^t, v^t)) - c\|_1.
\end{equation*}
Both Sinkhorn and Greenkhorn can be interpreted as coordinate descent for minimizing the following convex function~\citep{Cuturi-2013-Sinkhorn, Altschuler-2017-Near, Dvurechensky-2018-Computational, Lin-2019-Efficient}: 
\begin{equation}\label{def:Greenkhorn-obj}
f(u, v) \mydefn \|B(u, v)\|_1 - u^\top r - v^\top c.  
\end{equation}
Comparing to the scheme of Sinkhorn that consists in the updates of \textit{all} rows and columns, Algorithm~\ref{Algorithm:GK} updates only \textit{one} row or column at each step. As such, Algorithm~\ref{Algorithm:GK} updates only $\bigO(n)$ entries per iteration rather than $\bigO(n^2)$ in Sinkhorn. It is also worth mentioning that Algorithm~\ref{Algorithm:GK} can be implemented such that each iteration runs in only $\bigO(n)$ arithmetic operations~\citep{Altschuler-2017-Near}. 

Despite cheap per-iteration computational cost, it is difficult to quantify the per-iteration progress of Algorithm~\ref{Algorithm:GK} and the proof techniques for Sinkhorn in~\citet{Dvurechensky-2018-Computational} are not applicable here. This motivates us to investigate another proof strategy which will be elaborated in the sequel.
\begin{algorithm}[!t]\small
\caption{Approximating OT by Algorithm~\ref{Algorithm:GK}} \label{Algorithm:ApproxOT_GK}
\begin{algorithmic}
\STATE \textbf{Input:} $\eta = \tfrac{\varepsilon}{4\log(n)}$ and $\varepsilon'=\tfrac{\varepsilon}{8\|C\|_\infty}$. 
\STATE \textbf{Step 1:} Let $\tilde{r} \in \Delta_n$ and $\tilde{c} \in \Delta_n$ be defined by $(\tilde{r}, \tilde{c}) = (1 - \tfrac{\varepsilon'}{8})(r, c) + \tfrac{\varepsilon'}{8n}(\one_n, \one_n)$. 
\STATE \textbf{Step 2:} Compute $\tilde{X} = \textsc{Greenkhorn}(C, \eta, \tilde{r}, \tilde{c}, \tfrac{\varepsilon'}{2})$. 
\STATE \textbf{Step 3:} Round $\tilde{X}$ to $\hat{X}$ using~\citet[Algorithm~2]{Altschuler-2017-Near} such that $\hat{X}\one_n = r$ and $\hat{X}^\top\one_n = c$.
\STATE \textbf{Output:} $\hat{X}$.  
\end{algorithmic}
\end{algorithm} 
\subsection{Complexity analysis---bounding dual objective values}
Given the definition of $E_t$, we first prove the following lemma which yields an upper bound for the objective values of the iterates.
\begin{lemma}\label{lemma:GK-descent}
Letting $\{(u^t, v^t)\}_{t \geq 0}$ be the iterates generated by Algorithm~\ref{Algorithm:GK}, we have
\begin{equation*}
f(u^t, v^t) - f(u^\star, v^\star) \leq 2E_t(\|u^\star\|_\infty + \|v^\star\|_\infty), 
\end{equation*}
where $(u^*, v^*)$ is a point that minimizes $f(u, v) = \|B(u, v)\|_1 - u^\top r - v^\top c$. 
\end{lemma}
\begin{proof}
By the definition, we have
\begin{equation*}
f(u, v) = \sum_{1 \leq i, j \leq n} e^{u_i + v_j - \frac{C_{ij}}{\eta}} - \sum_{i=1}^n u_i r_i - \sum_{j=1}^n v_j c_j. 
\end{equation*}
By definition, we have $\nabla_u f(u^t, v^t) = B(u^t, v^t)\one_n - r$ and $\nabla_v f(u^t, v^t) = B(u^t, v^t)^\top\one_n - c$. Thus, we have $E_t = \|\nabla_u f(u^t, v^t)\|_1 + \|\nabla_v f(u^t, v^t)\|_1$. Since $f$ is convex and minimized at $(u^\star, v^\star)$, we have
\begin{equation*}
f(u^t, v^t) - f(u^\star, v^\star) \leq (u^t - u^\star)^\top \nabla_u f(u^t, v^t) + (v^t - v^\star)^\top \nabla_v f(u^t, v^t). 
\end{equation*}
Combining H$\ddot{\text{o}}$lder's inequality and the definition of $E_t$ yields
\begin{equation}\label{inequality:GK-descent-first}
f(u^t, v^t) - f(u^\star, v^\star) \leq E_t (\|u^t - u^\star\|_\infty + \|v^t - v^\star\|_\infty). 
\end{equation}
Thus, it suffices to show that 
\begin{equation*}
\|u^t - u^\star\|_\infty + \|v^t - v^\star\|_\infty \leq 2\|u^\star\|_\infty + 2\|v^\star\|_\infty. 
\end{equation*}
The next result is the key observation that makes our analysis work for Greenkhorn. We use an induction argument to establish the following bound: 
\begin{equation}\label{inequality:GK-descent-second}
\max\{\|u^t - u^\star\|_\infty, \|v^t - v^\star\|_\infty\} \leq \max\{\|u^0 - u^\star\|_\infty, \|v^0 - v^\star\|_\infty\}. 
\end{equation}
It is clear that Eq.~\eqref{inequality:GK-descent-second} holds true when $t=0$. Suppose that the inequality holds true for $t \leq k_0$, we show that it also holds true for $t = k_0 + 1$. Without loss of generality, let $I$ be the index chosen at the $(k_0+1)$-th iteration. Then
\begin{align}
\|u^{k_0+1} - u^\star\|_\infty & \leq \max \{\|u^{k_0} - u^\star\|_\infty, |u_I^{k_0+1} - u_I^\star|\}, \label{inequality:GK-descent-bound-u} \\ 
\|v^{k_0+1} - v^\star\|_\infty & = \|v^{k_0} - v^\star\|_\infty. \label{inequality:GK-descent-bound-v}
\end{align} 
By the updating formula for $u_I^{k_0+1}$ and the optimality condition for $u_I^\star$, we have
\begin{equation*}
e^{u_I^{k_0+1}} = \frac{r_I}{\sum_{j=1}^n e^{-\frac{C_{ij}}{\eta} + v_j^{k_{0}}}}, \qquad e^{u_I^\star} = \frac{r_I}{\sum_{j=1}^n e^{-\frac{C_{ij}}{\eta} + v_j^\star}}.
\end{equation*}
Putting these pieces together with the inequality that $\frac{\sum_{i=1}^n a_i}{\sum_{i=1}^n b_i} \leq \max_{1 \leq j \leq n} \frac{a_i}{b_i}$ for all $a_i, b_i > 0$ yields
\begin{equation}\label{inequality:GK-descent-third}
|u_I^{k_0+1} - u_I^\star| = \left|\log\left(\frac{\sum_{j=1}^n e^{-\eta^{-1}C_{Ij} + v_j^{k_0}}}{\sum_{j=1}^n e^{-\eta^{-1}C_{Ij} + v_j^\star}}\right)\right| \leq \|v^{k_0} - v^\star\|_\infty.    
\end{equation}
Combining Eq.~\eqref{inequality:GK-descent-bound-u} and Eq.~\eqref{inequality:GK-descent-third} yields 
\begin{equation}\label{inequality:GK-descent-fourth}
\|u^{k_0+1} - u^\star\|_\infty \leq \max\{\|u^{k_0} - u^\star\|_\infty, \|v^{k_0} - v^\star\|_\infty\}. 
\end{equation}
Combining Eq.~\eqref{inequality:GK-descent-bound-v} and Eq.~\eqref{inequality:GK-descent-fourth} further implies Eq.~\eqref{inequality:GK-descent-second}. This together with $u^0 = v^0 = \zero_n$ implies
\begin{equation}\label{inequality:GK-descent-fifth}
\|u^t - u^\star\|_\infty + \|v^t - v^\star\|_\infty \leq 2(\|u^0 - u^\star\|_\infty + \|v^0 - v^\star\|_\infty) = 2\|u^\star\|_\infty + 2\|v^\star\|_\infty.
\end{equation}
Putting Eq.~\eqref{inequality:GK-descent-first} and Eq.~\eqref{inequality:GK-descent-fifth} together yields the desired result. 
\end{proof}
Our second lemma shows that at least one optimal solution $(u^\star, v^\star)$ of $f$ has an upper bound of $\eta^{-1}\|C\|_\infty + \log(n) - 2\log(\min_{1 \leq i, j \leq n} \{r_i, c_j\})$ in $\ell_\infty$-norm. This result is stronger than~\citet[Lemma~1]{Dvurechensky-2018-Computational} and generalizes~\citet[Lemma~10]{Blanchet-2018-Towards}. 
\begin{lemma}\label{lemma:GK-bound}
There exists an optimal solution $(u^\star, v^\star)$ of the function $f$ defined in Eq.~\eqref{def:Greenkhorn-obj} such that the following inequality holds true, 
\begin{equation*}
\|u^\star\|_\infty \leq R, \qquad \|v^\star\|_\infty \leq R, 
\end{equation*}
where $R \mydefn \eta^{-1}\|C\|_\infty + \log(n) - 2\log(\min_{1 \leq i, j \leq n} \{r_i, c_j\})$ depends on $C$, $r$ and $c$. 
\end{lemma}
\begin{proof}
By using the similar argument as in Lemma~\ref{lemma:dual-bound-infinity}, we can first show that there exists an optimal solution pair $(u^\star, v^\star)$ such that (but not for $v^\star$ simultaneously)
\begin{equation}\label{inequality:GK-bound-first}
\max_{1 \leq i \leq n} u_i^\star \geq 0 \geq \min_{1 \leq i \leq n} u_i^\star. 
\end{equation}
Then, we proceed to establish the bounds that are analogous to Eq.~\eqref{claim-dual-bound-u} and~\eqref{claim-dual-bound-v}:
\begin{align}
\max_{1 \leq i \leq n} u_i^\star - \min_{1 \leq i \leq n} u_i^\star & \leq \frac{\|C\|_\infty}{\eta} - \log\left(\min_{1 \leq i, j \leq n} \{r_i, c_j\}\right), \label{inequality:GK-bound-u} \\
\max_{1 \leq i \leq n} v_i^\star - \min_{1 \leq i \leq n} v_i^\star & \leq \frac{\|C\|_\infty}{\eta} - \log\left(\min_{1 \leq i, j \leq n} \{r_i, c_j\}\right). \label{inequality:GK-bound-v}
\end{align}
Indeed, for each $1 \leq i \leq n$, we have
\begin{equation*}
e^{-\eta^{-1}\|C\|_\infty + u_i^\star} \left(\sum_{j=1}^n e^{v_j^\star}\right) \leq \sum_{j=1}^n e^{-\eta^{-1}C_{ij} + u_i^\star + v_j^\star} = [B(u^\star, v^\star)\one_n]_i = r_i \leq  1,
\end{equation*}
which implies $u_i^\star \leq \eta^{-1}\|C\|_\infty - \log(\sum_{j=1}^n e^{v_j^\star})$. Furthermore, we have
\begin{equation*}
e^{u_i^\star}\left(\sum_{j=1}^n e^{v_j^\star}\right) \geq \sum_{j=1}^n e^{-\eta^{-1}C_{ij} + u_i^\star + v_j^\star} = [B(u^\star, v^\star)\one_n]_i = r_i \geq \min_{1 \leq i, j \leq n} \{r_i, c_j\},
\end{equation*}
which implies $u_i^\star \geq \log(\min_{1 \leq i, j \leq n} \{r_i, c_j\}) - \log(\sum_{j=1}^n e^{v_j^\star})$. Putting these pieces together yields Eq.~\eqref{inequality:GK-bound-u}. Using the similar argument, we can prove Eq.~\eqref{inequality:GK-bound-v} holds true. 

Finally, we prove our main results. Since $\max_{1 \leq i \leq n} u_i^\star \geq 0 \geq \min_{1 \leq i \leq n} u_i^\star$, we derive from Eq.~\eqref{inequality:GK-bound-u} that 
\begin{equation*}
- \frac{\|C\|_\infty}{\eta} + \log\left(\min_{1 \leq i, j \leq n} \{r_i, c_j\}\right) \leq \min_{1 \leq i \leq n} u_i^\star \leq \max_{1 \leq i \leq n} u_i^\star \leq \frac{\|C\|_\infty}{\eta} - \log\left(\min_{1 \leq i, j \leq n} \{r_i, c_j\}\right). 
\end{equation*}
This implies that $\|u^\star\|_\infty \leq R$. Then, we bound $\|v^\star\|_\infty$ by considering two different cases. 

For the former case, we assume that $\max_{1 \leq i \leq n} v_i^\star \geq 0$. Note that the optimality condition is $\sum_{i, j=1}^n e^{-\eta^{-1}C_{ij} + u_i^\star + v_j^\star} = 1$ and further implies that 
\begin{equation*}
\max_{1 \leq i \leq n} u_i^\star + \max_{1 \leq i \leq n} v_i^\star \leq \log\left(\max_{1 \leq i, j \leq n} e^{\eta^{-1}C_{ij}}\right) = \frac{\|C\|_\infty}{\eta}. 
\end{equation*}
Since $\max_{1 \leq i \leq n} u_i^* \geq 0$ and $\max_{1 \leq i \leq n} v_i^* \geq 0$, we have $0 \leq \max_{1 \leq i \leq n} v_i^\star \leq \frac{\|C\|_\infty}{\eta}$. Combining $\max_{1 \leq i \leq n} v_i^* \geq 0$ with Eq.~\eqref{inequality:GK-bound-v} yields that 
\begin{equation*}
 \min_{1 \leq i \leq n} v_i^\star \geq -\frac{\|C\|_\infty}{\eta} + \log\left(\min_{1 \leq i, j \leq n} \{r_i, c_j\}\right).
\end{equation*}
which implies that $\|v^\star\|_\infty \leq R$.

For the latter case, we assume that $\max_{1 \leq i \leq n} v_i^\star \leq 0$. Then, we have
\begin{equation*}
\min_{1 \leq i \leq n} v_i^\star \geq \log\left(\min_{1 \leq i, j \leq n} \{r_i, c_j\}\right) - \log\left(\sum_{i=1}^n e^{u_i^\star}\right). 
\end{equation*}
This together with $\|u^\star\|_\infty \leq \frac{\|C\|_\infty}{\eta} - \log(\min_{1 \leq i, j \leq n} \{r_i, c_j\})$ yields that $\|v^\star\|_\infty \leq R$.
\end{proof}
Putting Lemma~\ref{lemma:GK-descent} and~\ref{lemma:GK-bound} together, we have the following straightforward consequence: 
\begin{corollary}\label{corollary:GK-objgap}
Letting $\{(u^t, v^t)\}_{t \geq 0}$ be the iterates generated by Algorithm~\ref{Algorithm:GK}, we have
\begin{equation*}
f(u^t, v^t) - f(u^\star, v^\star) \leq 4RE_t.
\end{equation*}
\end{corollary}
\begin{remark}
The notation $R$ is also used in~\citet{Dvurechensky-2018-Computational} and has the same order as ours since $R$ in our paper only involves an term $\log(n) - \log(\min_{1\leq i, j \leq n} \{r_i, c_j\})$.
\end{remark}
\begin{remark}
We further comment on the proof techniques in this paper and~\cite{Dvurechensky-2018-Computational}. Indeed, the proof for~\citet[Lemma~2]{Dvurechensky-2018-Computational} depends on taking full advantage of the shift property of Sinkhorn; namely, either $B(\overline{u}^t, \overline{v}^t)\one_n = r$ or $B(\overline{u}^t, \overline{v}^t)^\top\one_n = c$ where $(\overline{u}^t, \overline{v}^t)$ stands for the iterate generated by Sinkhorn. Unfortunately, Greenkhorn does not enjoy such a shift property. We have thus proposed a different approach for bounding $f(u^t, v^t) - f(u^\star, v^\star)$ via appeal to the $\ell_\infty$-norm of the solution $(u^\star, v^\star)$.
\end{remark}

\subsection{Complexity analysis---bounding the number of iterations}
We proceed to provide an upper bound for the iteration number to achieve a desired tolerance $\varepsilon'$ in Algorithm~\ref{Algorithm:GK}. First, we start with a lower bound for the difference of function values between two consecutive iterates of Algorithm~\ref{Algorithm:GK}:
\begin{lemma} \label{lemma:GK-objgap}
Letting $\{(u^t, v^t)\}_{t \geq 0}$ be the iterates generated by Algorithm~\ref{Algorithm:GK}, we have
\begin{equation*}
f(u^t, v^t) - f(u^{t+1}, v^{t+1}) \geq \frac{(E_t)^2}{28n}. 
\end{equation*}
\end{lemma}
\begin{proof}
Combining~\citet[Lemma~5]{Altschuler-2017-Near} and the fact that the row or column update is chosen in a greedy manner, we have
\begin{equation*}
f(u^t, v^t) - f(u^{t+1}, v^{t+1}) \geq \frac{1}{2n}\left(\rho(r, r(B(u^t, v^t)) + \rho(c, c(B(u^t, v^t))\right). 
\end{equation*}
Furthermore,~\citet[Lemma~6]{Altschuler-2017-Near} implies that 
\begin{equation*}
\rho(r, r(B(u^t, v^t)) + \rho(c, c(B(u^t, v^t)) \geq \frac{1}{7}\left(\|r - r(B(u^t, v^t))\|_1^2 + \|c - c(B(u^t, v^t))\|_1^2\right). 
\end{equation*}
Putting these pieces together yields that 
\begin{equation*}
f(u^t, v^t) - f(u^{t+1}, v^{t+1}) \geq \frac{1}{14n}\left(\|r - r(B(u^t, v^t))\|_1^2 + \|c - c(B(u^t, v^t))\|_1^2\right). 
\end{equation*}
Combining the above inequality with the definition of $E_t$ implies the desired result.
\end{proof}
We are now able to derive the iteration complexity of Algorithm~\ref{Algorithm:GK}. 
\begin{theorem}\label{Theorem:GK-Total-Complexity}
Letting $\{(u^t, v^t)\}_{t \geq 0}$ be the iterates generated by Algorithm~\ref{Algorithm:GK}, the number of iterations required to satisfy $E_t \leq \varepsilon'$ is upper bounded by $t \leq 2 + \tfrac{112nR}{\varepsilon'}$ where $R > 0$ is defined in Lemma~\ref{lemma:GK-bound}.
\end{theorem}
\begin{proof}
Letting $\delta_t = f(u^t, v^t) - f(u^\star, v^\star)$, we derive from Corollary~\ref{corollary:GK-objgap} and Lemma~\ref{lemma:GK-objgap} that
\begin{equation*}
\delta_t - \delta_{t+1} \geq \max\left\{\frac{\delta_t^2}{448nR^2}, \frac{(\varepsilon')^2}{28n}\right\}, 
\end{equation*}
where $E_t \geq \varepsilon'$ as soon as the stopping criterion is not fulfilled. In the following step we apply a switching strategy introduced by~\citet{Dvurechensky-2018-Computational}. Given any $t \geq 1$, we have two estimates: 
\begin{itemize}
\item[(i)] Considering the process from the first iteration and the $t$-th iteration, we have
\begin{equation*}
\frac{\delta_{t+1}}{448nR^2} \leq \frac{1}{t + 448nR^2\delta_1^{-2}} \ \Longrightarrow \ t \leq 1 + \frac{448nR^2}{\delta_t} - \frac{448nR^2}{\delta_1}. 
\end{equation*}
\item[(ii)] Considering the process from the $(t+1)$-th iteration to the $(t+m)$-th iteration for any $m \geq 1$, we have
\begin{equation*}
\delta_{t+m} \leq \delta_t - \frac{(\varepsilon')^2 m}{28n} \ \Longrightarrow \ m \leq \frac{28n(\delta_t - \delta_{t+m})}{(\varepsilon')^2}. 
\end{equation*}
\end{itemize}
We then minimize the sum of two estimates by an optimal choice of $s \in (0, \delta_1]$: 
\begin{equation*}
t \leq \min_{0 < s \leq \delta_1} \left(2 + \frac{448nR^2}{s} - \frac{448nR^2}{\delta_1} + \frac{28ns}{(\varepsilon')^2}\right) = 
\left\{\begin{array}{ll}
2 + \frac{224nR}{\varepsilon'} - \frac{448nR^2}{\delta_1}, & \delta_1 \geq 4R\varepsilon', \\
2 + \frac{28n\delta_1}{(\varepsilon')^2}, & \delta_1 \leq 4R\varepsilon'. 
\end{array}
\right. 
\end{equation*}
This implies that $t \leq 2 + \frac{112nR}{\varepsilon'}$ in both cases and completes the proof. 
\end{proof}
Equipped with the result of Theorem~\ref{Theorem:GK-Total-Complexity} and the scheme of Algorithm~\ref{Algorithm:ApproxOT_GK}, we are able to establish the following result for the complexity of Algorithm~\ref{Algorithm:ApproxOT_GK}:
\begin{theorem}\label{Theorem:ApproxOT-GK-Total-Complexity}
The Greenkhorn scheme for approximating optimal transport (Algorithm~\ref{Algorithm:ApproxOT_GK}) returns an $\varepsilon$-approximate transportation plan (cf. Definition~\ref{def:eps-approximation}) in
\begin{equation*}
\bigO\left(\frac{n^2\left\|C\right\|_\infty^2\log(n)}{\varepsilon^2}\right)
\end{equation*}
arithmetic operations. 
\end{theorem}
\begin{proof}
We follow the proof steps in~\cite[Theorem~1]{Altschuler-2017-Near} and obtain that the transportation plan $\hat{X}$ returned by Algorithm~\ref{Algorithm:ApproxOT_GK} satisfies that 
\begin{eqnarray*}
\langle C, \hat{X}\rangle - \langle C, X^\star\rangle & \leq & 2\eta\log(n) + 4(\|\tilde{X}\one_n - r\|_1 + \|\tilde{X}^\top\one_n - c\|_1)\|C\|_\infty \\
& \leq & \frac{\varepsilon}{2} + 4(\|\tilde{X}\one_n - r\|_1 + \|\tilde{X}^\top\one_n - c\|_1)\|C\|_\infty, 
\end{eqnarray*}
where $X^\star$ is an optimal solution to the OT problem and $\tilde{X} = \textsc{Greenkhorn}(C, \eta, \tilde{r}, \tilde{c}, \tfrac{\varepsilon'}{2})$. The last inequality in the above display holds true since $\eta = \frac{\varepsilon}{4\log(n)}$. Furthermore,
\begin{eqnarray*}
\|\tilde{X}\one_n - r\|_1 + \|\tilde{X}^\top\one_n - c\|_1 & \leq & \|\tilde{X}\one_n - \tilde{r}\|_1 + \|\tilde{X}^\top\one_n - \tilde{c}\|_1 + \|r - \tilde{r}\|_1 + \|c - \tilde{c}\|_1 \\
& \leq & \frac{\varepsilon'}{2} + \frac{\varepsilon'}{4} + \frac{\varepsilon'}{4} = \varepsilon'.
\end{eqnarray*}
Putting these pieces together with $\varepsilon' = \frac{\varepsilon}{8\|C\|_\infty}$ yields that $\langle C, \hat{X}\rangle - \langle C, X^\star\rangle \leq \varepsilon$. 

The remaining step is to analyze the complexity bound. It follows from Theorem~\ref{Theorem:GK-Total-Complexity} and the definition of $\tilde{r}$ and $\tilde{c}$ in Algorithm~\ref{Algorithm:ApproxOT_GK} that 
\begin{eqnarray*}
t \ \leq \ 2 + \frac{112nR}{\varepsilon'} & \leq & 2 + \frac{96n\|C\|_\infty}{\varepsilon}\biggr(\frac{\|C\|_\infty}{\eta} + \log(n) - 2\log\left(\min_{1 \leq i, j \leq n} \{r_i, c_j\}\right)\biggr) \\
& \leq & 2 + \frac{96n\|C\|_\infty}{\varepsilon}\left(\frac{4\|C\|_\infty\log(n)}{\varepsilon} + \log(n) - 2\log\left(\frac{\varepsilon}{64n\|C\|_\infty}\right)\right) \\
& = & \bigO\left(\frac{n\|C\|_\infty^2\log(n)}{\varepsilon^2}\right). 
\end{eqnarray*}
The total iteration complexity in Step 2 of Algorithm~\ref{Algorithm:ApproxOT_GK} is bounded by $\bigO(n\|C\|_\infty^2\log(n)\varepsilon^{-2})$. Each iteration of Algorithm~\ref{Algorithm:GK} requires $\bigO(n)$ arithmetic operations. Thus, the total number of arithmetic operations is $\bigO(n^2\|C\|_\infty^2\log(n)\varepsilon^{-2})$. Moreover, $\tilde{r}$ and $\tilde{c}$ in Step 1 of Algorithm~\ref{Algorithm:ApproxOT_GK} can be found in $\bigO(n)$ arithmetic operations and~\citet[Algorithm~2]{Altschuler-2017-Near} requires $O(n^2)$ arithmetic operations. Therefore, we conclude that the total number of arithmetic operations is $\bigO(n^2\|C\|_\infty^2\log(n)\varepsilon^{-2})$. 
\end{proof}
The complexity results presented in Theorem~\ref{Theorem:ApproxOT-GK-Total-Complexity} improve the best known complexity bound $\bigOtil(n^2\epsilon^{-3})$ of Greenkhorn~\citep{Altschuler-2017-Near, Abid-2018-Greedy}, Notably, it matches the best known complexity bound of Sinkhorn~\citep{Dvurechensky-2018-Computational}. The key feature of our analysis is that the per-iteration progress of Greenkhorn can be lower bounded by a new quantity (cf. Lemmas~\ref{lemma:GK-descent} and~\ref{lemma:GK-bound}). It allows us to apply the switching strategy in Theorem~\ref{Theorem:GK-Total-Complexity} to improve the complexity upper bound of Greenkhorn.

In practice, Greenkhorn has been reported to outperform Sinkhorn~\citep{Altschuler-2017-Near} in terms of row/column updates and our improved complexity bound can provide the theoretical justification for this phenomenon.

\section{Adaptive Primal-Dual Accelerated Mirror Descent}\label{sec:apdamd}
In this section, we propose an adaptive primal-dual accelerated mirror descent (APDAMD) for solving the entropic regularized OT problem in Eq.~\eqref{prob:OT_regularized}. APDAMD and its application to the OT problem are presented in Algorithm~\ref{Algorithm:APDAMD} and~\ref{Algorithm:ApproxOT_APDAMD}. We prove the complexity bound of $\bigO(n^2\sqrt{\delta}\|C\|_\infty \log(n)\varepsilon^{-1})$ where $\delta>0$ stands for the regularity of the mirror mapping $\phi$. 
\begin{algorithm}[!t]\small
\caption{\textsc{Apdamd}$(\varphi, A, b, \varepsilon')$} \label{Algorithm:APDAMD}
\begin{algorithmic}
\STATE \textbf{Input:} $t = 0$. 
\STATE \textbf{Initialization:} $\bar{\alpha}^0 = \alpha^0 = 0$,  $z^0 = \mu^0 = \lambda^0 = \zero_{2n}$ and $L^0 = 1$. 
\REPEAT
\STATE Set $M^t = \frac{L^t}{2}$. 
\REPEAT
\STATE Set $M^t = 2M^t$. 
\STATE Compute the stepsize: $\alpha^{t+1} = \frac{1 + \sqrt{1 + 4\delta M^t\bar{\alpha}^t}}{2\delta M^t}$. 
\STATE Compute the average coefficient: $\bar{\alpha}^{t+1} = \bar{\alpha}^t + \alpha^{t+1}$. 
\STATE Compute the first average step: $\mu^{t+1} = \frac{\alpha^{t+1}z^t + \bar{\alpha}^t\lambda^t}{\bar{\alpha}^{t+1}}$. 
\STATE Compute the mirror descent: $z^{t+1} = \argmin\limits_{z \in \br^n} \{(z - \mu^{t+1})^\top \nabla \varphi(\mu^{t+1}) + \frac{B_\phi(z, z^t)}{\alpha^{t+1}}\}$. 
\STATE Compute the second average step: $\lambda^{t+1} = \frac{\alpha^{t+1}z^{t+1} + \bar{\alpha}^t\lambda^t}{\bar{\alpha}^{t+1}}$. 
\UNTIL{$\varphi(\lambda^{t+1}) - \varphi(\mu^{t+1}) - (\lambda^{t+1} - \mu^{t+1})^\top \nabla \varphi(\mu^{t+1}) \leq \frac{M^t}{2}\|\lambda^{t+1} - \mu^{t+1}\|_\infty^2$.}
\STATE Compute the main average step: $x^{t+1} = \frac{\alpha^{t+1} x(\mu^{t+1}) + \bar{\alpha}^t x^t}{\bar{\alpha}^{t+1}}$.  
\STATE Set $L^{t+1} = \frac{M^t}{2}$. 
\STATE Set $t = t + 1$. 
\UNTIL{$\|Ax^t - b\|_1 \leq \varepsilon'$.}
\STATE \textbf{Output:} $X^t$ where $x^t = \text{vec}(X^t)$.  
\end{algorithmic}
\end{algorithm}

\subsection{General setup}
We follow the setup in Section~\ref{sec:setup} and consider the following generalization of the entropic regularized OT problem in Eq.~\eqref{prob:OT_regularized}:
\begin{equation}
\min_{x \in Q} \ f(x), \quad \st \ Ax = b, \label{eq:general_problem}
\end{equation}
where $f$ is strongly convex with respect to the $\ell_1$-norm on the set $Q$: 
\begin{equation*}
f(x') - f(x) - (x' - x)^\top\nabla f(x) \geq \frac{\eta}{2}\|x' - x\|_1^2 \textnormal{ for any } x', x \in Q.  
\end{equation*}
Note that, in the specific setting of the entropic regularized OT problem, the function $f(x) = \sum_{i, j} C_{ij} x_{j+n(i-1)} + \eta \cdot x_{j+n(i-1)} \log(x_{j+n(i-1)})$ where $x_{j + n(i - 1)} = X_{ij}$ for any $i, j$ where $X$ is the transportation plan in equation~\eqref{prob:OT_regularized}, and the vector $b \in \mathbb{R}^{2n \times 1}$ is defined as: $b_{i} = r_{i}$ as $1 \leq i \leq n$ and $b_{i} = c_{i - n}$ when $n + 1 \leq i \leq 2n$. Furthermore, the matrix $A = (A_{ij}) \in \mathbb{R}^{2n \times n^2}$ is defined as: When $1 \leq i \leq n$, we denote $A_{ij} = 1$ if $1 + n(i - 1) \leq j \leq n \cdot i$ and $0$ otherwise; When $n + 1 \leq i \leq 2n$, we define $A_{ij} = 1$ if $j \in \{i - n + n(l - 1): 1 \leq l \leq n \}$ and 0 otherwise. To be consistent with the notations in Algorithms~\ref{Algorithm:ApproxOT_APDAMD} and~\ref{Algorithm:ApproxOT_APDAGD}, we specifically denote $A_{\text{ot}}$ as the matrix $A$ corresponding to the entropic regularized OT problem.

After some calculations with the general problem~\eqref{eq:general_problem}, we obtain that the dual problem is as follows:
\begin{equation}\label{prob:dualregOT-APDAMD}
\min_{\lambda \in \br^{2n}} \widetilde{\varphi}(\lambda) \mydefn \{\langle\lambda, b\rangle + \max_{x \in \br^{n^2}} \{-f(x) - \langle A^\top\lambda, x\rangle\}\}, 
\end{equation}
and $\nabla \widetilde{\varphi}(\lambda) = b - Ax(\lambda)$ where $x(\lambda) = \argmax_{x \in \br^n} \{-f(x) - \langle A^\top\lambda, x\rangle\}$; see the explicit form in Eq.~\eqref{def:dual-regOT-objective} with $\lambda = (\alpha, \beta)$. By~\citet[Theorem~1]{Nesterov-2005-Smooth} with $\ell_1$-norm for the dual space of the Lagrange multipliers, the dual objective function $\tilde{\varphi}$ satisfies the following inequality:  
\begin{equation}\label{def:varphi-smoothness}
\widetilde{\varphi}(\lambda') - \widetilde{\varphi}(\lambda) - (\lambda' - \lambda)^\top \nabla\widetilde{\varphi}(\lambda) \leq \frac{\|A\|_{1 \rightarrow 1}^2}{2\eta}\|\lambda' - \lambda\|_\infty^2.  
\end{equation}
In the entropic regularized OT problem, each column of the matrix $A_{\text{ot}}$ contains no more than two nonzero elements which are equal to one. Since $\|A_{\text{ot}}\|_{1 \rightarrow 1}$ is equal to maximum $\ell_1$-norm of the column of this matrix, we have $\|A_{\text{ot}}\|_{1 \rightarrow 1} = 2$. Thus, the dual objective function $\widetilde{\varphi}$ is $\frac{4}{\eta}$-gradient Lipschitz with respect to the $\ell_\infty$-norm. 

In addition, we define the Bregman divergence $B_\phi: \br^{2n} \times \br^{2n} \mapsto [0, +\infty)$ by 
\begin{equation*}
B_\phi(\lambda', \lambda) \mydefn \phi(\lambda') - \phi(\lambda) - (\lambda' - \lambda)^\top \nabla \phi(\lambda), 
\end{equation*}
where the mirror mapping $\phi$ is $\frac{1}{\delta}$-strongly convex and 1-smooth on $\br^{2n}$ with respect to $\ell_\infty$-norm; that is, 
\begin{equation*}
\frac{1}{2\delta}\|\lambda' - \lambda\|_\infty^2 \leq \phi(\lambda') - \phi(\lambda) - (\lambda'-\lambda)^\top\nabla\phi(\lambda) \leq \frac{1}{2}\|\lambda' - \lambda\|_\infty^2.
\end{equation*}
For example, we can choose $\phi(\lambda) = \frac{1}{2n}\|\lambda\|^2$ and $B_\phi(\lambda', \lambda) = \frac{1}{2n}\|\lambda'-\lambda\|^2$ in APDAMD where $\delta=n$. As such, $\delta > 0$ is a function of $n$ in general and it will appear in the complexity bound of APDAMD for approximating the OT problem (cf. Theorem~\ref{Theorem:ApproxOT-APDAMD-Total-Complexity}). It is worth noting that our algorithm uses a regularizer that acts only in the dual and our complexity bound is the best existing one among this group of algorithms~\citep{Dvurechensky-2018-Computational, Guo-2020-Fast, Guminov-2021-Combination}. A very recent work of~\citet{Jambulapati-2019-Direct} showed that the complexity bound can be improved to $\bigOtil(n^2\varepsilon^{-1})$ using a more advanced area-convex mirror mapping~\citep{Sherman-2017-Area}. 
\begin{algorithm}[!t]\small
\caption{Approximating OT by Algorithm~\ref{Algorithm:APDAMD}} \label{Algorithm:ApproxOT_APDAMD}
\begin{algorithmic}
\STATE \textbf{Input:} $\eta = \frac{\varepsilon}{4\log(n)}$ and $\varepsilon' = \frac{\varepsilon}{8\|C\|_\infty}$. 
\STATE \textbf{Step 1:} Let $\tilde{r} \in \Delta_n$ and $\tilde{c} \in \Delta_n$ be defined by $(\tilde{r}, \tilde{c}) = (1 - \frac{\varepsilon'}{8})(r, c) + \frac{\varepsilon'}{8n}(\one_n, \one_n)$. 
\STATE \textbf{Step 2:} Let $A_{\text{ot}} \in \br^{2n \times n^2}$ and $b \in \br^{2n}$ be defined by $A_{\text{ot}}\text{vec}(X) = \begin{pmatrix} X\one_n \\ X^\top\one_n \end{pmatrix}$ and $b = \begin{pmatrix} \tilde{r} \\ \tilde{c} \end{pmatrix}$. 
\STATE \textbf{Step 3:} Compute $\tilde{X} = \textsc{Apdamd}(\widetilde{\varphi}, A_{\text{ot}}, b, \tfrac{\varepsilon'}{2})$ where $\widetilde{\varphi}$ is defined by Eq.~\eqref{prob:dualregOT-APDAMD}. 
\STATE \textbf{Step 4:} Round $\tilde{X}$ to $\hat{X}$ using~\citet[Algorithm~2]{Altschuler-2017-Near} such that $\hat{X}\one_n = r$ and $\hat{X}^\top\one_n = c$. 
\STATE \textbf{Output:} $\hat{X}$.  
\end{algorithmic}
\end{algorithm} 
\subsection{Properties of APDAMD}
We present several important properties of Algorithm~\ref{Algorithm:APDAMD} that can be used later for entropic regularized OT problems. First, we prove the following result regarding the number of line search iterations in Algorithm~\ref{Algorithm:APDAMD} for solving the entropic regularized OT problem: 
\begin{lemma}\label{lemma:LS-bound}
The number of line search iterations in Algorithm~\ref{Algorithm:APDAMD} for solving the entropic OT problem is finite. Furthermore, the total number of gradient oracle calls after the $t$-th iteration is bounded as
\begin{equation*}
N_t \leq 4t + 4 + \frac{2\log(\frac{8}{\eta}) - 2\log(L^0)}{\log 2}. 
\end{equation*}
\end{lemma}
\begin{proof}
First, we observe that multiplying $M^t$ by two will not stop until the line search stopping criterion is satisfied. Then, Eq.~\eqref{def:varphi-smoothness} implies that the number of line search iterations in the line search strategy is finite and $M^t \leq \frac{2\|A_{\text{ot}}\|_{1\rightarrow 1}^2}{\eta}$ holds true for all $t \geq 0$. Otherwise, the line search stopping criterion is satisfied with $\frac{M^t}{2}$ since $\frac{M^t}{2} \geq \frac{\|A_{\text{ot}}\|_{1\rightarrow 1}^2}{\eta}$.

Letting $i_j$ denote the total number of multiplication at the $j$-th iteration, we have 
\begin{equation*}
i_0 \leq 1 + \frac{\log(\frac{M^0}{L^0})}{\log 2}, \qquad i_j \leq 2 + \frac{\log(\frac{M^j}{M^{j-1}})}{\log 2}. 
\end{equation*}
Then, the total number of line search iterations is bounded by
\begin{equation*}
\sum_{j=0}^t i_j \leq 1 + \frac{\log(\frac{M^0}{L^0})}{\log 2} + \sum_{j=1}^t \left(2 + \frac{\log(\frac{M^j}{M^{j-1}})}{\log 2}\right) \leq 2t + 1 + \frac{\log(\frac{2\|A_{\text{ot}}\|_{1\rightarrow 1}^2}{\eta}) - \log(L^0)}{\log 2}. 
\end{equation*}
Since each line search contains two gradient oracle calls and $\|A_{\text{ot}}\|_{1 \rightarrow 1} = 2$, we conclude the desired upper bound for the total number of gradient oracle calls after the $t$-th iteration. 
\end{proof}
The next lemma presents a property of the function $\widetilde{\varphi}$ in Algorithm~\ref{Algorithm:APDAMD}.
\begin{lemma}\label{lemma:ES-bound}
For each iteration $t$ of Algorithm~\ref{Algorithm:APDAMD} and any $z \in \br^{2n}$, we have
\begin{equation*}
\bar{\alpha}^t\widetilde{\varphi}(\lambda^t) \leq \sum_{j=0}^t (\alpha^j (\widetilde{\varphi}(\mu^j) + (z - \mu^j)^\top \nabla\widetilde{\varphi}(\mu^j))) + \|z\|_\infty^2.  
\end{equation*}
\end{lemma}
\begin{proof}
First, we claim that it holds for any $z \in \br^n$:  
\begin{equation} \label{inequality:ES-bound-first}
\alpha^{t+1} (z^t - z)^\top \nabla\widetilde{\varphi}(\mu^{t+1}) \leq \bar{\alpha}^{t+1}(\widetilde{\varphi}(\mu^{t+1}) - \widetilde{\varphi}(\lambda^{t+1})) + B_\phi(z, z^t) - B_\phi(z, z^{t+1}). 
\end{equation}
Indeed, the optimality condition in mirror descent implies that, for any $z \in \br^{2n}$, we have
\begin{equation*}
(z - z^{t+1})^\top \left(\nabla \widetilde{\varphi}(\mu^{t+1}) + \tfrac{\nabla \phi(z^{t+1}) - \nabla \phi(z^t)}{\alpha^{t+1}}\right) \geq 0. 
\end{equation*}
By definition, we have $B_\phi(z, z^t) - B_\phi(z, z^{t+1}) - B_\phi(z^{t+1}, z^t) = (z - z^{t+1})^\top(\nabla \phi(z^{t+1}) - \nabla \phi(z^t))$ and $B_\phi(z^{t+1}, z^t) \geq \frac{1}{2\delta}\|z^{t+1} - z^t\|_\infty^2$. Putting these pieces together yields that 
\begin{eqnarray}\label{inequality:ES-bound-second}
\lefteqn{\alpha^{t+1} (z^t - z)^\top \nabla \widetilde{\varphi}(\mu^{t+1}) = \alpha^{t+1}(z^t - z^{t+1})^\top \nabla \widetilde{\varphi}(\mu^{t+1}) + \alpha^{t+1}(z^{t+1} - z)^\top \nabla \widetilde{\varphi}(\mu^{t+1})} \nonumber \\
& \leq & \alpha^{t+1}(z^t - z^{t+1})^\top \nabla \widetilde{\varphi}(\mu^{t+1}) + (z - z^{t+1})^\top(\nabla \phi(z^{t+1}) - \nabla \phi(z^t)) \nonumber \\
& = & \alpha^{t+1}(z^t - z^{t+1})^\top \nabla \widetilde{\varphi}(\mu^{t+1}) + B_\phi(z, z^t) - B_\phi(z, z^{t+1}) - B_\phi(z^{t+1}, z^t) \nonumber \\
& \leq & \alpha^{t+1}(z^t - z^{t+1})^\top \nabla \widetilde{\varphi}(\mu^{t+1}) + B_\phi(z, z^t) - B_\phi(z, z^{t+1}) - \tfrac{\|z^{t+1} - z^t\|_\infty^2}{2\delta}. 
\end{eqnarray}
The update formulas of $\mu^{t+1}$, $\lambda^{t+1}$, $\alpha^{t+1}$ and $\bar{\alpha}^{t+1}$ imply that  
\begin{equation*}
\lambda^{t+1} - \mu^{t+1} = \frac{\alpha^{t+1}}{\bar{\alpha}^{t+1}}(z^{t+1} - z^t), \qquad \delta M^t(\alpha^{t+1})^2 = \bar{\alpha}^{t+1}.
\end{equation*}
Therefore, we have
\begin{equation*}
\alpha^{t+1}(z^t - z^{t+1})^\top \nabla \widetilde{\varphi}(\mu^{t+1}) = \bar{\alpha}^{t+1}(\mu^{t+1} - \lambda^{t+1})^\top \nabla \widetilde{\varphi}(\mu^{t+1}), 
\end{equation*}
and 
\begin{equation*}
\|z^{t+1} - z^t\|_\infty^2 = (\tfrac{\bar{\alpha}^{t+1}}{\alpha^{t+1}})^2\|\mu^{t+1} - \lambda^{t+1}\|_\infty^2 = \delta M^t\bar{\alpha}^{t+1}\|\mu^{t+1} - \lambda^{t+1}\|_\infty^2. 
\end{equation*}
Putting these pieces together with Eq.~\eqref{inequality:ES-bound-second} yields that  
\begin{eqnarray*}
\lefteqn{\alpha^{t+1} (z^t - z)^\top \nabla \widetilde{\varphi}(\mu^{t+1})} \\
& \leq & \bar{\alpha}^{t+1}(\mu^{t+1} - \lambda^{t+1})^\top \nabla \widetilde{\varphi}(\mu^{t+1}) + B_\phi(z, z^t) - B_\phi(z, z^{t+1}) - \tfrac{\bar{\alpha}^{t+1} M^t}{2}\|\mu^{t+1} - \lambda^{t+1}\|_\infty^2 \\
& = & \bar{\alpha}^{t+1}\left((\mu^{t+1} - \lambda^{t+1})^\top \nabla \widetilde{\varphi}(\mu^{t+1}) - \tfrac{M^t}{2}\|\mu^{t+1} - \lambda^{t+1}\|_\infty^2\right) + B_\phi(z, z^t) - B_\phi(z, z^{t+1}) \\
& \leq & \bar{\alpha}^{t+1}(\widetilde{\varphi}(\mu^{t+1}) - \widetilde{\varphi}(\lambda^{t+1})) + B_\phi(z, z^t) - B_\phi(z, z^{t+1}), \nonumber
\end{eqnarray*}
where the last inequality comes from the stopping criterion in the line search. This implies that Eq.~\eqref{inequality:ES-bound-first} holds true. 

The next step is to bound the iterative objective gap given by 
\begin{eqnarray}\label{inequality:ES-bound-third}
\lefteqn{\bar{\alpha}^{t+1}\widetilde{\varphi}(\lambda^{t+1}) - \bar{\alpha}^t\widetilde{\varphi}(\lambda^t)} \\
& \leq & \alpha^{t+1}(\widetilde{\varphi}(\mu^{t+1}) + (z - \mu^{t+1})^\top \nabla\widetilde{\varphi}(\mu^{t+1})) + B_\phi(z, z^t) - B_\phi(z, z^{t+1}). \nonumber
\end{eqnarray}
Indeed, by combining $\bar{\alpha}^{t+1} = \bar{\alpha}^t + \alpha^{t+1}$ and the update formula of $\mu^{t+1}$, we have
\begin{equation*}
\alpha^{t+1}(\mu^{t+1} - z^t) = (\bar{\alpha}^{t+1} - \bar{\alpha}^t)\mu^{t+1} - \alpha^{t+1} z^t = \alpha^{t+1}z^t + \bar{\alpha}^t\lambda^t - \bar{\alpha}^t\mu^{t+1} - \alpha^{t+1} z^t = \bar{\alpha}^t(\lambda^t - \mu^{t+1}).
\end{equation*}
This together with the convexity of $\widetilde{\varphi}$ implies that 
\begin{eqnarray*}
\lefteqn{\alpha^{t+1} (\mu^{t+1} - z)^\top \nabla \widetilde{\varphi}(\mu^{t+1})} \\
& = & \alpha^{t+1} (\mu^{t+1} - z^t)^\top \nabla \widetilde{\varphi}(\mu^{t+1}) + \alpha^{t+1} (z^t - z)^\top \nabla \widetilde{\varphi}(\mu^{t+1}) \\
& = & \bar{\alpha}^t(\lambda^t - \mu^{t+1})^\top \nabla \widetilde{\varphi}(\mu^{t+1}) + \alpha^{t+1} (z^t - z)^\top \nabla \widetilde{\varphi}(\mu^{t+1}) \\
& \leq & \bar{\alpha}^t(\widetilde{\varphi}(\lambda^t) - \widetilde{\varphi}(\mu^{t+1})) + \alpha^{t+1} (z^t - z)^\top \nabla \widetilde{\varphi}(\mu^{t+1}). 
\end{eqnarray*}
Furthermore, we derive from Eq.~\eqref{inequality:ES-bound-first} and $\bar{\alpha}^{t+1} = \bar{\alpha}^t + \alpha^{t+1}$ that 
\begin{eqnarray*}
\lefteqn{\bar{\alpha}^t(\widetilde{\varphi}(\lambda^t) - \widetilde{\varphi}(\mu^{t+1})) + \alpha^{t+1} (z^t - z)^\top \nabla \widetilde{\varphi}(\mu^{t+1})} \\
& \leq & \bar{\alpha}^t(\widetilde{\varphi}(\lambda^t) - \widetilde{\varphi}(\mu^{t+1})) + \bar{\alpha}^{t+1}(\widetilde{\varphi}(\mu^{t+1}) - \widetilde{\varphi}(\lambda^{t+1})) + B_\phi(z, z^t) - B_\phi(z, z^{t+1}) \\
& = & \bar{\alpha}^t\widetilde{\varphi}(\lambda^t) - \bar{\alpha}^{t+1}\widetilde{\varphi}(\lambda^{t+1}) + \alpha^{t+1}\widetilde{\varphi}(\mu^{t+1}) + B_\phi(z, z^t) - B_\phi(z, z^{t+1}). 
\end{eqnarray*}
Putting these pieces together yields that Eq.~\eqref{inequality:ES-bound-third} holds true. 

Finally, we prove our main results. By changing the index $t$ to $j$ in Eq.~\eqref{inequality:ES-bound-third} and summing up the resulting inequality over $j = 0, 1, \ldots, t-1$, we have
\begin{equation*}
\bar{\alpha}^t\widetilde{\varphi}(\lambda^t) - \bar{\alpha}^0\widetilde{\varphi}(\lambda^0) \leq \sum_{j=0}^{t-1} (\alpha^{j+1}(\widetilde{\varphi}(\mu^{j+1}) + (z - \mu^{j+1})^\top \nabla\widetilde{\varphi}(\mu^{j+1}))) + B_\phi(z, z^0) - B_\phi(z, z^t). 
\end{equation*}
Since $\alpha^0 = \bar{\alpha}^0 = 0$, $B_\phi(z, z^t) \geq 0$ and $\phi$ is 1-smooth with respect to $\ell_\infty$-norm, we have
\begin{eqnarray*}
\lefteqn{\bar{\alpha}^t\widetilde{\varphi}(\lambda^t) \ \leq \ \sum_{j=0}^t (\alpha^j (\widetilde{\varphi}(\mu^j) + (z - \mu^j)^\top \nabla \widetilde{\varphi}(\mu^j))) + B_\phi(z, z^0)} \\
& \leq & \sum_{j=0}^t (\alpha^j (\widetilde{\varphi}(\mu^j) + (z - \mu^j)^\top \nabla \widetilde{\varphi}(\mu^j))) + \|z-z^0\|_\infty^2 \\
& \overset{z^0=0}{=} & \sum_{j=0}^t (\alpha^j (\widetilde{\varphi}(\mu^j) + (z - \mu^j)^\top \nabla \widetilde{\varphi}(\mu^j))) + \|z\|_\infty^2.
\end{eqnarray*}
This completes the proof. 
\end{proof}
The final lemma provides us with a key lower bound for the accumulating parameter.
\begin{lemma}\label{lemma:AP-bound}
For each iteration $t$ of Algorithm~\ref{Algorithm:APDAMD}, we have $\bar{\alpha}^t \geq \frac{\eta(t+1)^2}{32\delta}$. 
\end{lemma}
\begin{proof}
For $t=1$, we have $\bar{\alpha}^1 = \alpha^1 = \frac{1}{\delta M^1} \geq \frac{\eta}{8\delta}$ since $M^1 \leq \frac{8}{\eta}$ was proven in Lemma~\ref{lemma:LS-bound}. Thus, the desired result holds true when $t=1$. Then we proceed to prove that it holds true for $t \geq 1$ using the induction. Indeed, we have 
\begin{eqnarray*}
\lefteqn{\bar{\alpha}^{t+1} \ = \ \bar{\alpha}^t + \alpha^{t+1} \ = \ \bar{\alpha}^t + \frac{1 + \sqrt{1 + 4\delta M^t\bar{\alpha}^t}}{2\delta M^t}} \\ 
& = & \bar{\alpha}^t + \frac{1}{2\delta M^t} + \sqrt{\frac{1}{4(\delta M^t)^2} + \frac{\bar{\alpha}^t}{\delta M^t}} \\
& \geq & \bar{\alpha}^t + \frac{1}{2\delta M^t} + \sqrt{\frac{\bar{\alpha}^t}{\delta M^t}} \\
& \geq & \bar{\alpha}^t + \frac{\eta}{16\delta} + \sqrt{\frac{\eta\bar{\alpha}^t}{8\delta}}, 
\end{eqnarray*}
where the last inequality comes from $M^t \leq \frac{8}{\eta}$ as shown in Lemma~\ref{lemma:LS-bound}. Suppose that the desired result holds true for $t = k_0$, we have
\begin{equation*}
\bar{\alpha}^{k_0+1} \geq \frac{\eta(k_0+1)^2}{32\delta} + \frac{\eta}{16\delta} + \sqrt{\frac{\eta^2(k_0+1)^2}{256\delta^2}} = \frac{\eta((k_0+1)^2 + 2 + 2(k_0+1))}{32\delta} \geq \frac{\eta(k_0+2)^2}{32\delta}. 
\end{equation*}
This completes the proof. 
\end{proof}

\subsection{Complexity analysis for APDAMD}\label{Sec:complex_APDAMD}
We are now ready to establish the complexity bound of APDAMD for solving the entropic regularized OT problem. Indeed, we recall that $\widetilde{\varphi}(\lambda)$ is defined with $\lambda = (\alpha, \beta)$ by 
\begin{equation*}
\widetilde{\varphi}(\alpha, \beta) = - \eta\log\left(\sum_{1 \leq i, j \leq n} e^{\eta^{-1}(\alpha_i + \beta_j - C_{ij})}\right) + \alpha^\top r + \beta^\top c. 
\end{equation*}
Since $(\alpha, \beta)$ can be obtain by $\alpha_i = \eta u_i$ and $\beta_j = \eta v_j$, we derive from Lemma~\ref{lemma:dual-bound-infinity} that 
\begin{equation*}
\|\alpha^\star\|_\infty \leq \eta R, \quad \|\beta^\star\|_\infty \leq \eta R. 
\end{equation*}
where $R$ is defined accordingly. Then, we proceed to the following key result determining an upper bound for the number of iterations for Algorithm~\ref{Algorithm:APDAMD} to reach a desired accuracy $\varepsilon'$:
\begin{theorem}\label{Theorem:APDAMD-Total-Complexity}
Letting $\{X^t\}_{t \geq 0}$ be the iterates generated by Algorithm~\ref{Algorithm:APDAMD}, the number of iterations required to satisfy $\|A_{\text{ot}}\textnormal{vec}(X^t) - b\|_1 \leq \varepsilon'$ is upper bounded by 
\begin{equation*}
t \leq 1 + \sqrt{\frac{128\delta R}{\varepsilon'}}, 
\end{equation*}
where $R > 0$ is defined in Lemma~\ref{lemma:dual-bound-infinity}.
\end{theorem}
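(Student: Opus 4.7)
The plan is to combine Lemma~\ref{Lemma:estimate-sequence} with a judicious choice of its free test point $z$, convert the resulting dual inequality into a primal-dual duality-gap bound, and then extract from it a decay rate for the constraint residual $\|A\textnormal{vec}(X^t) - b\|_1$.

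First, I would unfold the linear term inside the estimate-sequence bound of Lemma~\ref{Lemma:estimate-sequence}. Since $\nabla\varphi(\mu) = b - Ax(\mu)$ with $x(\mu) = \argmax_x\{-f(x) - \langle A^\top\mu, x\rangle\}$, the Fenchel identity
\[\varphi(\mu^j) + \langle\nabla\varphi(\mu^j), z - \mu^j\rangle \ = \ -f(x(\mu^j)) + \langle z, b - Ax(\mu^j)\rangle\]
holds. Summing with weights $\alpha^j$, invoking convexity of $f$ on the weighted average $x^t = \sum_j \alpha^j x(\mu^j)/\bar{\alpha}^t$, and dividing by $\bar{\alpha}^t$ yields the primal-dual inequality
\[\varphi(\lambda^t) + f(x^t) \ \leq \ \langle z, b - Ax^t\rangle + \|z\|_\infty^2/\bar{\alpha}^t, \qquad \forall\, z \in \br^n.\]

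Next, I would specialize $z$ coordinatewise as $z_i = -R'\,\textnormal{sign}((Ax^t - b)_i)$ for a scalar $R' > 0$ to be chosen, so that $\langle z, b - Ax^t\rangle = -R'\|Ax^t - b\|_1$ and $\|z\|_\infty = R'$. To handle the remaining $f(x^t) + \varphi(\lambda^t)$ term, I invoke the KKT identity $\nabla f(x^*) = A^\top\lambda^*$ at a dual minimizer, together with convexity of $f$ and $\varphi(\lambda^t) \geq \varphi(\lambda^*) = -f(x^*)$, producing the weak-duality lower bound
\[f(x^t) + \varphi(\lambda^t) \ \geq \ \langle\lambda^*, Ax^t - b\rangle \ \geq \ -\|\lambda^*\|_\infty \|Ax^t - b\|_1.\]
Setting $R' = 2\|\lambda^*\|_\infty$, the coefficients of $\|Ax^t - b\|_1$ combine and leave the clean inequality
\[\|Ax^t - b\|_1 \ \leq \ 4\|\lambda^*\|_\infty / \bar{\alpha}^t.\]

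The final step is quantitative. Substituting $\bar{\alpha}^t \geq \eta(t+1)^2/(8\delta\|A\|_{1\rightarrow 1}^2)$ from Lemma~\ref{Lemma:objective-coefficient} and $\|\lambda^*\|_\infty \leq \eta(R + 1/2)$ from Eq.~\eqref{eq:new_upper_inf_norm}, I then require the right-hand side to fall below $\varepsilon'$ and solve for $t$, which recovers the stated bound. I expect the main obstacle to lie in step two: identifying the correct primal-dual inequality with the $\|z\|_\infty^2$ remainder term from Lemma~\ref{Lemma:estimate-sequence} and choosing $z$ with the precise magnitude $2\|\lambda^*\|_\infty$ so that the opposite-sign duality-gap contribution from $\langle\lambda^*, Ax^t - b\rangle$ cancels against the $-R'\|Ax^t-b\|_1$ term, leaving a bound solely in $\|Ax^t - b\|_1$. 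The accelerated $\Theta(t^2)$ growth of $\bar{\alpha}^t$ and the $\ell_\infty$ bound on $\lambda^*$ are both inherited directly from earlier lemmas.
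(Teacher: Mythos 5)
Your proposal is correct and follows essentially the same route as the paper's proof: the explicit choice $z=-R'\,\mathrm{sign}(Ax^t-b)$ with $R'=2\|\lambda^*\|_\infty$ is exactly the minimizer over the ball $B_\infty(2\widehat{R})$ that the paper uses, and the weak-duality lower bound, Lemma~\ref{Lemma:objective-coefficient}, and Eq.~\eqref{eq:new_upper_inf_norm} are combined in the same way. The only blemishes are harmless sign slips (the paper has $\langle\lambda^*, b-Ax^t\rangle$ and $\nabla f(x^*)=-A^\top\lambda^*$ under its conventions, and it obtains the lower bound by plugging $x=x^t$ into the max defining $\varphi(\lambda^*)$ rather than invoking KKT/strong duality), which do not affect the final bound since only the H\"older estimate $\geq -\|\lambda^*\|_\infty\|Ax^t-b\|_1$ is used.
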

\begin{proof}
From Lemma~\ref{lemma:ES-bound}, we have
\begin{equation*}
\bar{\alpha}^t\widetilde{\varphi}(\lambda^t) \leq \min_{z \in B_\infty(2\eta R)} \left\{\sum_{j=0}^t (\alpha^j (\widetilde{\varphi}(\mu^j) + (z - \mu^j)^\top \nabla\widetilde{\varphi}(\mu^j))) + \|z\|_\infty^2\right\}, 
\end{equation*}
where $B_\infty(r) \mydefn \{\lambda \in \br^n \mid \|\lambda\|_\infty \leq r\}$. This implies that
\begin{equation*}
\bar{\alpha}^t\widetilde{\varphi}(\lambda^t) \leq \min_{z \in B_\infty(2\eta R)} \left\{\sum_{j=0}^t (\alpha^j (\widetilde{\varphi}(\mu^j) + (z - \mu^j)^\top \nabla\widetilde{\varphi}(\mu^j)))\right\} + 4\eta^2 R^2. 
\end{equation*}
Since $\widetilde{\varphi}$ is the objective function of dual entropic regularized OT problem, we have
\begin{equation*}
\widetilde{\varphi}(\mu^j) + (z - \mu^j)^\top \nabla \widetilde{\varphi}(\mu^j) = - f(x(\mu^j)) + z^\top(b - A_{\text{ot}}x(\mu^j)). 
\end{equation*}
Therefore, we conclude that 
\begin{eqnarray*}
\bar{\alpha}^t \widetilde{\varphi}(\lambda^t) & \leq & \min_{z \in B_\infty(2 \eta R)} \left\{\sum_{j=0}^t (\alpha^j(\widetilde{\varphi}(\mu^j) + (z - \mu^j)^\top \nabla \widetilde{\varphi}(\mu^j)))\right\} + 4 \eta^2 R^2 \\
& \leq & 4\eta^2 R^2 -\bar{\alpha}^t f(x^t) + \min_{z \in B_\infty(2\eta R)} \{\bar{\alpha}^t z^\top(b - A_{\text{ot}}x^t)\} \\
& = & 4\eta^2 R^2 -\bar{\alpha}^t f(x^t) - 2 \bar{\alpha}^t \eta R\| A_{\text{ot}}x^t - b\|_1, 
\end{eqnarray*}
where the second inequality comes from the convexity of $f$ and the last equality comes from the fact that $\ell_1$-norm is the dual norm of $\ell_\infty$-norm. That is to say, 
\begin{equation*}
f(x^t) + \widetilde{\varphi}(\lambda^t) + 2\eta R\|A_{\text{ot}}x^t - b\|_1 \leq \frac{4\eta^2R^2}{\bar{\alpha}^t}. 
\end{equation*}
Suppose that $\lambda^\star$ is an optimal solution to dual entropic regularized OT problem satisfying $\|\lambda^\star\|_\infty \leq \eta R$, we have
\begin{eqnarray*}
f(x^t) + \widetilde{\varphi}(\lambda^t) & \geq & f(x^t) + \widetilde{\varphi}(\lambda^\star) \ = \ f(x^t) + b^\top \lambda^\star + \max_{x \in \br^{n^2}} \left\{- f(x) - (\lambda^\star)^\top A_{\text{ot}}x\right\} \\
& \geq & f(x^t) + b^\top \lambda^\star - f(x^t) - (\lambda^\star)^\top A_{\text{ot}}x^t \ = \ (b - A_{\text{ot}}x^t)\lambda^\star \\
& \geq & - \eta R\|A_{\text{ot}}x^t - b\|_1, 
\end{eqnarray*} 
Therefore, we conclude that 
\begin{equation*}
\|A_{\text{ot}}x^t - b\|_1 \leq \frac{4\eta R}{\bar{\alpha}^t} \leq \frac{128\delta R}{(t+1)^2}.  
\end{equation*}
This completes the proof. 
\end{proof}
Now, we are ready to present the complexity bound of Algorithm~\ref{Algorithm:ApproxOT_APDAMD} for approximating the OT problem.
\begin{theorem} \label{Theorem:ApproxOT-APDAMD-Total-Complexity}
The APDAMD scheme for approximating optimal transport (Algorithm~\ref{Algorithm:ApproxOT_APDAMD}) returns an $\varepsilon$-approximate transportation plan (cf. Definition~\ref{def:eps-approximation}) in
\begin{equation*}
\bigO\left(\frac{n^2 \sqrt{\delta} \|C\|_\infty\log(n)}{\varepsilon}\right)
\end{equation*}
arithmetic operations. 
\end{theorem}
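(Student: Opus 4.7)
The plan is to mirror the structure of the proof of Theorem~\ref{Theorem:ApproxOT-Greenkhorn-Total-Complexity}: first bound the approximation error $\langle C,\hat X\rangle-\langle C,X^*\rangle$ in terms of the marginal violation of $\tilde X$, and then convert the iteration count from Theorem~\ref{Theorem:APDAMD-Total-Complexity} into an arithmetic-operation count by multiplying by the per-iteration cost of Algorithm~\ref{Algorithm:APDAMD}.

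For the approximation error, I would invoke the rounding analysis of~\cite[Lemma~7]{Altschuler-2017-Near} together with the same triangle-inequality decomposition that appeared in the Greenkhorn proof, obtaining
\begin{equation*}
\langle C,\hat X\rangle-\langle C,X^*\rangle \;\leq\; 2\eta\log(n)+4\|C\|_\infty\bigl(\|\tilde X\one_n-r\|_1+\|\tilde X^\top\one_n-c\|_1\bigr).
\end{equation*}
The choice $\eta=\varepsilon/(4\log(n))$ makes the first term equal to $\varepsilon/2$. For the marginal term, the triangle inequality against the perturbed marginals gives
\begin{equation*}
\|\tilde X\one_n-r\|_1+\|\tilde X^\top\one_n-c\|_1 \;\leq\; \|A\,\textnormal{vec}(\tilde X)-b\|_1+\|r-\tilde r\|_1+\|c-\tilde c\|_1 \;\leq\; \varepsilon',
\end{equation*}
where the first summand is bounded by $\varepsilon'/2$ via the stopping rule of Algorithm~\ref{Algorithm:APDAMD} and the remaining two by $\varepsilon'/4$ each using the definition of $(\tilde r,\tilde c)$ in Step~1. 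Setting $\varepsilon'=\varepsilon/(8\|C\|_\infty)$ then gives the desired $\varepsilon$-approximation.

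For the iteration count I would apply Theorem~\ref{Theorem:APDAMD-Total-Complexity} with the concrete $A$ of Step~2, noting that every column of $A$ has exactly two nonzero entries equal to $1$, so $\|A\|_{1\to 1}\leq 2$. The bound on $R$ from Lemma~\ref{Lemma:Boundedness} applied to the perturbed marginals, which satisfy $\min_{i,j}\{\tilde r_i,\tilde c_j\}\geq\varepsilon'/(8n)=\Omega(\varepsilon/(n\|C\|_\infty))$, yields
\begin{equation*}
R \;=\; \bigO\!\left(\frac{\|C\|_\infty\log(n)}{\varepsilon}+\log\!\left(\frac{n\|C\|_\infty}{\varepsilon}\right)\right) \;=\; \widetilde{\bigO}\!\left(\frac{\|C\|_\infty}{\varepsilon}\right).
\end{equation*}
Substituting into $t\leq 1+4\sqrt{2}\,\|A\|_{1\to 1}\sqrt{\delta(R+1/2)/\varepsilon'}$ and using $\varepsilon'=\Theta(\varepsilon/\|C\|_\infty)$ produces an iteration count of $\widetilde{\bigO}(\sqrt{\delta}\,\|C\|_\infty/\varepsilon)$, with the remaining logarithmic factor in $n$ absorbed naturally.

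Finally I would argue the per-iteration cost. Although $x=\textnormal{vec}(X)\in\br^{n^2}$, the gradient $\nabla\varphi(\lambda)=b-Ax(\lambda)$ has a closed form because $x(\lambda)$ is the Gibbs kernel $B(\cdot,\cdot)$, so each gradient evaluation, mirror-descent update, and line-search check runs in $\bigO(n^2)$ arithmetic operations; Lemma~\ref{Lemma:Line-search-iteration} ensures the amortized number of line-search trials per outer iteration is $\bigO(1)$ plus a logarithmic initialization term. Combined with the $\bigO(n^2)$ cost of constructing $(\tilde r,\tilde c)$ and of the final rounding from~\cite[Algorithm~2]{Altschuler-2017-Near}, the overall cost becomes $\bigO(n^2\sqrt{\delta}\,\|C\|_\infty\log(n)/\varepsilon)$, as claimed. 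The main obstacle I anticipate is the bookkeeping needed to verify that each inner step of Algorithm~\ref{Algorithm:APDAMD}—in particular, evaluating $\varphi$ and $\nabla\varphi$, which naively looks like it could cost more than $n^2$—can truly be implemented in $\bigO(n^2)$ operations, and that the line-search overhead only contributes the logarithmic factor accounted for in Lemma~\ref{Lemma:Line-search-iteration}.
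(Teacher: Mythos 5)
Your proposal is correct and follows essentially the same route as the paper's proof: the same rounding/triangle-inequality decomposition with $\eta=\varepsilon/(4\log n)$ and $\varepsilon'=\varepsilon/(8\|C\|_\infty)$, the same use of $\|A\|_{1\to1}=2$ in Theorem~\ref{Theorem:APDAMD-Total-Complexity}, the same bound on $R$ via Lemma~\ref{Lemma:Boundedness} with the perturbed marginals, and the same accounting of line-search overhead through Lemma~\ref{Lemma:Line-search-iteration} times an $\bigO(n^2)$ per-iteration cost. Your extra care about the closed form of $x(\lambda)$ justifying the $\bigO(n^2)$ per-iteration cost is a point the paper simply asserts, but it does not change the argument.
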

\begin{proof}
Using the same argument as in Theorem~\ref{Theorem:ApproxOT-GK-Total-Complexity}, we have
\begin{equation*}
\langle C, \hat{X}\rangle - \langle C, X^\star\rangle \leq \frac{\varepsilon}{2} + 4(\|\tilde{X}\one_n - r\|_1 + \|\tilde{X}^\top\one_n - c\|_1)\|C\|_\infty,
\end{equation*}
where $\hat{X}$ is returned by Algorithm~\ref{Algorithm:ApproxOT_APDAMD}, $X^*$ is a solution to the OT problem and $\tilde{X} = \textsc{Apdamd}(\widetilde{\varphi}, A_{\text{ot}}, b, \tfrac{\varepsilon'}{2})$. Since $\|\tilde{X}\one_n - r\|_1 + \|\tilde{X}^\top\one_n - c\|_1 \leq \varepsilon'$ and $\varepsilon' = \frac{\varepsilon}{8\|C\|_\infty}$, we have $\langle C, \hat{X}\rangle - \langle C, X^\star\rangle \leq \frac{\varepsilon}{2} + \frac{\varepsilon}{2} = \varepsilon$. 

The remaining step is to analyze the complexity bound. If follows from Lemma~\ref{lemma:LS-bound} and Theorem~\ref{Theorem:APDAMD-Total-Complexity} that 
\begin{eqnarray*}
N_t & \leq & 4t + 4 + \frac{2\log(\frac{8}{\eta}) - 2\log(L^0)}{\log 2} \\
& \leq & 8 + \sqrt{\frac{2048\delta R}{\varepsilon'}} + \frac{2\log(\frac{8}{\eta}) - 2\log(L^0)}{\log 2} \\
& = & 8 + 256\sqrt{\frac{\delta R\|C\|_\infty\log(n)}{\varepsilon}} + \frac{2\log(\frac{32\log(n)}{\varepsilon}) - 2\log(L^0)}{\log 2}.
\end{eqnarray*}
Combining the definition of $R$ in Lemma~\ref{lemma:dual-bound-infinity} with the definition of $\eta$, $\tilde{r}$ and $\tilde{c}$ in Algorithm~\ref{Algorithm:ApproxOT_APDAMD}, we have 
\begin{equation*}
R \leq \frac{4\|C\|_\infty\log(n)}{\varepsilon} + \log(n) - 2\log\left(\frac{\varepsilon}{64n\|C\|_\infty}\right).
\end{equation*}
Therefore, we conclude that 
\begin{eqnarray*}
N_t & \leq & 256\sqrt{\frac{\delta\|C\|_\infty\log(n)}{\varepsilon}}\sqrt{\frac{4\|C\|_\infty\log(n)}{\varepsilon} + \log(n) - 2\log\left(\frac{\varepsilon}{64n\|C\|_\infty}\right)} \\ 
& & + \frac{2\log(\frac{32\log(n)}{\varepsilon}) - 2\log(L^0)}{\log 2} + 8 \ = \ \bigO\left(\frac{\sqrt{\delta}\left\|C\right\|_\infty\log(n)}{\varepsilon}\right). 
\end{eqnarray*}
The total iteration complexity in Step 3 of Algorithm~\ref{Algorithm:ApproxOT_APDAMD} is bounded by $\bigO(\sqrt{\delta}\|C\|_\infty\log(n)\varepsilon^{-1})$. Each iteration of Algorithm~\ref{Algorithm:APDAMD} requires $O(n^2)$ arithmetic operations. Therefore, the total number of arithmetic operations is $\bigO(n^2\sqrt{\delta}\|C\|_\infty\log(n)\varepsilon^{-1})$. Moreover, $\tilde{r}$ and $\tilde{c}$ in Step 1 of Algorithm~\ref{Algorithm:ApproxOT_APDAMD} can be found in $\bigO(n)$ arithmetic operations and \citet[Algorithm~2]{Altschuler-2017-Near} requires $\bigO(n^2)$ arithmetic operations. Therefore, we conclude that the total number of arithmetic operations is $\bigO(n^2\sqrt{\delta}\|C\|_\infty\log(n)\varepsilon^{-1})$.
\end{proof}
The complexity results in Theorem~\ref{Theorem:ApproxOT-APDAMD-Total-Complexity} suggests an interesting feature of the (regularized) OT problem. Indeed, the dependence of that bound on $\delta$ manifests the necessity of $\ell_\infty$-norm in the understanding of the complexity of the entropic regularized OT problem. This view is also in harmony with the proof technique of running time for Greenkhorn in Section~\ref{sec:greenkhorn}, where we rely on $\ell_\infty$-norm of optimal solutions of the dual entropic regularized OT problem to measure the progress in the objective value among the successive iterates. 

\subsection{Revisiting APDAGD}
We revisit APDAGD~\citep{Dvurechensky-2018-Computational} for the entropic regularized OT problem. First, we point out that the current complexity bound of $\bigOtil(\min\{n^{9/4}\varepsilon^{-1}, n^{2}\varepsilon^{-2}\})$ is not valid by a simple counterexample. Then, we establish a new complexity bound of APDAGD using our techniques in Section~\ref{Sec:complex_APDAMD}. Despite the issue with entropic regularized OT, we wish to emphasize that APDAGD is still an interesting and efficient accelerated algorithm for general linearly constrained convex optimization problem with solid theoretical guarantee. More precisely,~\citet[Theorem~3]{Dvurechensky-2018-Computational} is not applicable to entropic regularized OT since no dual solution exists with a constant bound in $\ell_2$-norm. However, it can be used for analyzing other problems with bounded optimal dual solution. 
\begin{algorithm}[!t]\small
\caption{Approximating OT by~\citet[Algorithm~3]{Dvurechensky-2018-Computational}}\label{Algorithm:ApproxOT_APDAGD}
\begin{algorithmic}
\STATE \textbf{Input:} $\eta = \frac{\varepsilon}{4\log(n)}$ and $\varepsilon' = \frac{\varepsilon}{8\|C\|_\infty}$. 
\STATE \textbf{Step 1:} Let $\tilde{r} \in \Delta_n$ and $\tilde{c} \in \Delta_n$ be defined by $(\tilde{r}, \tilde{c}) = (1 - \frac{\varepsilon'}{8})(r, c) + \frac{\varepsilon'}{8n}(\one_n, \one_n)$. 
\STATE \textbf{Step 2:} Let $A_{\text{ot}} \in \br^{2n \times n^2}$ and $b \in \br^{2n}$ be defined by $A_{\text{ot}}\text{vec}(X) = \begin{pmatrix} X\one_n \\ X^\top\one_n \end{pmatrix}$ and $b = \begin{pmatrix} \tilde{r} \\ \tilde{c} \end{pmatrix}$. 
\STATE \textbf{Step 3:} Compute $\tilde{X} = \textsc{Apdagd}(\widetilde{\varphi}, A_{\text{ot}}, b, \frac{\varepsilon'}{2})$ where $\widetilde{\varphi}$ is defined by Eq.~\eqref{prob:dualregOT-APDAMD}. 
\STATE \textbf{Step 4:} Round $\tilde{X}$ to $\hat{X}$ using~\citet[Algorithm~2]{Altschuler-2017-Near} such that $\hat{X}\one_n = r$ and $\hat{X}^\top\one_n = c$. 
\end{algorithmic}
\end{algorithm}

To facilitate the ensuing discussion, we first present the complexity bound for entropic regularized OT in \citet{Dvurechensky-2018-Computational} using our notation. Indeed, we recall that APDAGD is developed for solving the optimization problem with the objective function $\widehat{\varphi}$ defined as follows,   
\begin{equation}\label{prob:dualregOT-APDAGD}
\min_{\alpha, \beta \in \br^n} \ \widehat{\varphi}(\alpha, \beta) \mydefn \eta\left(\sum_{i,j=1}^n e^{-\frac{C_{ij} - \alpha_i - \beta_j}{\eta} -1}\right) - \alpha^\top r - \beta^\top c. 
\end{equation}
\begin{theorem}[Theorem 4 in \citet{Dvurechensky-2018-Computational}]
Let $\overline{R}>0$ be defined such that there exists an optimal solution to the dual entropic regularized OT problem in Eq.~\eqref{prob:dualregOT-APDAMD}, denoted by $(u^\star, v^\star)$, satisfying $\|(u^\star, v^\star)\| \leq \overline{R}$, the APDAGD scheme for approximating optimal transport (cf. Algorithm~\ref{Algorithm:ApproxOT_APDAGD}) returns an $\varepsilon$-approximate transportation plan (cf. Definition~\ref{def:eps-approximation}) in
\begin{equation*}
\bigO\left(\min\left\{\frac{n^{9/4}\sqrt{\overline{R} \|C\|_\infty\log(n)}}{\varepsilon}, \frac{n^{2}\overline{R}\|C\|_\infty\log(n)}{\varepsilon^2}\right\} \right),
\end{equation*}
arithmetic operations. 
\end{theorem}
From the above theorem,~\citet{Dvurechensky-2018-Computational} claims that the complexity bound for APDAGD is $\bigOtil(\min\{n^{9/4}\varepsilon^{-1}, n^2\varepsilon^{-2}\})$. However, there are two issues: 
\begin{enumerate}
\item The upper bound $\overline{R}$ is assumed to be independent of $n$, which is not correct; see our counterexample in Proposition~\ref{prop:tight_upper_bound}. 
\item The known upper bound $\overline{R}$ for the optimal solution depends on $\min_{1 \leq i, j \leq n} \{r_i, c_j\}$ (cf.~\citet[Lemma~1]{Dvurechensky-2018-Computational} or Lemma~\ref{lemma:dual-bound-infinity} in our paper). This implies that the valid algorithm needs to take the rounding error with $r$ and $c$ into account. 
\end{enumerate}
\paragraph{Corrected upper bound $\overline{R}$.} Corollary~\ref{corollary:dual-bound-l2} and Lemma~\ref{lemma:GK-bound} imply that a straightforward upper bound for $\overline{R}$ is $\bigOtil(\sqrt{n})$. Given a tolerance $\varepsilon \in (0, 1)$, we further show that $\overline{R}$ is indeed $\Omega(\sqrt{n})$ by using a specific entropic regularized OT problem as follows. 
\begin{proposition}\label{prop:tight_upper_bound}
Suppose that $C = \one_n\one_n^\top$ and $r = c = \frac{1}{n}\one_n$. Given a tolerance $\varepsilon \in (0, 1)$ and the regularization term $\eta = \frac{\varepsilon}{4\log(n)}$, all the optimal solutions of the dual entropic regularized OT problem in Eq.~\eqref{prob:dualregOT-APDAGD} satisfy that $\|(\alpha^*, \beta^*)\| \gtrsim \sqrt{n}$.
\end{proposition}
\begin{proof}
By the definition $r$, $c$ and $\eta$, we rewrite the dual function $\widehat{\varphi}(\alpha, \beta)$ as follows:
\begin{equation*}
\widehat{\varphi}(\alpha, \beta) = \frac{\varepsilon}{4e\log(n)}\sum_{1 \leq i, j \leq n} e^{- \frac{4\log(n)(1 - \alpha_i - \beta_j)}{\varepsilon}} - \frac{\alpha^\top \one_n}{n} - \frac{\beta^\top \one_n}{n}.
\end{equation*}
Since $(\alpha^\star, \beta^\star)$ is an optimal solution of dual entropic regularized OT problem, we have
\begin{equation}\label{inequality:tight_upper_bound}
e^{\frac{4\log(n)\alpha_i^\star}{\varepsilon}} \sum_{j=1}^n e^{-\frac{4\log(n)(1 - \beta_j^\star)}{\varepsilon}} = e^{\frac{4\log(n)\beta_i^\star}{\varepsilon}} \sum_{j=1}^n e^{- \frac{4\log(n)(1 - \alpha_j^\star)}{\varepsilon}} = \frac{e}{n} \quad \text{for all } i \in [n]. 
\end{equation}
This implies $\alpha_i^\star = \alpha_j^\star$ and $\beta_i^\star = \beta_j^\star$ for all $i, j \in [n]$, and $\alpha_i^\star + \beta_i^\star$ are the same for all $i \in [n]$.  Without loss of generality, we can let $\alpha_i^\star = 0$ in Eq.~\eqref{inequality:tight_upper_bound} and obtain that 
\begin{equation*}
\beta_i^\star = 1 + \frac{\varepsilon}{4\log(n)}\left(1 - 2\log(n)\right) = 1 + \frac{\varepsilon}{4\log(n)} - \frac{\varepsilon}{2}. 
\end{equation*}
which implies that $\alpha_i^\star + \beta_i^\star = 1 + \frac{\varepsilon}{4\log(n)} - \frac{\varepsilon}{2} \geq \frac{1}{2}$ for all $i \in [n]$. Thus, we have
\begin{align*}
\|(\alpha^\star, \beta^\star)\| \geq \sqrt{\frac{\sum_{i = 1}^{n} (\alpha_i^\star + \beta_i^\star)^2}{2}} = \frac{1}{2}\sqrt{\frac{n}{2}} \gtrsim \sqrt{n}. 
\end{align*}
As a consequence, we achieve the conclusion of the proposition.
\end{proof}
\paragraph{Approximation algorithm for OT by APDAGD.} It is worth noting that the rounding procedure is missing in~\citet[Algorithm~4]{Dvurechensky-2018-Computational} and we improve it to Algorithm~\ref{Algorithm:ApproxOT_APDAGD}. In particular,~\citet[Algorithm~3]{Dvurechensky-2018-Computational} is used in Step 3 of Algorithm~\ref{Algorithm:ApproxOT_APDAGD} for another function $\widetilde{\varphi}$ defined in Eq.~\eqref{def:dual-regOT-objective}. Given the corrected upper bound $\overline{R}$ and Algorithm~\ref{Algorithm:ApproxOT_APDAGD} for approximating OT, 
we provide a new complexity bound of Algorithm~\ref{Algorithm:ApproxOT_APDAGD} in the following proposition. 
\begin{proposition}\label{prop:ApproxOT-APDAGD-Total-Complexity}
The APDAGD scheme for approximating optimal transport (Algorithm~\ref{Algorithm:ApproxOT_APDAGD}) returns an $\varepsilon$-approximate transportation plan (cf. Definition~\ref{def:eps-approximation}) in
\begin{equation*}
\bigO\left(\frac{n^{5/2}\|C\|_\infty\sqrt{\log(n)}}{\varepsilon}\right)
\end{equation*}
arithmetic operations.
\end{proposition}
\begin{proof}
The proof is a simple modification of the proof for~\citet[Theorem 4]{Dvurechensky-2018-Computational} and we only give a proof sketch here. In particular, we can obtain that the number of iterations for Algorithm~\ref{Algorithm:ApproxOT_APDAGD} required to reach the tolerance $\varepsilon$ is
\begin{equation}\label{inequality-proposition-APDAGD}
t \leq \bigO\left(\max \left\{\min \left\{\frac{n^{1/4}\sqrt{\overline{R} \|C\|_\infty\log(n)}}{\varepsilon}, 
\frac{\overline{R}\|C\|_\infty\log(n)}{\varepsilon^2}\right\}, \frac{\overline{R} \sqrt{\log n}}{\varepsilon}\right\}\right).
\end{equation}
Moreover, we have $\overline{R} \leq \sqrt{n}\eta R$ where $R = \eta^{-1}\|C\|_\infty + \log(n) - 2\log(\min_{1 \leq i, j \leq n} \{r_i, c_j\})$. Therefore, the total iteration complexity in Step 3 of Algorithm~\ref{Algorithm:ApproxOT_APDAGD} is $\bigO(\sqrt{n\log(n)}\|C\|_\infty\varepsilon^{-1})$. Each iteration of APDAGD requires $\bigO(n^2)$ arithmetic operations. Therefore, the total number of arithmetic operations is $\bigO(n^{5/2}\|C\|_\infty\sqrt{\log(n)}\varepsilon^{-1})$. Note that $\tilde{r}$ and $\tilde{c}$ in Step 1 of Algorithm~\ref{Algorithm:ApproxOT_APDAGD} can be found in $\bigO(n)$ arithmetic operations and~\citet[Algorithm~2]{Altschuler-2017-Near} requires $\bigO(n^2)$ arithmetic operations. Therefore, we conclude that the total number of arithmetic operations is $\bigO(n^{5/2}\|C\|_\infty\sqrt{\log(n)}\varepsilon^{-1})$.
\end{proof}
\begin{remark}
As indicated in Proposition~\ref{prop:ApproxOT-APDAGD-Total-Complexity}, the corrected 
complexity bound of APDAGD for the entropic regularized OT is similar to that of APDAMD when we choose $\phi(\cdot) = \frac{1}{2n}\|\cdot\|^2$ and have $\delta = n$. From this perspective, our algorithm can be viewed as a generalization of APDAGD. Since our algorithm utilizes $\ell_\infty$-norm in the line search criterion, it is more robust than APDAGD in practice; see Section~\ref{sec:experiments} for the details. 
\end{remark}

\section{Accelerating Sinkhorn}\label{sec:acceleration}
In this section, we present an accelerated variant of Sinkhorn for solving the entropic regularized OT problem in Eq.~\eqref{prob:OT_regularized}. Combined with a rounding scheme, our algorithm can be used for solving the OT problem in Eq.~\eqref{prob:OT} and achieves a complexity bound of $\widetilde{O}(n^{7/3}\varepsilon^{-4/3})$, which improves that of Sinkhorn in terms of $1/\varepsilon$ and APDAGD and AAM~\citep{Guminov-2021-Combination} in terms of $n$. The idea comes from a novel combination of Nesterov's estimated sequence and the techniques for analyzing Sinkhorn. 
\begin{algorithm}[!t]\small
\caption{\textsc{Accelerated Sinkhorn}$(C, \eta, r, c, \varepsilon')$} \label{Algorithm:acceleration}
\begin{algorithmic}
\STATE \textbf{Input:} $t = 0$, $\theta_0 = 1$ and $\check{u}^0 = \tilde{u}^0 = \check{v}^0 = \tilde{v}^0 = \zero_n$.  
\WHILE{$E_t > \varepsilon'$}
\STATE Compute $\begin{pmatrix} \bar{u}^t \\ \bar{v}^t \end{pmatrix} = (1 - \theta_t)\begin{pmatrix} \check{u}^t \\ \check{v}^t \end{pmatrix} + \theta_t \begin{pmatrix} \tilde{u}^t \\ \tilde{v}^t \end{pmatrix}$. 
\STATE Compute $\tilde{u}^{t+1}$ and $\tilde{v}^{t+1}$ by 
\begin{equation*}
\tilde{u}^{t+1} = \tilde{u}^t - \frac{1}{2\theta_t}\left(\frac{r(B(\bar{u}^t, \bar{v}^t))}{\|B(\bar{u}^t, \bar{v}^t)\|_1} - r\right), \qquad  \tilde{v}^{t+1} = \tilde{v}^t - \frac{1}{2\theta_t}\left(\frac{c(B(\bar{u}^t, \bar{v}^t))}{\|B(\bar{u}^t, \bar{v}^t)\|_1}  - c\right). 
\end{equation*}
\STATE Compute $\begin{pmatrix} \grave{u}^t \\ \grave{v}^t \end{pmatrix} = \begin{pmatrix} \bar{u}^t \\ \bar{v}^t \end{pmatrix} + \theta_t\left(\begin{pmatrix} \tilde{u}^{t+1} \\ \tilde{v}^{t+1} \end{pmatrix} - \begin{pmatrix} \tilde{u}^t \\ \tilde{v}^t \end{pmatrix}\right)$.
\IF{$t$ is even}
\STATE $\widehat{u}^t = \grave{u}^t + \log(r) - \log(r(B(\grave{u}^t, \grave{v}^t)))$ and $\widehat{v}^t = \grave{v}^t$. 
\ELSE
\STATE $\widehat{u}^t = \grave{u}^t$ and $\widehat{v}^t = \grave{v}^t + \log(c) - \log(c(B(\grave{u}^t, \grave{v}^t)))$. 
\ENDIF
\STATE Compute $\begin{pmatrix} u^t \\ v^t \end{pmatrix} = \argmin\left\{\varphi(u, v) \Big\vert \begin{pmatrix} u \\ v \end{pmatrix} \in \left\{\begin{pmatrix} \check{u}^t \\ \check{v}^t \end{pmatrix}, \begin{pmatrix} \widehat{u}^t \\ \widehat{v}^t \end{pmatrix}\right\}\right\}$.
\IF{$t$ is even}
\STATE $\check{u}^{t+1} = u^t + \log(r) - \log(r(B(u^t, v^t)))$ and $\check{v}^{t+1} = v^t$. 
\ELSE
\STATE $\check{u}^{t+1} = u^t$ and $\check{v}^{t+1} = v^t + \log(c) - \log(c(B(u^t, v^t)))$. 
\ENDIF  
\STATE Compute $\theta_{t+1} = \frac{\theta_t(\sqrt{\theta_t^2 + 4} - \theta_t)}{2}$. 
\STATE Set $t = t + 1$. 
\ENDWHILE
\STATE \textbf{Output:} $B(u^t, v^t)$.  
\end{algorithmic}
\end{algorithm}

\subsection{Algorithmic procedure} 
We present the pseudocode of accelerated Sinkhorn in Algorithm~\ref{Algorithm:acceleration}. This algorithm achieves the acceleration by using Nesterov's estimate sequences~\citep{Nesterov-2018-Lectures}. While our algorithm can be interpreted as an accelerated block coordinate descent algorithm, it is worth mentioning that our algorithm is purely \textit{deterministic} and thus differs from other accelerated randomized algorithms~\citep{Lin-2015-Accelerated, Fercoq-2015-Accelerated, Lu-2018-Accelerating} in the optimization literature. 

Algorithm~\ref{Algorithm:acceleration} is a novel combination of Nesterov's estimate sequences, a monotone search step, the choice of greedy coordinate and two coordinate updates. It is applied to solve the dual entropic regularized OT problem in Eq.~\eqref{prob:OT_regularized_dual}: 
\begin{equation*}
\min \limits_{u, v} \ \varphi(u, v) \mydefn \log(\|B(u, v)\|_1) - u^\top r - v^\top c.  
\end{equation*}
More specifically, Nesterov's estimate sequences are responsible for optimizing a dual objective function $\varphi$ in a fast rate. The coordinate update guarantees that $\varphi(\widehat{u}^t, \widehat{v}^t) \leq \varphi(\grave{u}^t, \grave{v}^t)$ and $\|B(\widehat{u}^t, \widehat{v}^t)\|_1=1$. The monotone search step guarantees that $\varphi(u^t, v^t) \leq \varphi(\widehat{u}^t, \widehat{v}^t)$. The greedy coordinate update guarantees that $\varphi(\check{u}^{t+1}, \check{v}^{t+1}) \leq \varphi(u^t, v^t)$ with sufficient progress. 

Furthermore, we also use the same quantity as that in Greekhorn to measure the per-iteration residue of Algorithm~\ref{Algorithm:acceleration}: 
\begin{equation}\label{Def:residue-acceleration}
E_t = \|r(B(u^t, v^t)) - r\|_1 + \|c(B(u^t, v^t)) - c\|_1.
\end{equation}
The computationally expensive step is to compute $\frac{r(B(\bar{u}^t, \bar{v}^t))}{\|B(\bar{u}^t, \bar{v}^t)\|_1}$ and $\frac{c(B(\bar{u}^t, \bar{v}^t))}{\|B(\bar{u}^t, \bar{v}^t)\|_1}$. Since $B(\bar{u}^t, \bar{v}^t)$ does not have any special property, it is difficult to design some implementation trick to reduce the order of $n$. As such, the arithmetic operations for each iteration is $\bigO(n^2)$ and is exactly the same as Sinkhorn~\citep{Cuturi-2013-Sinkhorn}, APDAGD~\citep{Dvurechensky-2018-Computational} and AAM~\citep{Guminov-2021-Combination}. Combining Algorithm~\ref{Algorithm:acceleration} and~\citet[Algorithm~2]{Altschuler-2017-Near}, we are ready to present the pseudocode of our main algorithm in Algorithm~\ref{Algorithm:ApproxOT_Acceleration}. The regularization parameter $\eta$ is set as before, and Step 1 is necessary since accelerated Sinkhorn is not well behaved if the marginal distributions have sparse support.
\begin{algorithm}[!t]\small
\caption{Approximating OT by Algorithm~\ref{Algorithm:acceleration}} \label{Algorithm:ApproxOT_Acceleration}
\begin{algorithmic}
\STATE \textbf{Input:} $\eta = \frac{\varepsilon}{4\log(n)}$ and $\varepsilon'=\frac{\varepsilon}{8\|C\|_\infty}$. 
\STATE \STATE \textbf{Step 1:} Let $\tilde{r} \in \Delta_n$ and $\tilde{c} \in \Delta_n$ be defined by $(\tilde{r}, \tilde{c}) = (1 - \frac{\varepsilon'}{8})(r, c) + \frac{\varepsilon'}{8n}(\one_n, \one_n)$. 
\STATE \textbf{Step 2:} Compute $\tilde{X} = \textsc{Accelerated Sinkhorn}(C, \eta, \tilde{r}, \tilde{c}, \frac{\varepsilon'}{2})$.
\STATE \textbf{Step 3:} Round $\tilde{X}$ to $\hat{X}$ using~\citet[Algorithm~2]{Altschuler-2017-Near} such that $\hat{X}\one_n = r$ and $\hat{X}^\top\one_n = c$. 
\STATE \textbf{Output:} $\widehat{X}$. 
\end{algorithmic}
\end{algorithm} 

\subsection{Technical lemmas}
We first present two technical lemmas which are essential in the analysis of Algorithm~\ref{Algorithm:acceleration}. The first lemma provides an inductive relationship on the quantity 
\begin{equation}\label{def:residue_acceleration}
\delta_t = \varphi(\check{u}^t, \check{v}^t) - \varphi(u^\star, v^\star), 
\end{equation}
where $(u^\star, v^\star)$ is an optimal solution of the dual entropic regularized OT problem in Eq.~\eqref{prob:OT_regularized_dual} that satisfies Lemma~\ref{corollary:dual-bound-l2}. To facilitate the discussion, we recall Eq.~\eqref{inequality-gradient-objective} with $\|A\|_{1 \rightarrow 2} = \sqrt{2}$ as follows, 
\begin{equation}\label{smooth:dualregOT}
\varphi(u', v') - \varphi(u, v) -  \begin{pmatrix} u' - u \\ v' - v\end{pmatrix}^\top\nabla\varphi(u, v) \leq \left\|\begin{pmatrix} u' - u \\ v' - v\end{pmatrix}\right\|^2,
\end{equation}
which will be used in the proof of the first lemma. 
\begin{lemma}\label{lemma:acceleration-descent}
Let $\{(\check{u}^t, \check{v}^t)\}_{t \geq 0}$ be the iterates generated by Algorithm~\ref{Algorithm:acceleration} and $(u^\star, v^\star)$ be an optimal solution of the dual entropic regularized OT problem. Then, we have
\begin{equation*}
\delta_{t+1} \leq (1-\theta_t)\delta_t + \theta_t^2\left(\left\|\begin{pmatrix} u^\star - \tilde{u}^t \\ v^\star - \tilde{v}^t \end{pmatrix}\right\|^2 - \left\|\begin{pmatrix} u^\star - \tilde{u}^{t+1} \\ v^\star - \tilde{v}^{t+1} \end{pmatrix}\right\|^2\right).
\end{equation*}
\end{lemma}
\begin{proof}
Using Eq.~\eqref{smooth:dualregOT} with $(u', v') = (\grave{u}^t, \grave{v}^t)$ and $(u, v) = (\bar{u}^t, \bar{v}^t)$, we have
\begin{equation*}
\varphi(\grave{u}^t, \grave{v}^t) \leq \varphi(\bar{u}^t, \bar{v}^t) + \theta_t\begin{pmatrix} \tilde{u}^{t+1} - \tilde{u}^t \\ \tilde{v}^{t+1} - \tilde{v}^t \end{pmatrix}^\top\nabla\varphi(\bar{u}^t, \bar{v}^t) + \theta_t^2\left\|\begin{pmatrix} \tilde{u}^{t+1} - \tilde{u}^t \\ \tilde{v}^{t+1} - \tilde{v}^t \end{pmatrix}\right\|^2. 
\end{equation*}
After simple calculations, we find that
\begin{eqnarray*}
\varphi(\bar{u}^t, \bar{v}^t) & = & (1-\theta_t)\varphi(\bar{u}^t, \bar{v}^t) + \theta_t \varphi(\bar{u}^t, \bar{v}^t), \\
\begin{pmatrix} \tilde{u}^{t+1} - \tilde{u}^t \\ \tilde{v}^{t+1} - \tilde{v}^t \end{pmatrix}^\top \nabla\varphi(\bar{u}^t, \bar{v}^t) & = & - \begin{pmatrix} \tilde{u}^t - \bar{u}^t \\ \tilde{v}^t - \bar{v}^t \end{pmatrix}^\top \nabla\varphi(\bar{u}^t, \bar{v}^t) + \begin{pmatrix} \tilde{u}^{t+1} - \bar{u}^t \\ \tilde{v}^{t+1} - \bar{v}^t \end{pmatrix}^\top \nabla\varphi(\bar{u}^t, \bar{v}^t). 
\end{eqnarray*}
Putting these pieces together yields that 
\begin{eqnarray}\label{claim-acceleration-descent-main}
\varphi(\grave{u}^t, \grave{v}^t) & \leq & \theta_t\left(\underbrace{\varphi(\bar{u}^t, \bar{v}^t) + \begin{pmatrix} \tilde{u}^{t+1} - \bar{u}^t \\ \tilde{v}^{t+1} - \bar{v}^t \end{pmatrix}^\top \nabla\varphi(\bar{u}^t, \bar{v}^t) + \theta_t\left\| \begin{pmatrix} \tilde{u}^{t+1} - \tilde{u}^t \\ \tilde{v}^{t+1} - \tilde{v}^t \end{pmatrix}\right\|^2}_{\textnormal{\bf I}}\right) \\
& & + \underbrace{(1-\theta_t)\varphi(\bar{u}^t, \bar{v}^t) - \theta_t\begin{pmatrix} \tilde{u}^t - \bar{u}^t \\ \tilde{v}^t - \bar{v}^t \end{pmatrix}^\top \nabla\varphi(\bar{u}^t, \bar{v}^t)}_{\textnormal{\bf II}}. \nonumber
\end{eqnarray}
We first bound the term $\textnormal{\bf I}$. Indeed, by the update formula for $(\tilde{u}^{t+1}, \tilde{v}^{t+1})$ and the definition of $\nabla \varphi$, we have
\begin{equation*}
\begin{pmatrix} u - \tilde{u}^{t+1} \\ v - \tilde{v}^{t+1} \end{pmatrix}^\top \left(\nabla\varphi(\bar{u}^t, \bar{v}^t) + 2\theta_t\begin{pmatrix} \tilde{u}^{t+1} - \tilde{u}^t \\ \tilde{v}^{t+1} - \tilde{v}^t \end{pmatrix}\right) = 0 \textnormal{ for all } (u, v) \in \br^{2n}. 
\end{equation*}
Letting $(u, v) = (u^\star, v^\star)$ and rearranging the resulting equation yields that 
\begin{eqnarray*}
\lefteqn{\begin{pmatrix} \tilde{u}^{t+1} - \bar{u}^t \\ \tilde{v}^{t+1} - \bar{v}^t \end{pmatrix}^\top \nabla\varphi(\bar{u}^t, \bar{v}^t) \ = \ \begin{pmatrix} u^\star - \bar{u}^t \\ v^\star - \bar{v}^t \end{pmatrix}^\top\nabla \varphi(\bar{u}^t, \bar{v}^t)} \\ 
& & + \theta_t\left(\left\| \begin{pmatrix} u^\star - \tilde{u}^t \\ v^\star - \tilde{v}^t \end{pmatrix} \right\|^2 - \left\| \begin{pmatrix} u^\star - \tilde{u}^{t+1} \\ v^\star - \tilde{v}^{t+1} \end{pmatrix} \right\|^2 - \left\| \begin{pmatrix} \tilde{u}^{t+1} - \tilde{u}^t \\ \tilde{v}^{t+1} - \tilde{v}^t \end{pmatrix}\right\|^2 \right).  
\end{eqnarray*}
Using the convexity of $\varphi$, we have 
\begin{equation*}
\begin{pmatrix} u^\star - \bar{u}^t \\ v^\star - \bar{v}^t \end{pmatrix}^\top\nabla \varphi(\bar{u}^t, \bar{v}^t) \leq \varphi(u^\star, v^\star) - \varphi(\bar{u}^t, \bar{v}^t). 
\end{equation*}
Putting these pieces together yields that 
\begin{equation}\label{claim-acceleration-descent-first}
\textnormal{\bf I} \leq \varphi(u^\star, v^\star) + \theta_t\left(\left\| \begin{pmatrix} u^\star - \tilde{u}^t \\ v^\star - \tilde{v}^t \end{pmatrix} \right\|^2 - \left\| \begin{pmatrix} u^\star - \tilde{u}^{t+1} \\ v^\star - \tilde{v}^{t+1} \end{pmatrix} \right\|^2\right). 
\end{equation}
We then bound the term $\textnormal{\bf II}$. Indeed, we see from the definition of $(\bar{u}^t, \bar{v}^t)$ that
\begin{equation*}
- \theta_t\begin{pmatrix} \tilde{u}^t - \bar{u}^t \\ \tilde{v}^t - \bar{v}^t \end{pmatrix} = \theta_t \begin{pmatrix} \bar{u}^t \\ \bar{v}^t \end{pmatrix} + (1 - \theta_t)\begin{pmatrix} \check{u}^t \\ \check{v}^t \end{pmatrix} - \begin{pmatrix} \bar{u}^t \\ \bar{v}^t \end{pmatrix} = (1-\theta_t)\begin{pmatrix} \check{u}^t - \bar{u}^t \\ \check{v}^t - \bar{v}^t \end{pmatrix}. 
\end{equation*}
Combining the above equation with the convexity of $\varphi$, we have
\begin{equation}\label{claim-acceleration-descent-second}
\textnormal{\bf II} = (1-\theta_t)\left(\varphi(\bar{u}^t, \bar{v}^t) + \begin{pmatrix} \check{u}^t - \bar{u}^t \\ \check{v}^t - \bar{v}^t \end{pmatrix}^\top\nabla \varphi(\bar{u}^t, \bar{v}^t)\right) \leq (1-\theta_t)\varphi(\check{u}^t, \check{v}^t). 
\end{equation}
Plugging Eq.~\eqref{claim-acceleration-descent-first} and Eq.~\eqref{claim-acceleration-descent-second} into Eq.~\eqref{claim-acceleration-descent-main} yields that 
\begin{equation*}
\varphi(\grave{u}^t, \grave{v}^t) \leq (1-\theta_t)\varphi(\check{u}^t, \check{v}^t) + \theta_t\varphi(u^\star, v^\star) + \theta_t^2\left(\left\|\begin{pmatrix} u^\star - \tilde{u}^t \\ v^\star - \tilde{v}^t \end{pmatrix}\right\|^2 - \left\|\begin{pmatrix} u^\star - \tilde{u}^{t+1} \\ v^\star - \tilde{v}^{t+1} \end{pmatrix}\right\|^2\right). 
\end{equation*}
Since $(\check{u}^{t+1}, \check{v}^{t+1})$ is obtained by a coordinate update from $(u^t, v^t)$, we have $\varphi(u^t, v^t) \geq \varphi(\check{u}^{t+1}, \check{v}^{t+1})$. By the definition of $(u^t, v^t)$, we have $\varphi(\widehat{u}^t, \widehat{v}^t) \geq \varphi(u^t, v^t)$. Since $(\widehat{u}^t, \widehat{v}^t)$ is obtained by a coordinate update from $(\grave{u}^t, \grave{v}^t)$, we have $\varphi(\grave{u}^t, \grave{v}^t) \geq \varphi(\widehat{u}^t, \widehat{v}^t)$. Collecting all of these results leads to
\begin{align*}
\lefteqn{\varphi(\check{u}^{t+1}, \check{v}^{t+1}) - \varphi(u^\star, v^\star) \ \leq \ (1-\theta_t)(\varphi(\check{u}^t, \check{v}^t) - \varphi(u^\star, v^\star))} & \\
& & \hspace{ 12 em} + \theta_t^2\left(\left\|\begin{pmatrix} u^\star - \tilde{u}^t \\ v^\star - \tilde{v}^t \end{pmatrix}\right\|^2 - \left\|\begin{pmatrix} u^\star - \tilde{u}^{t+1} \\ v^\star - \tilde{v}^{t+1} \end{pmatrix}\right\|^2\right).
\end{align*}
This completes the proof. 
\end{proof}
The second lemma provides an upper bound for $\delta_t$ defined by Eq.~\eqref{def:residue_acceleration} where $\{(\check{u}^t, \check{v}^t)\}_{t \geq 0}$ are generated by Algorithm~\ref{Algorithm:acceleration} and $(u^\star, v^\star)$ is an optimal solution defined by Corollary~\ref{corollary:dual-bound-l2}.  
\begin{lemma}\label{lemma:acceleration-bound}
Let $\{(\check{u}^t, \check{v}^t)\}_{t \geq 0}$ be the iterates generated by Algorithm~\ref{Algorithm:acceleration} and $(u^\star, v^\star)$ be an optimal solution of the dual entropic regularized OT problem satisfying that $\|(u^\star, v^\star)\| \leq \sqrt{2n}R$ where $R$ is defined in Corollary~\ref{corollary:dual-bound-l2}. Then, we have
\begin{equation*}
\delta_t \leq \frac{8nR^2}{(t+1)^2}.  
\end{equation*}
\end{lemma}
\begin{proof}
By simple calculations, we derive from the definition of $\theta_t$ that $\frac{\theta_{t+1}}{\theta_t} = \sqrt{1 - \theta_{t+1}}$. Therefore, we conclude from Lemma~\ref{lemma:acceleration-descent} that 
\begin{equation*}
\left(\frac{1-\theta_{t+1}}{\theta_{t+1}^2}\right)\delta_{t+1} - \left(\frac{1-\theta_t}{\theta_t^2}\right)\delta_t \leq \left\|\begin{pmatrix} u^\star - \tilde{u}^t \\ v^\star - \tilde{v}^t \end{pmatrix}\right\|^2 - \left\|\begin{pmatrix} u^\star - \tilde{u}^{t+1} \\ v^\star - \tilde{v}^{t+1} \end{pmatrix}\right\|^2.
\end{equation*}
Equivalently, we have
\begin{equation*}
\left(\frac{1-\theta_t}{\theta_t^2}\right)\delta_t + \left\|\begin{pmatrix} u^\star - \tilde{u}^t \\ v^\star - \tilde{v}^t \end{pmatrix}\right\|^2 \leq \left(\frac{1-\theta_0}{\theta_0^2}\right)\delta_0 + \left\|\begin{pmatrix} u^\star - \tilde{u}^0 \\ v^\star - \tilde{v}^0 \end{pmatrix}\right\|^2. 
\end{equation*}
Since $\theta_0 = 1$ and $\tilde{u}^0 = \tilde{v}^0 = \zero_n$, we have $\delta_t \leq \theta_{t-1}^2\|(u^\star, v^\star)\|^2 \leq 2nR^2\theta_{t-1}^2$. 

The remaining step is to show that $0 < \theta_t \leq \frac{2}{t+2}$. Indeed, the claim holds when $t=0$ as we have $\theta_0 = 1$. Assume that the claim holds for $t \leq t_0$, i.e., $\theta_{t_0} \leq \frac{2}{t_0+2}$, we have
\begin{equation*}
\theta_{t_0+1} = \frac{2}{1 + \sqrt{1 + \frac{4}{\theta_{t_0}^2}}} \leq \frac{2}{t_0+3}. 
\end{equation*}
Putting these pieces together yields the desired inequality for $\delta_t$. 
\end{proof}

\subsection{Main results}
We present an upper bound for the number of iterations required by Algorithm~\ref{Algorithm:acceleration}. Note that the per-iteration progress of Algorithm~\ref{Algorithm:acceleration} is measured by the function $\rho: \br_+^n \times \br_+^n \rightarrow \br_+$ given by: $\rho(a, b) : = \one_n^\top(b - a) + \sum_{i=1}^n a_i\log(\frac{a_i}{b_i})$. 
\begin{theorem}\label{Theorem:acceleration-iteration}
Let $\{(u^t, v^t)\}_{t \geq 0}$ be the iterates generated by Algorithm~\ref{Algorithm:acceleration}. The number of iterations required to reach the stopping criterion $E_t \leq \varepsilon'$ satisfies
\begin{equation*}
t \leq 1 + \left(\frac{16\sqrt{n}R}{\varepsilon'}\right)^{2/3}, 
\end{equation*}
where $R > 0$ is defined in Lemma~\ref{lemma:dual-bound-infinity}.
\end{theorem}
\begin{proof}
We first claim that 
\begin{equation}\label{claim-acceleration-iteration-main}
\varphi(u^t, v^t) - \varphi(\check{u}^{t+1}, \check{v}^{t+1}) \geq \frac{1}{2}\left(\|r(B(u^t, v^t)) - r\|_1^2 + \|c(B(u^t, v^t)) - c\|_1^2\right).
\end{equation}
By the definition of $\varphi$, we have
\begin{eqnarray}\label{inequality-acceleration-iteration-first}
\lefteqn{\varphi(u^t, v^t) - \varphi(\check{u}^{t+1}, \check{v}^{t+1}) \ = \ \log(\|B(u^t, v^t)\|_1)} \\
& & - \log(\|B(\check{u}^{t+1}, \check{v}^{t+1})\|_1) - (u^t - \check{u}^{t+1})^\top r - (v^t - \check{v}^{t+1})^\top c. \nonumber
\end{eqnarray}
From the update formula for $(\widehat{u}^t, \widehat{v}^t)$ and $(\check{u}^{t+1}, \check{v}^{t+1})$, it is clear that $\|B(\widehat{u}^t, \widehat{v}^t)\|_1 = 1$ and $\|B(\check{u}^{t+1}, \check{v}^{t+1})\|_1 = 1$ for all $t \geq 0$. Then, we derive from the update formula for $(u^t, v^t)$  that $\|B(u^t, v^t)\|_1 = 1$ for all $t \geq 1$. Therefore, we have
\begin{align*}
\varphi(u^t, v^t) - \varphi(\check{u}^{t+1}, \check{v}^{t+1}) & =  -(u^t - \check{u}^{t+1})^\top r - (v^t - \check{v}^{t+1})^\top c  \\
& =  (\log(r) - \log(r(B(u^t, v^t))))^\top r + (\log(c) - \log(c(B(u^t, v^t))))^\top c.  
\end{align*}
Since $\one_n^\top r = \one_n^\top r(B(u^t, v^t)) = \one_n^\top c = \one_n^\top c(B(u^t, v^t)) = 1$, we have 
\begin{equation*}
\varphi(u^t, v^t) - \varphi(\check{u}^{t+1}, \check{v}^{t+1}) = \rho(r, r(B(u^t, v^t))) + \rho(c, c(B(u^t, v^t))).
\end{equation*}
Using~\citet[Lemma~4]{Altschuler-2017-Near}, we derive Eq.~\eqref{claim-acceleration-iteration-main} as desired. 

By the definition of $(u^t, v^{t})$, we have $\varphi(\check{u}^t, \check{v}^t) \geq \varphi(u^t, v^t)$. Plugging this inequality into Eq.~\eqref{claim-acceleration-iteration-main} together with the Cauchy-Schwarz inequality yields 
\begin{equation*}
\varphi(\check{u}^t, \check{v}^t) - \varphi(\check{u}^{t+1}, \check{v}^{t+1}) \geq \frac{1}{4} E_t^2.  
\end{equation*}
Therefore, we conclude that 
\begin{equation*}
\varphi(\check{u}^t, \check{v}^t) - \varphi(\check{u}^{t+1}, \check{v}^{t+1}) \geq \frac{1}{4}\left(\sum_{i=j}^t E_i^2\right) \textnormal{ for any } j \in \{1, 2, \ldots, t\}.   
\end{equation*}
Since $\varphi(\check{u}^{t+1}, \check{v}^{t+1}) \geq \varphi(u^\star, v^\star)$ for all $t \geq 1$, we have $\varphi(\check{u}^j, \check{v}^j) - \varphi(\check{u}^{t+1}, \check{v}^{t+1}) \leq \delta_j$. Then, it follows from Lemma~\ref{lemma:acceleration-bound} that 
\begin{equation*}
\sum_{i=j}^t E_i^2 \leq \frac{32nR^2}{(j+1)^2}.
\end{equation*}
Putting these pieces together with the fact that $E_t \geq \varepsilon'$ as soon as the stopping criterion is not fulfilled yields 
\begin{equation*}
\frac{32nR^2}{(j+1)^2(t-j+1)} \geq (\varepsilon')^2.  
\end{equation*}
Since this inequality holds true for all $j \in \{1, 2, \ldots, t\}$, we assume without loss of generality that $t$ is even and let $j = t/2$. Then, we obtain that 
\begin{equation*}
t \leq 1 + \left(\frac{16\sqrt{n}R}{\varepsilon'}\right)^{2/3}.
\end{equation*}
This completes the proof of the theorem. 
\end{proof}
We are ready to present the complexity bound of Algorithm~\ref{Algorithm:ApproxOT_Acceleration} for solving the OT problem in Eq.~\eqref{prob:OT}. Note that $\varepsilon'=\frac{\varepsilon}{8\|C\|_\infty}$ is defined using the desired accuracy $\varepsilon > 0$.
\begin{theorem}\label{Theorem:ApproxOT-Acceleration-Total-Complexity}
The accelerated Sinkhorn scheme for approximating optimal transport (Algorithm~\ref{Algorithm:ApproxOT_Acceleration}) returns an $\varepsilon$-approximate transportation plan (cf. Definition~\ref{def:eps-approximation}) in 
\begin{equation*}
\bigO\left(\frac{n^{7/3}\|C\|_\infty^{4/3}(\log(n))^{1/3}}{\varepsilon^{4/3}}\right)
\end{equation*}
arithmetic operations.
\end{theorem}
\begin{proof}
Applying the same argument which is used in Theorem~\ref{Theorem:ApproxOT-GK-Total-Complexity}, we obtain that $\langle C, \widehat{X}\rangle - \langle C, X^\star\rangle \leq \varepsilon$ where $\tilde{X} = \textsc{Accelerated Sinkhorn}(C, \eta, \tilde{r}, \tilde{c}, \frac{\varepsilon'}{2})$ in Step 2 of Algorithm~\ref{Algorithm:ApproxOT_Acceleration}. 

It remains to bound the number of iterations required by Algorithm~\ref{Algorithm:acceleration} to reach the stopping criterion $E_t \leq \frac{\varepsilon'}{2}$. Using Theorem~\ref{Theorem:acceleration-iteration}, we have
\begin{equation*}
t \leq 1 + \left(\frac{32\sqrt{n}R}{\varepsilon'}\right)^{2/3}.  
\end{equation*}
By the definition of $R$ (cf. Lemma~\ref{lemma:dual-bound-infinity}), $\eta = \frac{\varepsilon}{4\log(n)}$ and $\varepsilon'=\frac{\varepsilon}{8\|C\|_\infty}$, we have 
\begin{eqnarray*}
t & \leq & 1 + \left(\frac{32\sqrt{n}R}{\varepsilon'}\right)^{2/3} \\
& \leq & 1 + \left(\frac{256\sqrt{n}\|C\|_\infty}{\varepsilon}\left(\frac{\|C\|_\infty}{\eta} + \log(n) - \log\left(\min_{1 \leq i, j \leq n} \{r_i, c_j\}\right)\right)\right)^{2/3} \\
& \leq & 1 + \left(\frac{256\sqrt{n}\|C\|_\infty}{\varepsilon}\left(\frac{4\log(n)\|C\|_\infty}{\varepsilon} + \log(n) - \log\left(\frac{\varepsilon}{64n\|C\|_\infty}\right)\right)\right)^{2/3} \\
& = & \bigO\left(\frac{n^{1/3} \|C\|_\infty^{4/3}(\log(n))^{1/3}}{\varepsilon^{4/3}}\right). 
\end{eqnarray*}
Since each iteration of Algorithm~\ref{Algorithm:acceleration} requires $\bigO(n^2)$ arithmetic operations, the total number of arithmetic operations required by Step 2 of Algorithm~\ref{Algorithm:ApproxOT_Acceleration} is $\bigO(n^{7/3}\|C\|_\infty^{4/3}(\log(n))^{1/3}\varepsilon^{-4/3})$. Computing two vectors $\tilde{r}$ and $\tilde{c}$ in Step 1 of Algorithm~\ref{Algorithm:ApproxOT_Acceleration} requires $\bigO(n)$ arithmetic operations and~\citet[Algorithm~2]{Altschuler-2017-Near} requires $\bigO(n^2)$ arithmetic operations. Therefore, the complexity bound of Algorithm~\ref{Algorithm:ApproxOT_Acceleration} is $\bigO(n^{7/3}\|C\|_\infty^{4/3}(\log(n))^{1/3}\varepsilon^{-4/3})$. 
\end{proof}
\begin{remark}
Theorem~\ref{Theorem:ApproxOT-Acceleration-Total-Complexity} shows that the complexity bound of accelerated Sinkhorn is better than that of Sinkhorn and Greenkhorn in terms of $1/\varepsilon$ but appears not to be near-linear in $n^2$. Thus, our algorithm is recommended when $n \ll 1/\varepsilon$. This occurs if the desired solution accuracy is relatively small, saying $10^{-4}$, and the examples include the application problems from economics and operations research. In contrast, Sinkhorn and Greenkhorn are recommended when $n \gg 1/\varepsilon$. This occurs if the desired solution accuracy is relatively large, saying $10^{-2}$, and the examples include the application problems from image processing. 
\end{remark}

\section{Experiments}\label{sec:experiments}
In this section, we conduct the experiments to evaluate Greenkhorn, accelerated Sinkhorn and APDAMD on synthetic data and real images from the MNIST Digits dataset\footnote{http://yann.lecun.com/exdb/mnist/}. We use Sinkhorn~\citep{Cuturi-2013-Sinkhorn}, APDAGD~\citep{Dvurechensky-2018-Computational} and GCPB\footnote{GCPB is simply an application of stochastic averaged gradient~\citep{Schmidt-2017} for solving the dual entropic regularized OT problem.}~\citep{Genevay-2016-Stochastic} as the baseline approaches. Since the focus of this paper is the entropic regularized algorithms, we exclude the combinatorial algorithms from our experiment and refer to~\citet{Dong-2020-study} for an excellent comparative study. 

In the literature, Greenkhorn and APDAGD were shown to outperform the Sinkhorn algorithm in terms of row/column updates~\citep{Altschuler-2017-Near, Dvurechensky-2018-Computational} and we repeat the comparisons for the sake of completeness. For parameter tuning in the implementation of Greenkhorn, accelerated Sinkhorn and APDAMD, we follow most of the setups as shown in Algorithm~\ref{Algorithm:GK},~\ref{Algorithm:APDAMD} and~\ref{Algorithm:acceleration} but employ more aggressive choice of stepsize for the coordinate gradient updates in Algorithm~\ref{Algorithm:acceleration}. To obtain an optimal value of the OT problem, we employ the default LP solver in MATLAB.
\begin{figure*}[!t]
\begin{minipage}[b]{.32\textwidth}
\includegraphics[width=55mm, height=42mm]{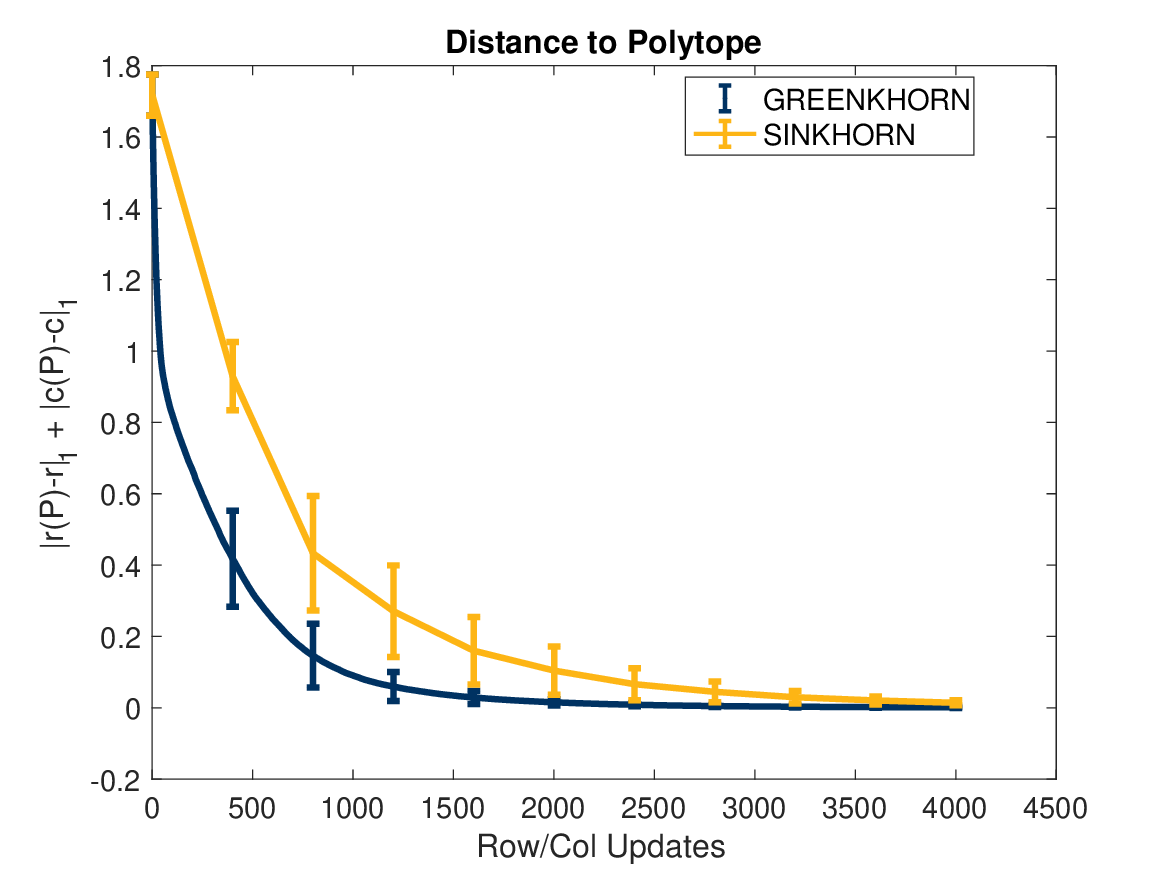}
\end{minipage}
\begin{minipage}[b]{.32\textwidth}
\includegraphics[width=55mm, height=42mm]{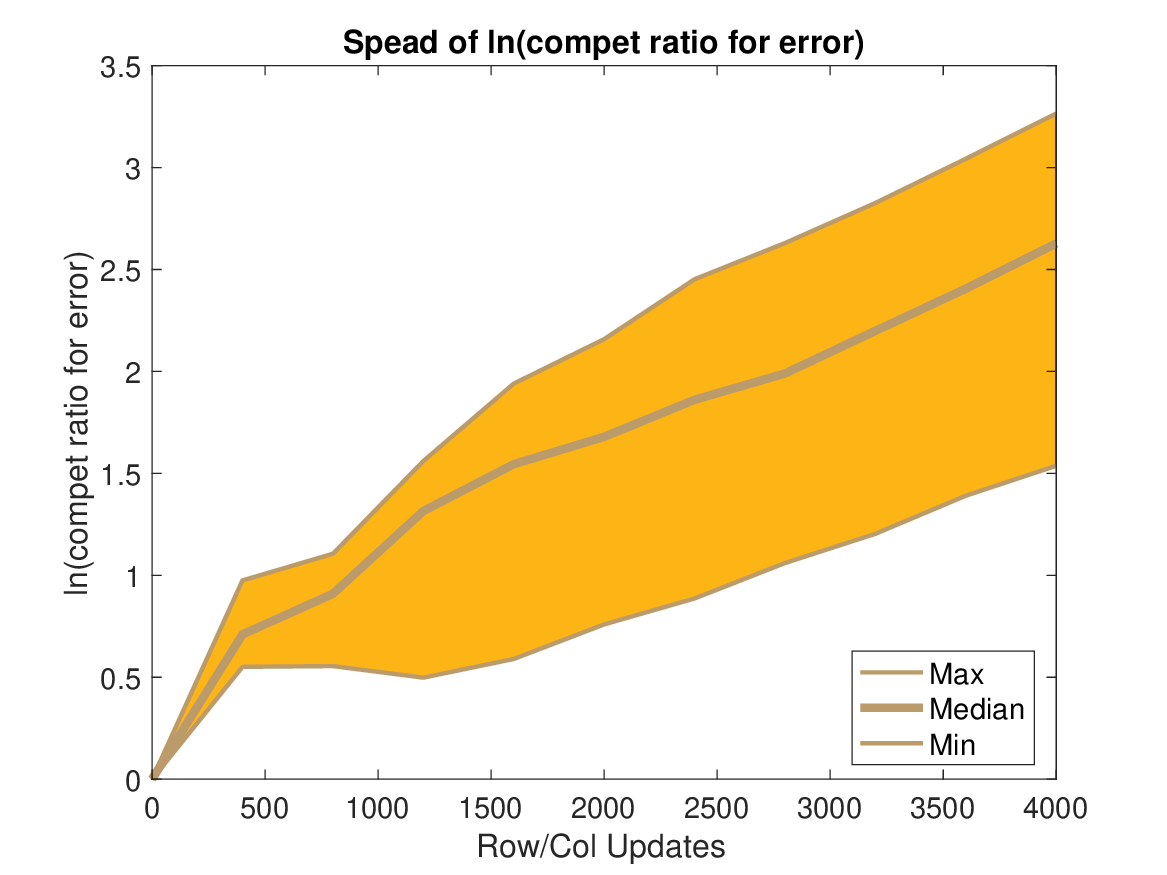}
\end{minipage}
\begin{minipage}[b]{.32\textwidth}
\includegraphics[width=55mm, height=42mm]{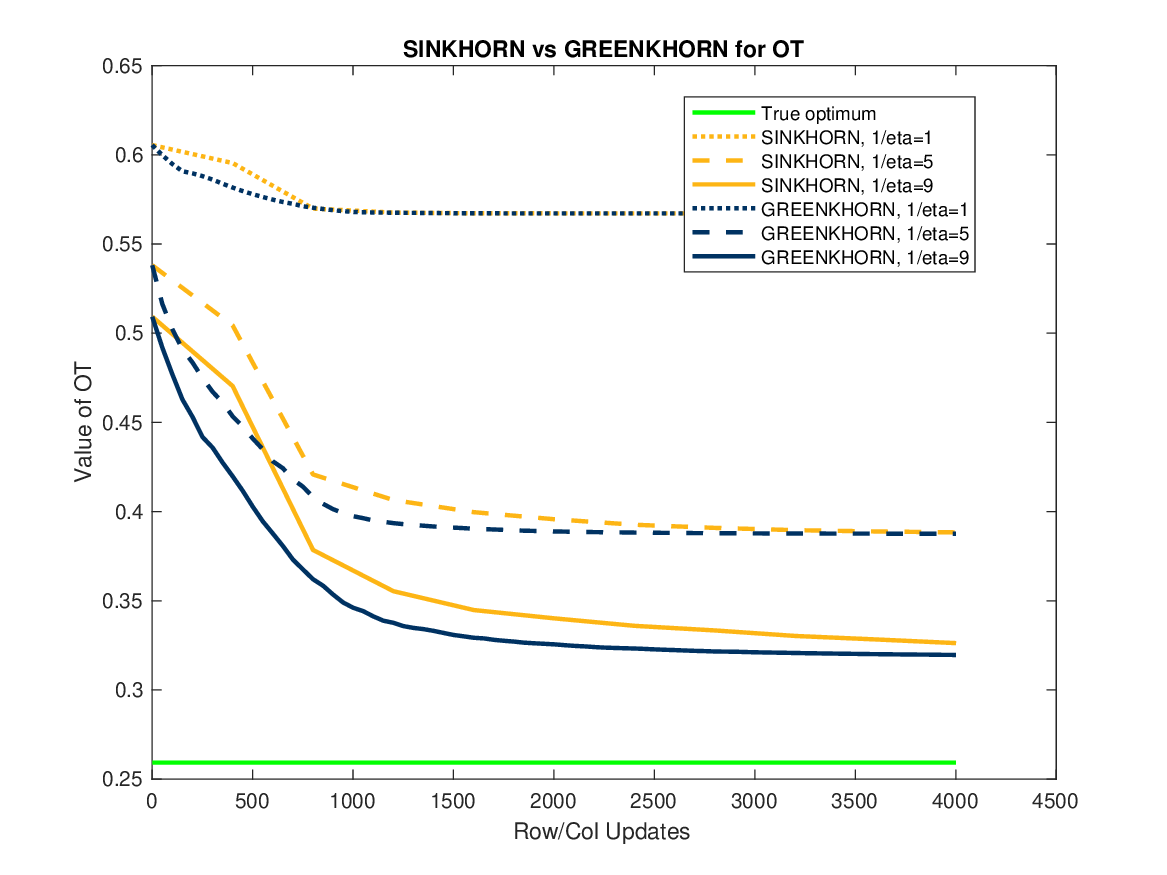}
\end{minipage} \\
\begin{minipage}[b]{.32\textwidth}
\includegraphics[width=55mm, height=42mm]{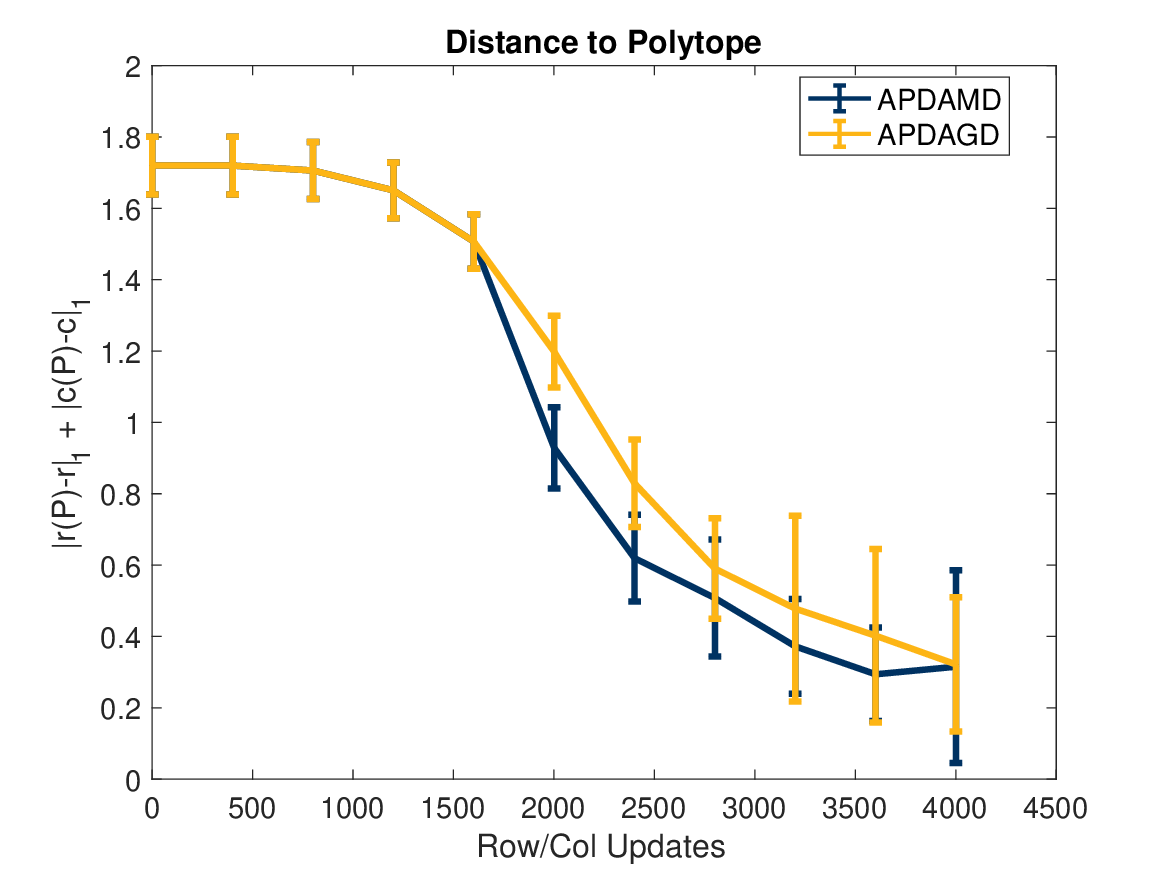}
\end{minipage}
\begin{minipage}[b]{.32\textwidth}
\includegraphics[width=55mm, height=42mm]{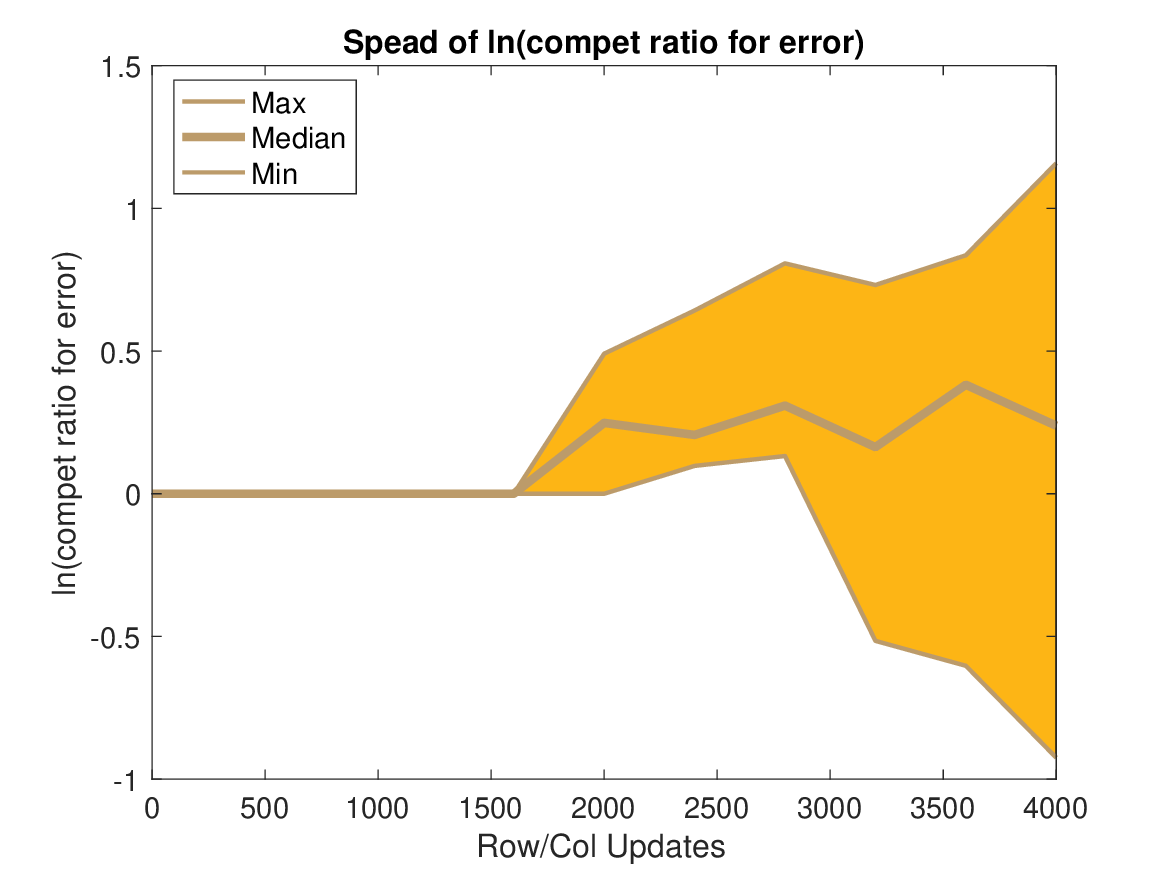}
\end{minipage}
\begin{minipage}[b]{.32\textwidth}
\includegraphics[width=55mm, height=42mm]{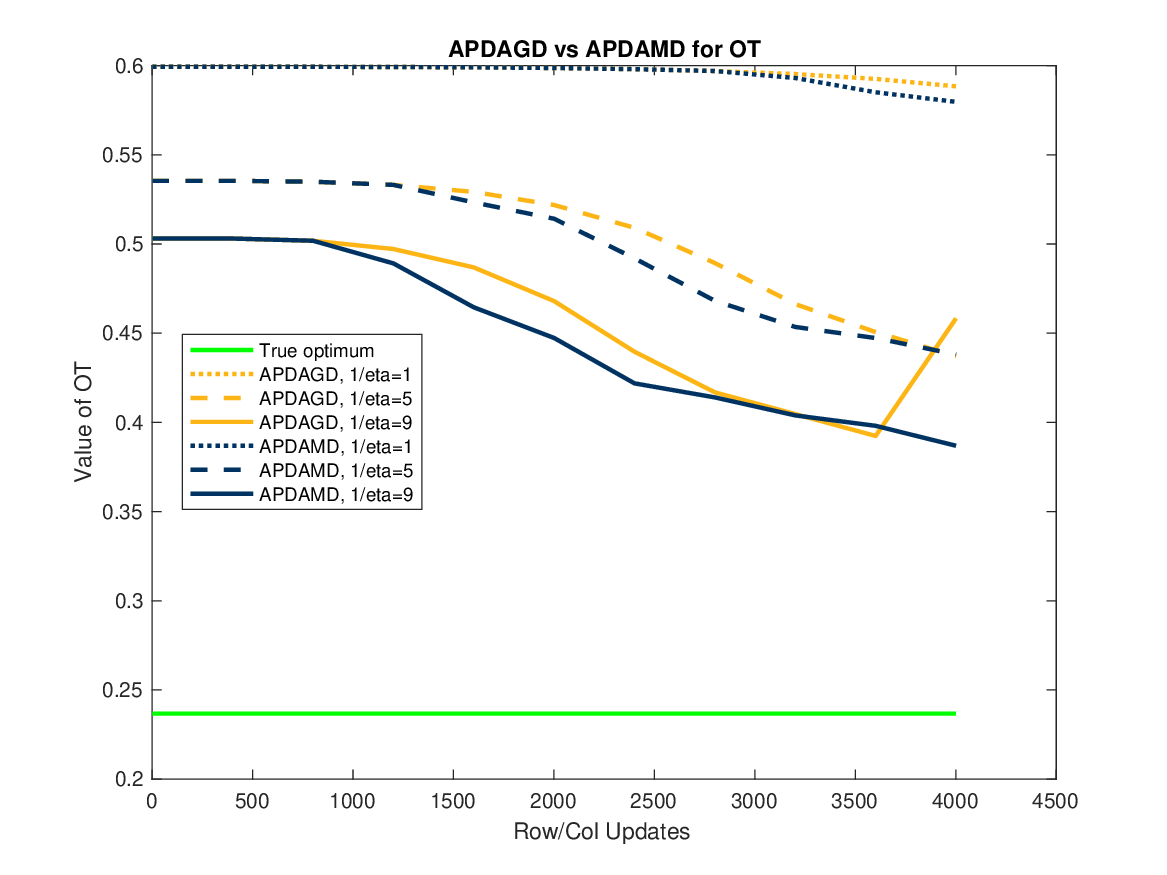}
\end{minipage} \\ 
\begin{minipage}[b]{.32\textwidth}
\includegraphics[width=55mm, height=42mm]{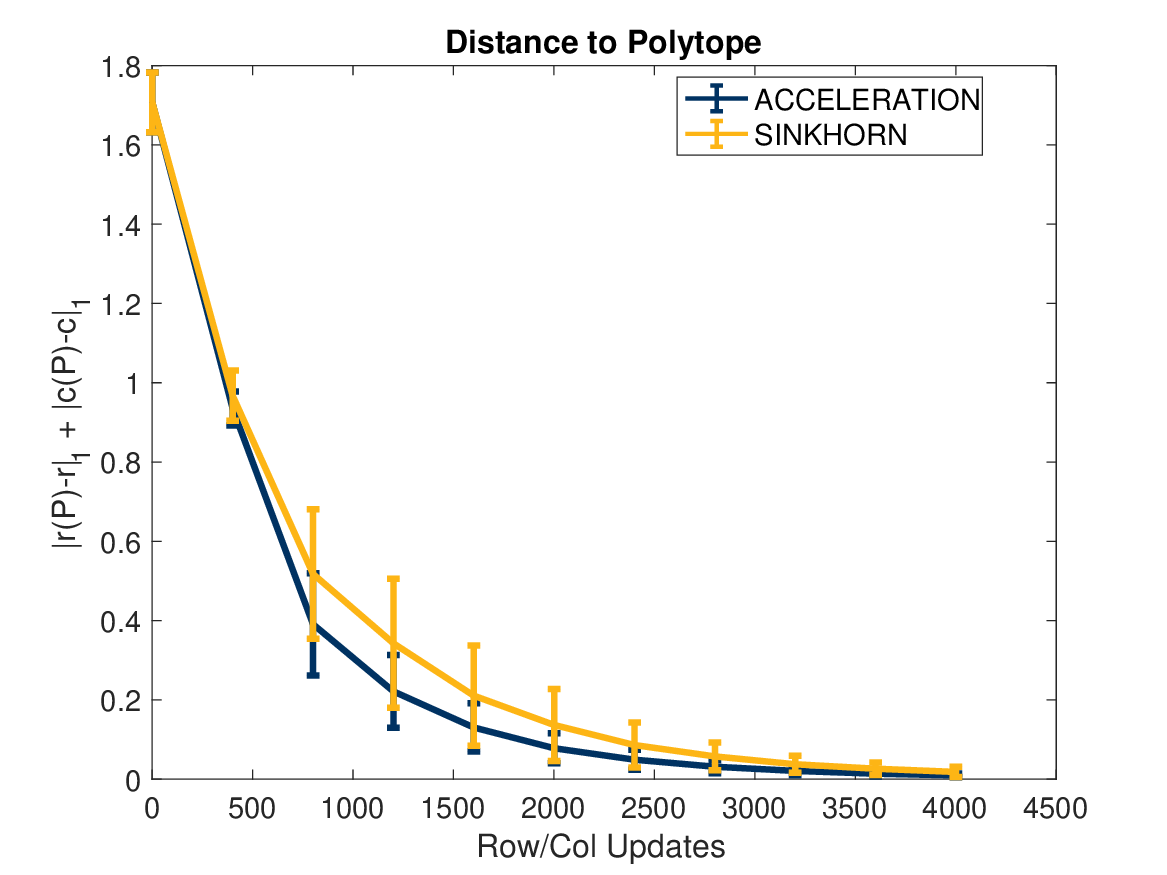}
\end{minipage}
\begin{minipage}[b]{.32\textwidth}
\includegraphics[width=55mm, height=42mm]{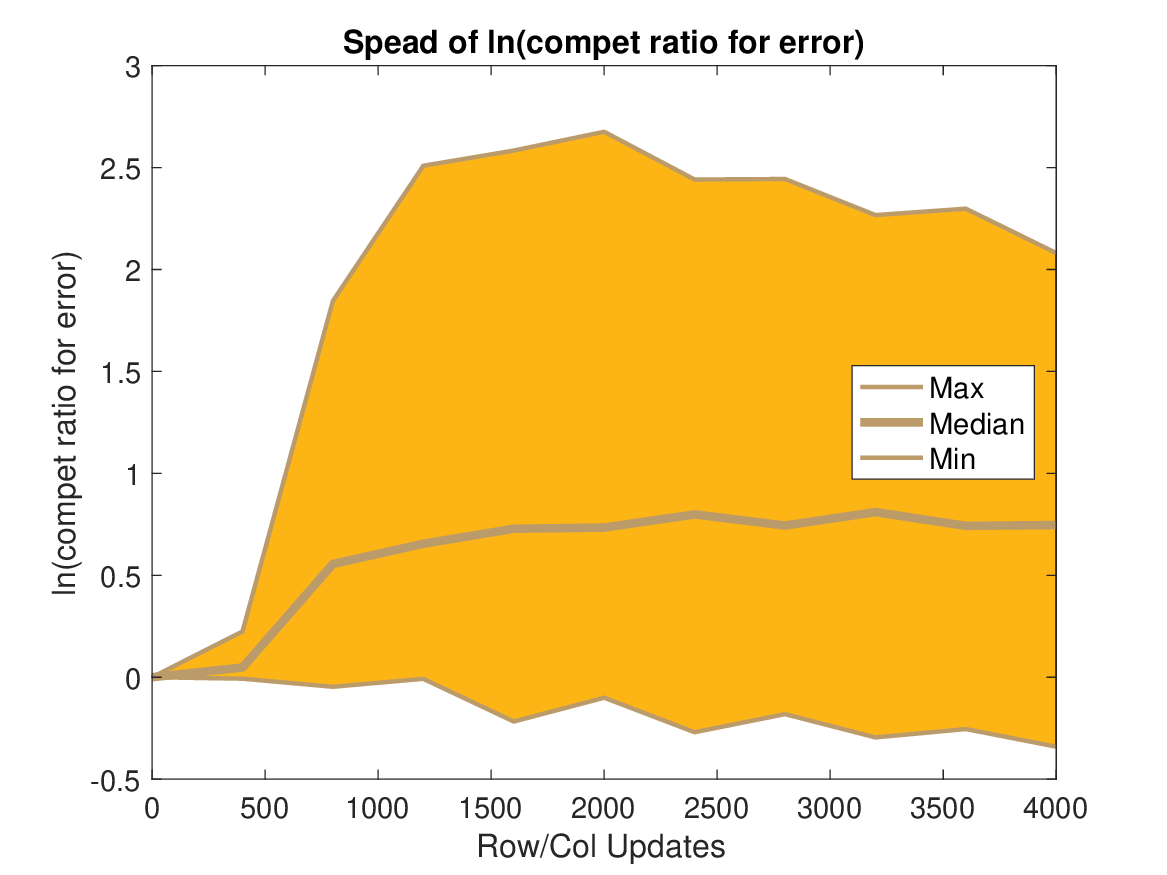}
\end{minipage}
\begin{minipage}[b]{.32\textwidth}
\includegraphics[width=55mm, height=42mm]{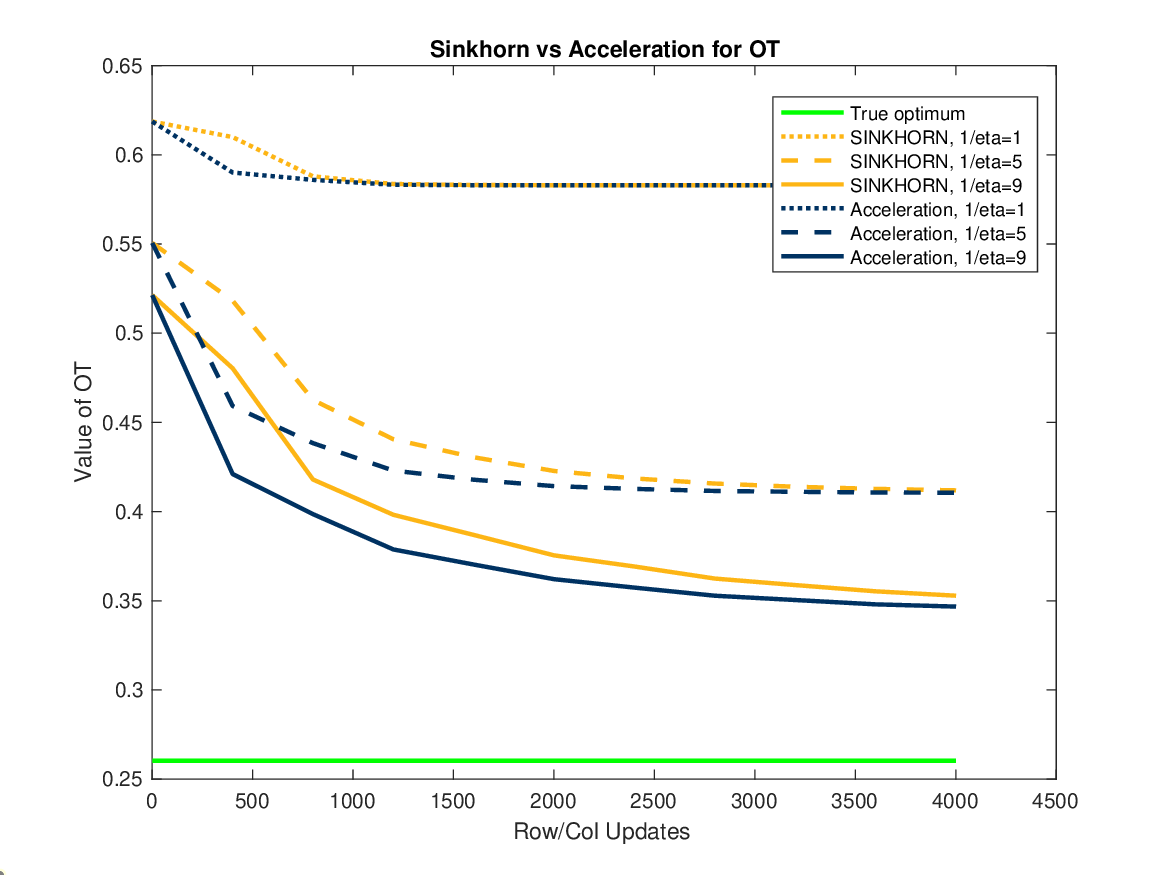}
\end{minipage}
\caption{Comparative performance of Sinkhorn v.s. Greenkhorn, APDAGD v.s. APDAMD and Sinkhorn v.s. accelerated Sinkhorn on synthetic images.}\label{fig:synthetic}\vspace*{-1em}
\end{figure*}
\subsection{Synthetic images}
To generate the synthetic images, we adopt the process from~\citet{Altschuler-2017-Near} and evaluate the performance of different algorithms on these synthetic images. The transportation distance is defined between two synthetic images while the cost matrix is defined based on the $\ell_1$ distances among locations of pixel in the images. Each image is of size 20 by 20 pixels and generated by means of randomly placing a foreground square in a black background. Furthermore, a uniform distribution on $[0, 1]$ is used for the intensities of the pixels in the background while a uniform distribution on $[0, 50]$ is employed for the pixels in the foreground. We fix the proportion of the size of the foreground square as $10\%$ of the whole images and implement all candidate algorithms. 

We use the standard metrics to assess the performance of all the candidate algorithms. The first metric $d(.)$ is an $\ell_1$ distance between the row, column outputs of some algorithm $\mathcal{A}$ and the corresponding transportation polytope of the probability measures, which is given by: $$d(\mathcal{A}) : = \|r(\mathcal{A}) - r\|_1 + \|c(\mathcal{A}) - c\|_1$$ where $r(\mathcal{A})$ and $c(\mathcal{A})$ are the row and column obtained from the output of the algorithm $\mathcal{A}$ and $r$ and $c$ are row and column vectors of the original probability measures. The second metric is defined as competitive ratio $\log(d(\mathcal{A}_{1})/d(\mathcal{A}_{2}))$ where $d(\mathcal{A}_{1})$ and $d(\mathcal{A}_{2})$ are the distances between the row, column outputs of algorithms $\mathcal{A}_1$ and $\mathcal{A}_2$ and the transportation polytope. We perform three pairwise comparative experiments on 10 randomly generated data: Sinkhorn v.s. Greekhorn, APDAGD v.s. APDAMD and Sinkhorn v.s. accelerated Sinkhorn. To further evaluate these algorithms, we compare their performance with respect to different choices of regularization parameter $\eta \in \{1, \frac{1}{5}, \frac{1}{9}\}$ while using the value of the OT problem as the baseline approach. The maximum number of iterations is $T=5$.
\begin{figure*}[!t]
\begin{minipage}[b]{.32\textwidth}
\includegraphics[width=55mm,height=45mm]{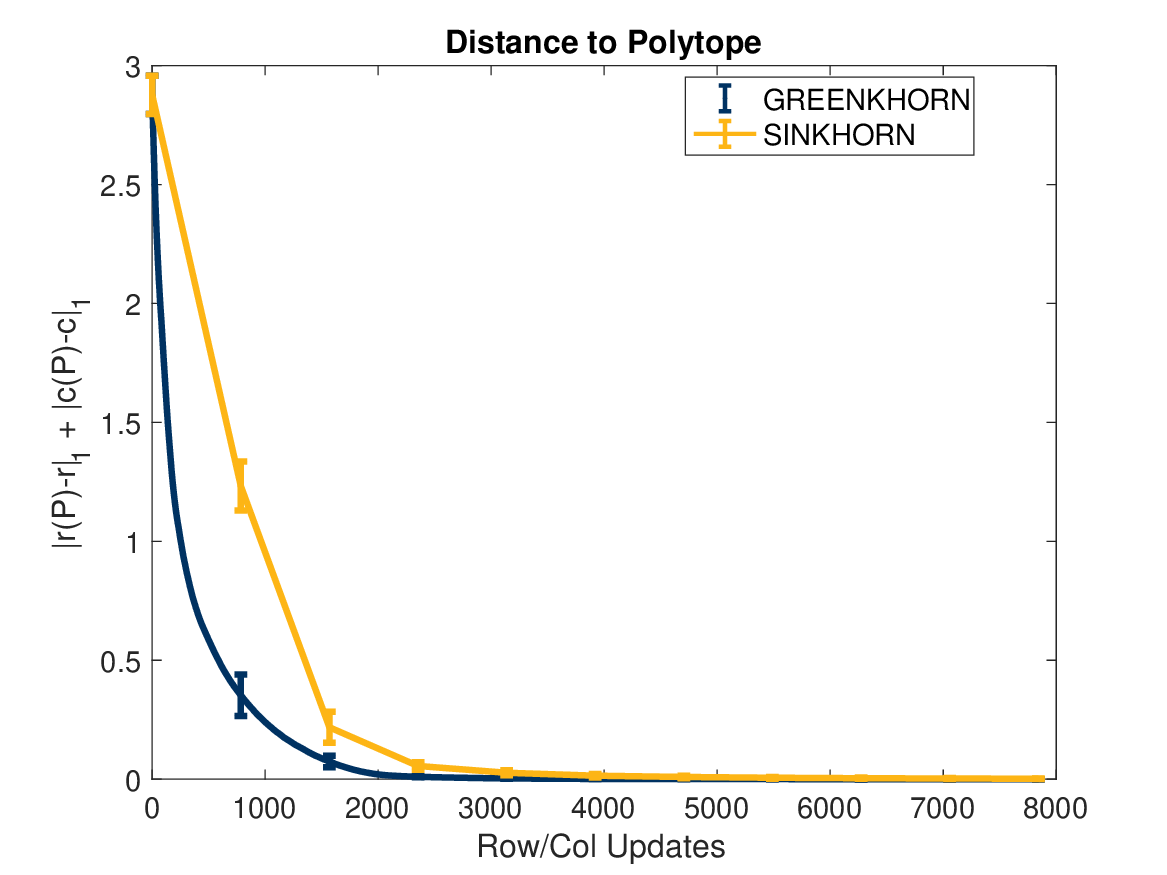}
\end{minipage}
\begin{minipage}[b]{.32\textwidth}
\includegraphics[width=55mm,height=45mm]{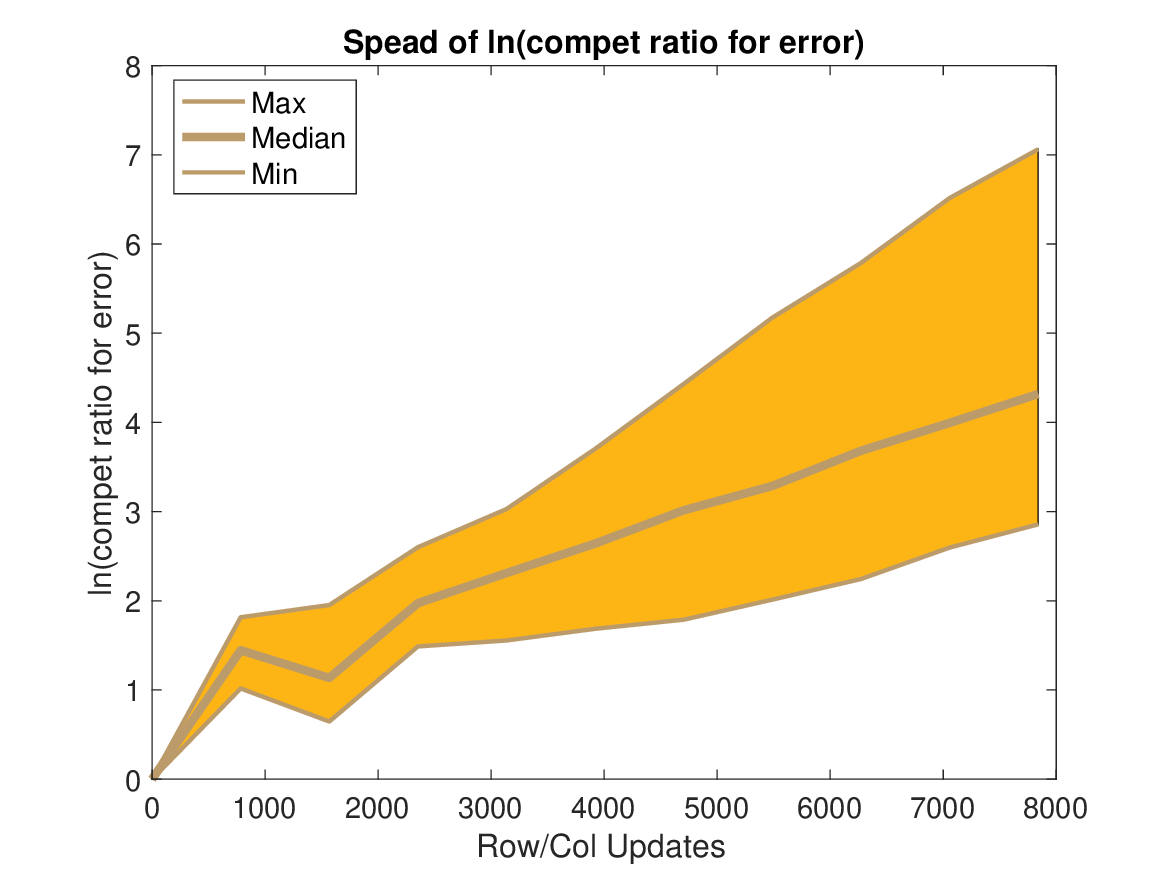}
\end{minipage}
\begin{minipage}[b]{.32\textwidth}
\includegraphics[width=55mm,height=45mm]{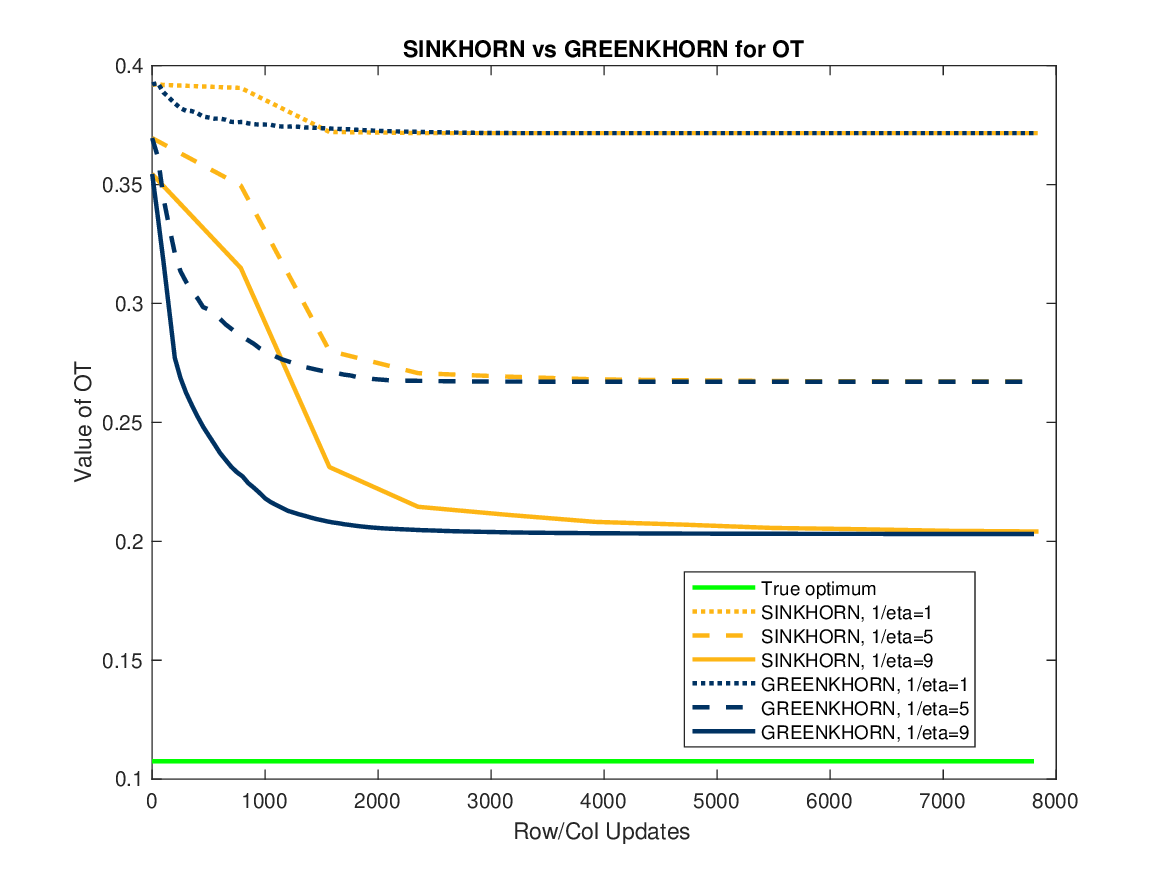}
\end{minipage} \\
\begin{minipage}[b]{.32\textwidth}
\includegraphics[width=55mm,height=42mm]{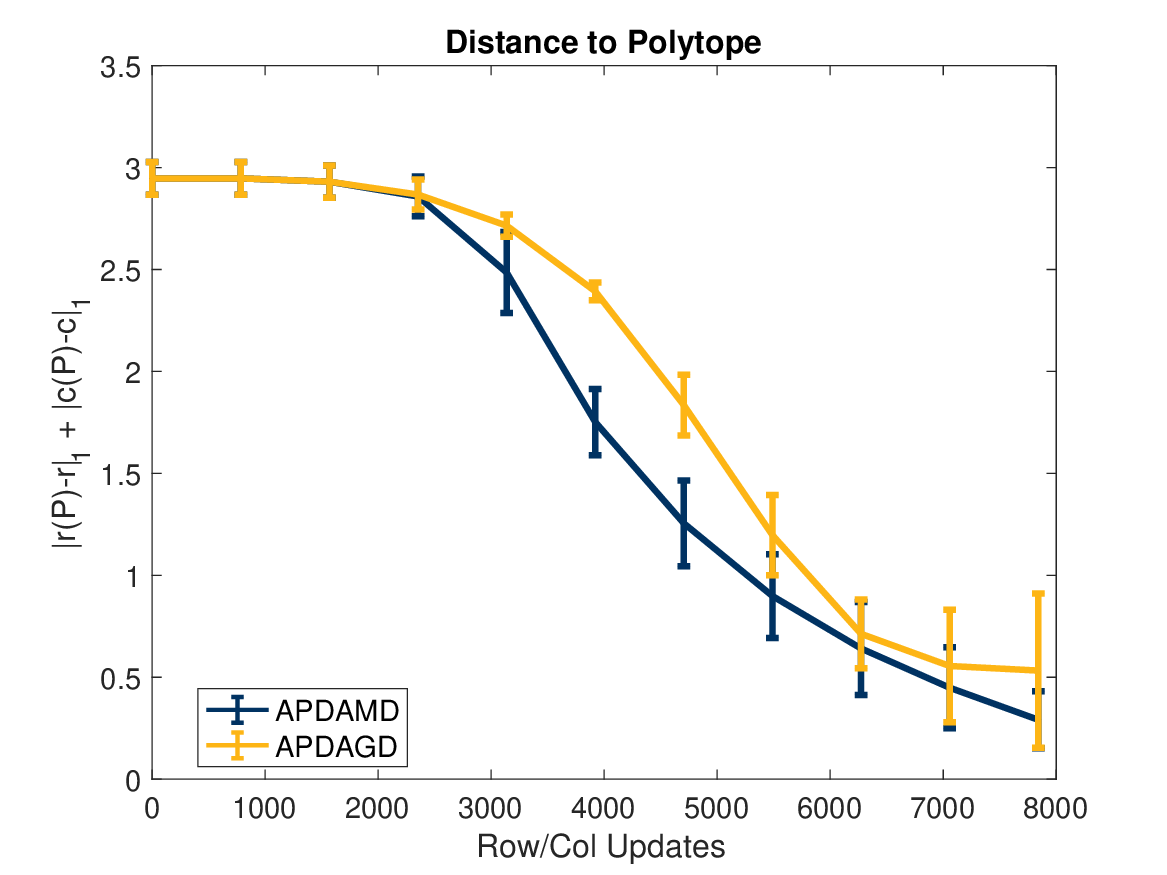}
\end{minipage}
\begin{minipage}[b]{.32\textwidth}
\includegraphics[width=55mm,height=42mm]{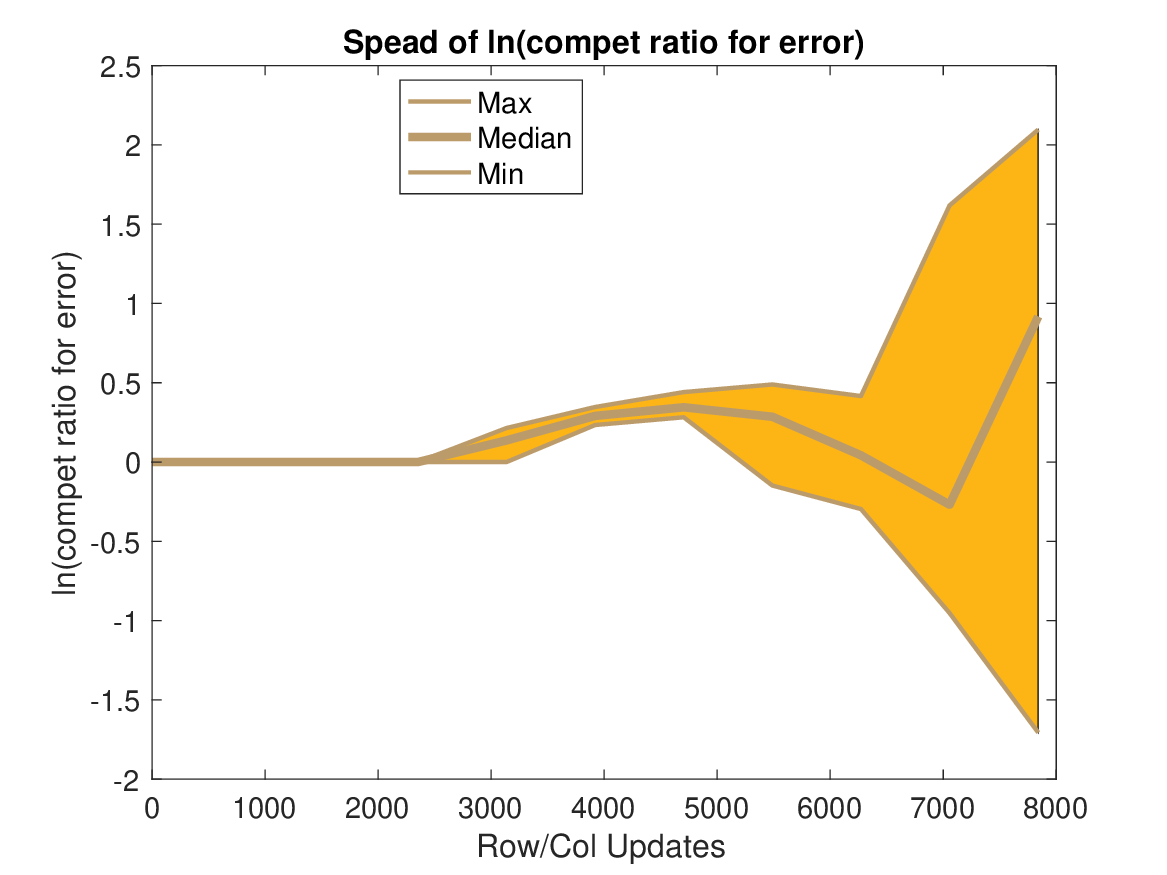}
\end{minipage}
\begin{minipage}[b]{.32\textwidth}
\includegraphics[width=55mm,height=42mm]{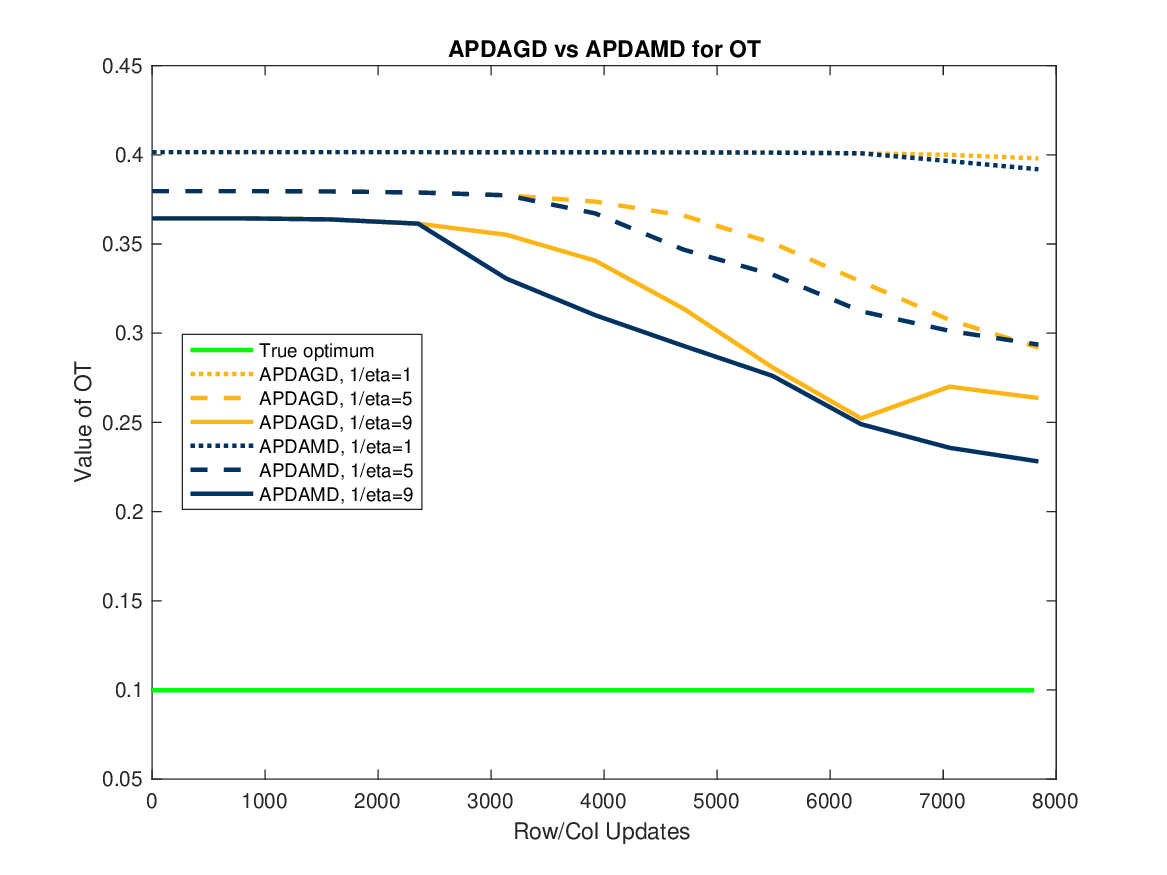}
\end{minipage} \\
\begin{minipage}[b]{.32\textwidth}
\includegraphics[width=55mm,height=42mm]{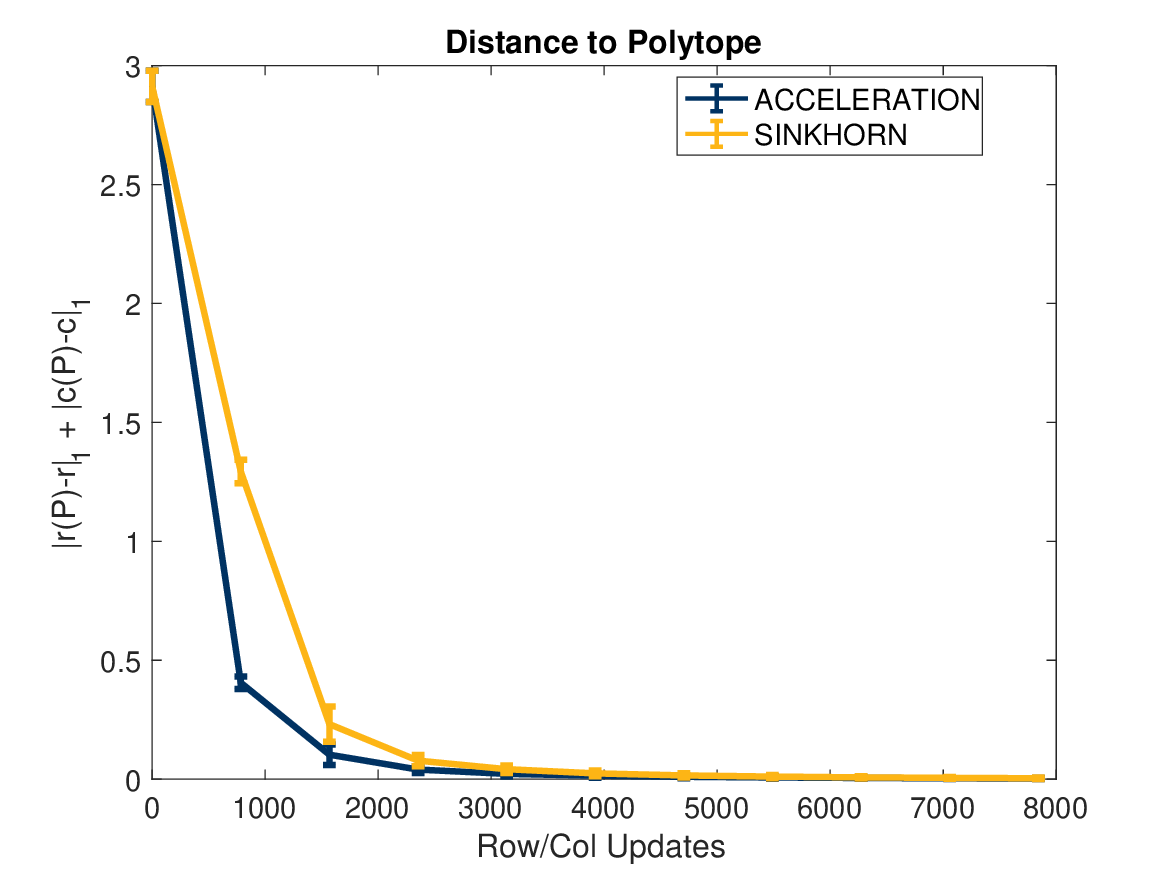}
\end{minipage}
\begin{minipage}[b]{.32\textwidth}
\includegraphics[width=55mm,height=42mm]{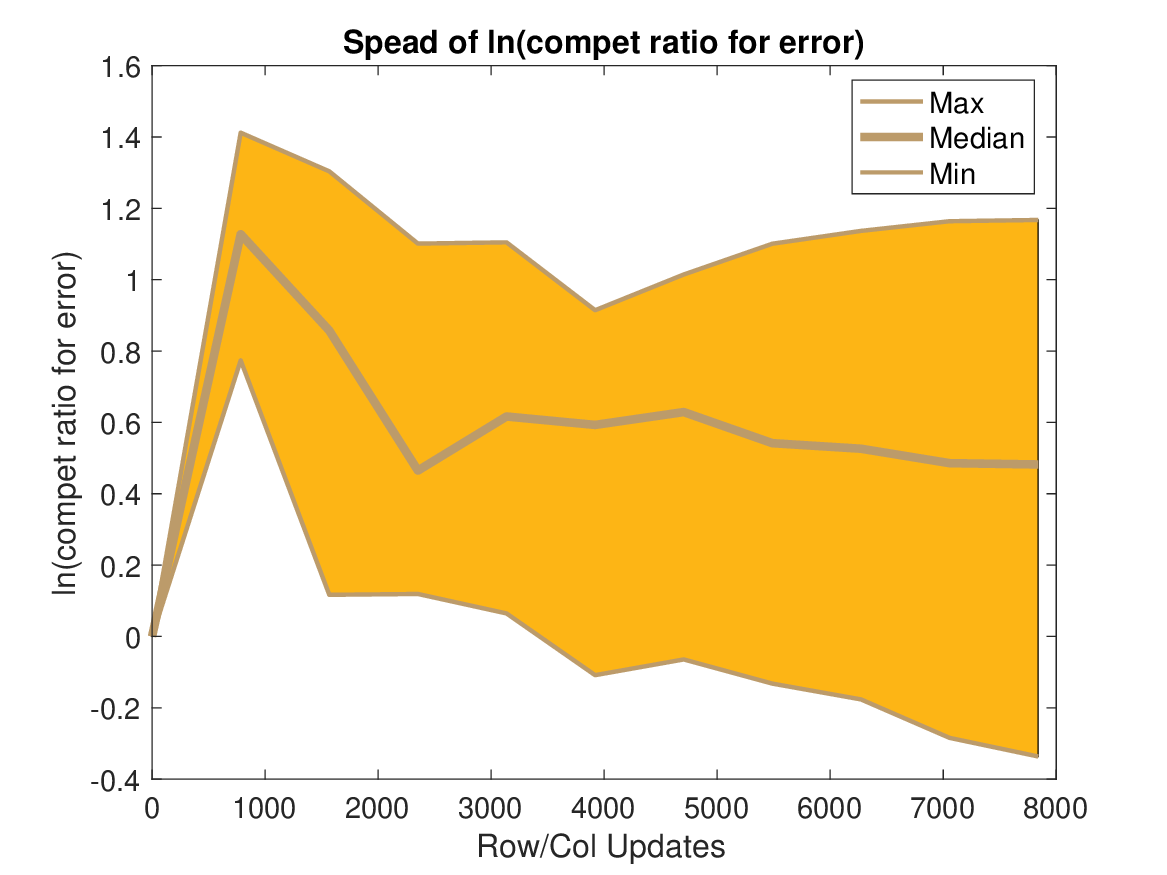}
\end{minipage}
\begin{minipage}[b]{.32\textwidth}
\includegraphics[width=55mm,height=42mm]{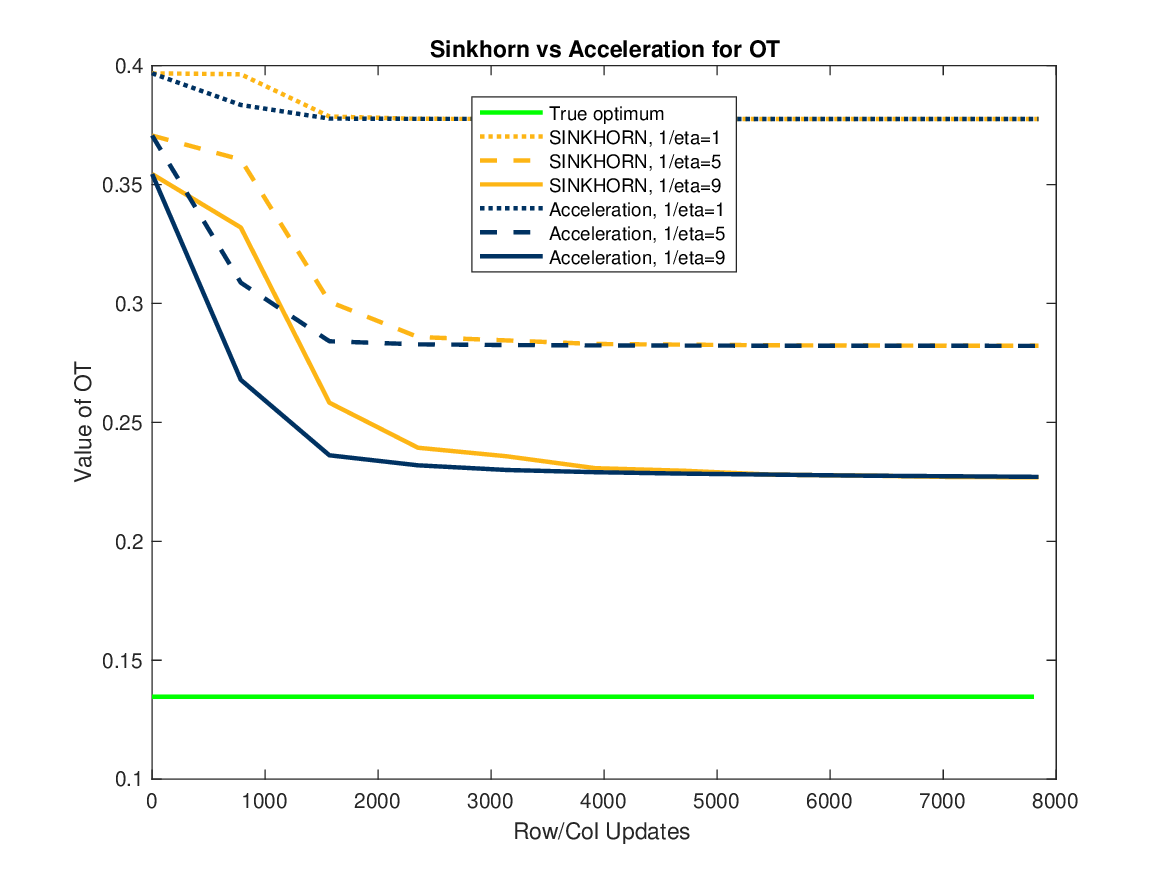}
\end{minipage}
\caption{Comparative performance of Sinkhorn v.s. Greenkhorn, APDAGD v.s. APDAMD and Sinkhorn v.s. accelerated Sinkhorn on the MNIST real images.}\vspace*{-1em}
\label{fig:MNIST}
\end{figure*}
Figure~\ref{fig:synthetic} summarizes the experimental results. The images in the first row show the comparative performance of Sinkhorn and Greenkhorn in terms of the row/column updates. In the leftmost image, the comparison uses distance to transportation polytope $d(\mathcal{A})$ where $\mathcal{A}$ is either Sinkhorn or Greenkhorn. In the middle image, the maximum, median and minimum values of the competitive ratios $\log(d(\mathcal{A}_1)/d(\mathcal{A}_2))$ on 10 images are utilized for the comparison where $\mathcal{A}_{1}$ is Sinkhorn and $\mathcal{A}_{2}$ is Greenkhorn. In the rightmost image, we vary the regularization parameter $\eta \in \{1, \frac{1}{5}, \frac{1}{9}\}$ with these algorithms and using the value of the unregularized OT problem as the baseline. The other rows of images present comparative results for APDAGD v.s. APDAMD and Sinkhorn v.s. accelerated Sinkhorn. We find that (i) Greenkhorn outperforms Sinkhorn in terms of row/column updates, illustrating the improvement from \textit{greedy coordinate descent}; (ii) APDAMD with $\delta = n$ and $\phi=(1/2n)\|\cdot\|^2$ is more robust than APDAGD, illustrating the advantage of using \textit{mirror descent} and line search with $\|\cdot\|_\infty$; (iii) accelerated Sinkhorn outperforms Sinkhorn in terms of row/column updates, illustrating the improvement from \textit{estimated sequence} and \textit{monotone search}. 

\subsection{MNIST images}
We proceed to the comparison between different algorithms on real images, using essentially the same evaluation metrics as in the synthetic images. The MNIST dataset consists of 60,000 images of handwritten digits of size 28 by 28 pixels. To ensure that the masses of probability measures are dense, which leads to a tight dependence on $n$ for our algorithms, we add a very small noise term ($10^{-6}$) to all zero elements in the measures and then normalize them so that their sum is 1. The maximum number of iterations is $T=5$.

Figures~\ref{fig:MNIST} and~\ref{fig:ot-MNIST} summarize the experimental results on MNIST. In the first row of Figure~\ref{fig:MNIST}, we compare Sinkhorn and Greenkhorn in terms of row/column updates. The leftmost image specifies the distances $d(\mathcal{A})$ to the transportation polytope for the algorithm $\mathcal{A}$, which is either Sinkhorn or Greenkhorn; the middle image specifies the maximum, median and minimum of competitive ratios $\log(d(\mathcal{A}_1)/d(\mathcal{A}_2))$ on ten random pairs of MNIST images, where $\mathcal{A}_1$ and $\mathcal{A}_2$ respectively correspond to Sinkhorn and Greenkhorn; the rightmost image specifies the values of the entropic regularized OT problem with varying regularization parameters $\eta \in \{1, \frac{1}{5}, \frac{1}{9}\}$. The remaining rows present comparative results for APDAGD v.s.APDAMD and Sinkhorn v.s.accelerated Sinkhorn. We observe that (i) the comparative performances of Sinkhorn v.s.Greenkhorn and APDAGD v.s.APDAMD are consistent with those on synthetic images; (ii) accelerated Sinkhorn deteriorates but remains better than Sinkhorn; (iii) APDAMD is more robust than APDAGD and GCPB. 
\begin{figure*}[!t]
\begin{minipage}[b]{.32\textwidth}
\includegraphics[width=55mm,height=45mm]{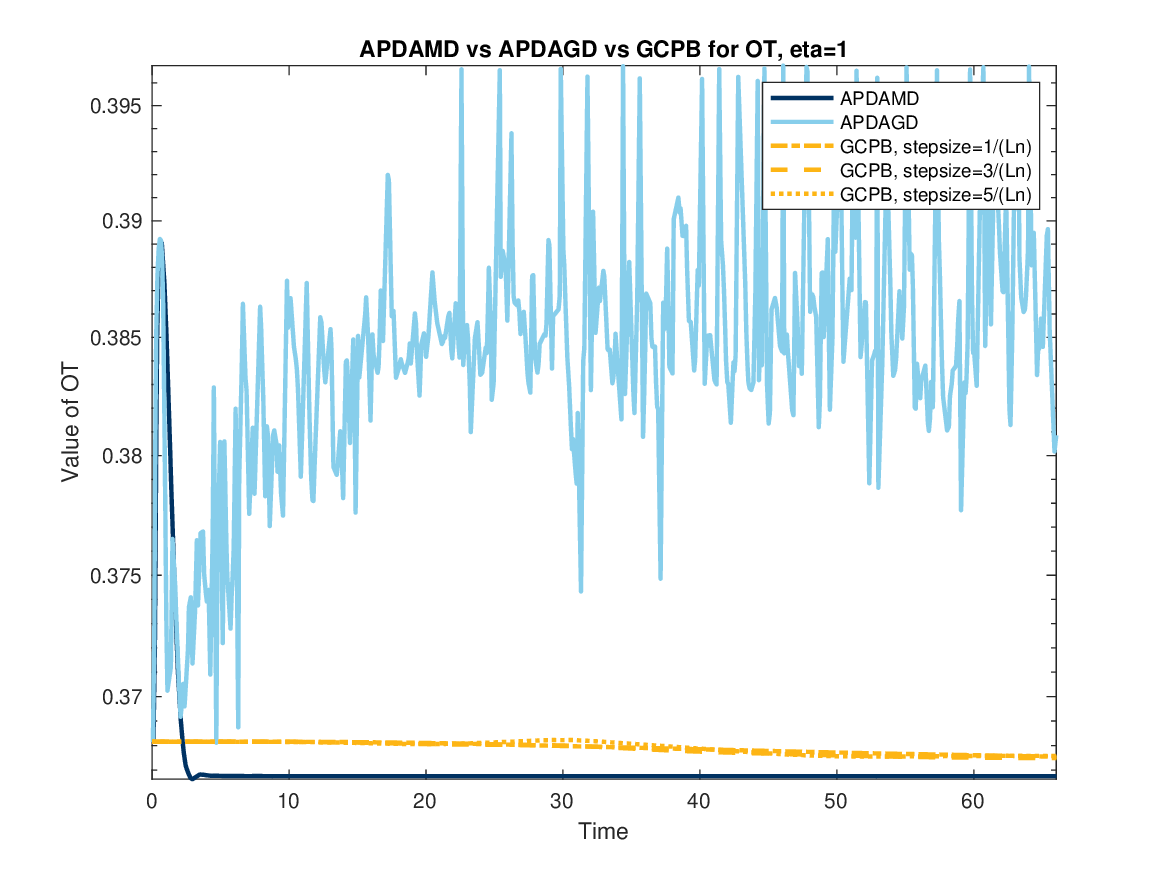}
\end{minipage}
\begin{minipage}[b]{.32\textwidth}
\includegraphics[width=55mm,height=45mm]{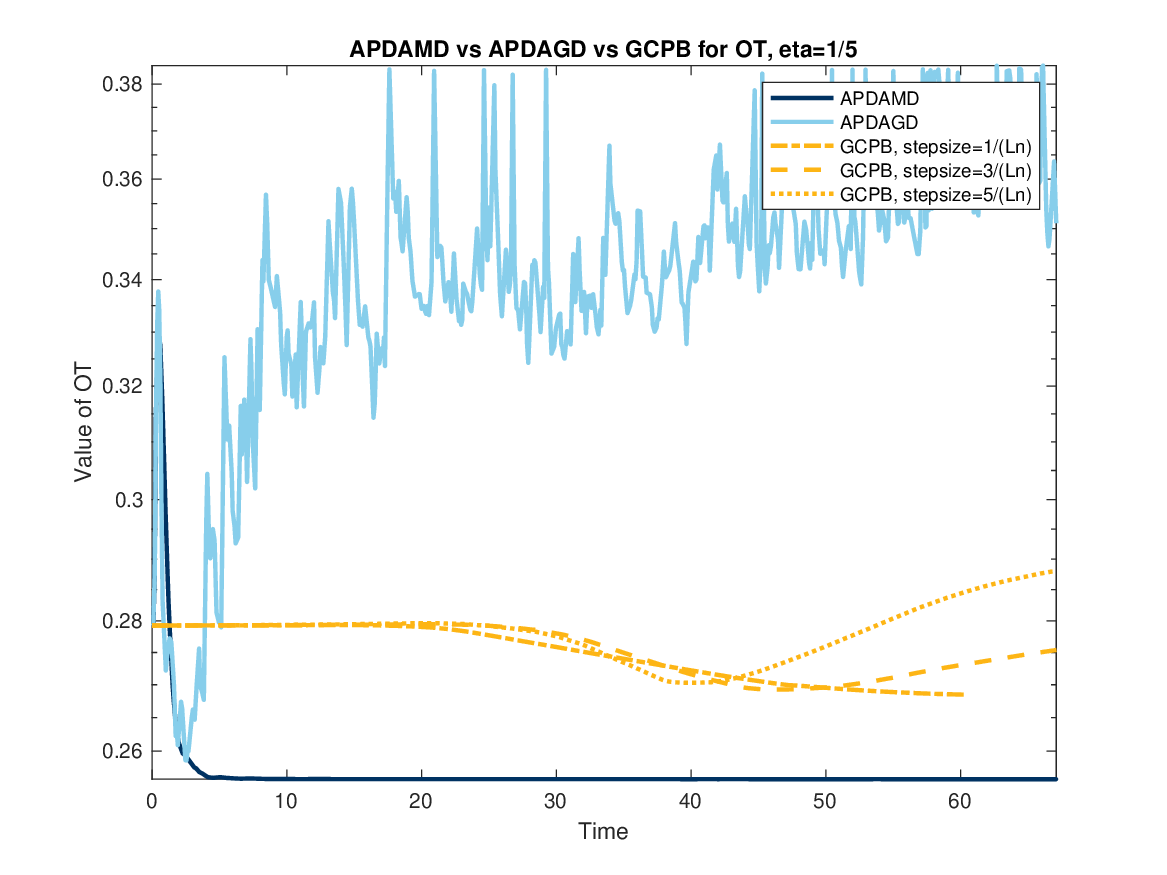}
\end{minipage}
\begin{minipage}[b]{.32\textwidth}
\includegraphics[width=55mm,height=45mm]{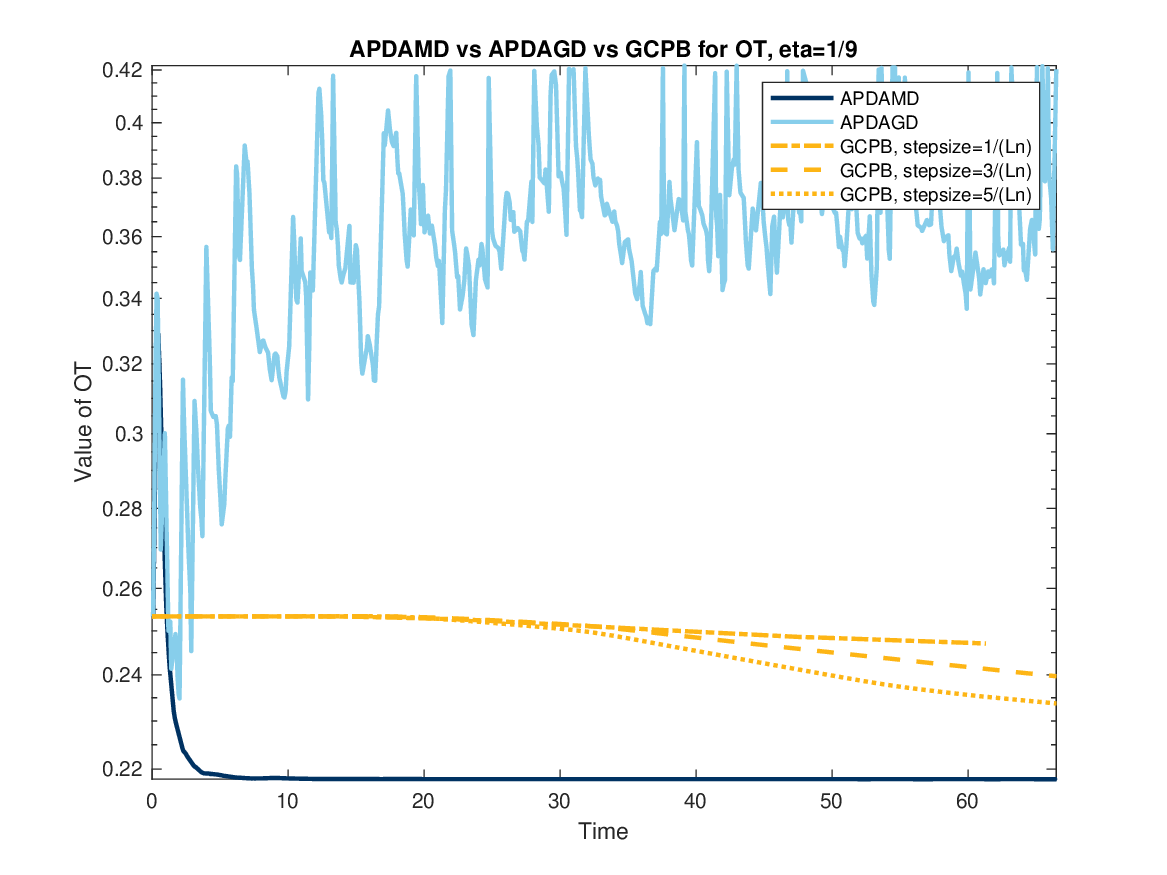}
\end{minipage}
\caption{Performance of GCPB, APDAGD and APDAMD in term of time on the MNIST real images. These images specify the values of entropic regularized OT with varying regularization parameter $\eta \in \{1, \frac{1}{5}, \frac{1}{9}\}$, demonstrating the robustness of APDAMD. }\vspace*{-1em}
\label{fig:ot-MNIST}
\end{figure*}

\section{Conclusion}\label{sec:conclusion}
We first show that the complexity bound of Greenkhorn can be improved to $\bigOtil(n^2\varepsilon^{-2})$, which matches the best known bound of Sinkhorn. Then, we propose APDAMD by generalizing APDAGD with a prespecified mirror mapping $\phi$ and show that it achieves the complexity bound of $\bigOtil(n^2\sqrt{\delta}\varepsilon^{-1})$ where $\delta>0$ refers to the regularity of $\phi$. We prove that the complexity bound of $\bigOtil(\min\{n^{9/4}\varepsilon^{-1}, n^2\varepsilon^{-2}\})$ proved for APDAGD is invalid and prove a refined complexity bound of $\bigOtil(n^{5/2}\varepsilon^{-1})$. Moreover, we propose a \textit{deterministic} accelerated variant of Sinkhorn via appeal to estimate sequence techniques and prove the complexity bound of $\bigOtil(n^{7/3}\varepsilon^{-4/3})$. As such, we see that accelerated Sinkhorn outperforms Sinkhorn and Greenkhorn in terms of $1/\varepsilon$ and APDAGD and AAM in terms of $n$. Experiments on synthetic data and real images demonstrate the efficiency of our algorithms. 

There are a few promising future directions arising from this work. First, it is important to develop fast algorithms to compute dimension-reduced versions of OT. Indeed, the OT suffers from the curse of dimensionality~\citep{Dudley-1969-Speed, Fournier-2015-Rate}, which means that a large amount of samples from two continuous measures is necessary to approximate the true OT between them. This can be mitigated when data lie on low-dimensional manifolds~\citep{Weed-2019-Sharp, Paty-2019-Subspace} but the sample complexity still remain pessimistic even in that case. This motivates recent works on efficient dimension-reduced OT, e.g., the sliced OT~\citep{Bonneel-2015-Sliced}, generalized sliced OT~\citep{Kolouri-2019-Generalized}, distributional sliced OT~\citep{Nguyen-2021-Distributional}, further inspiring us to explore the application of our algorithms to these settings and eventually automatic differentiation schemes. Second, there have been several application problems arising from the interplay between OT and adversarial ML; see~\citet{Bhagoji-2019-Lower} and~\citet{Pydi-2020-Adversarial} for example. However, it is known that OT has robustness issues when there are outliers in the supports of probability measures. Robust OT had been introduced to deal with these robustness issues~\citep{Balaji-2020-Robust} where the idea is to relax the marginal constraints via certain probability divergences, such as KL divergence. It is to limit the amount of masses that the transportation plan will assign for the outliers in the supports of measures. Similar to OT, a key practical question with robust OT is computational. As such, we manage to develop efficient algorithms for the robust OT problem in the future work.

\section{Acknowledgments}
This work was supported in part by the Mathematical Data Science program of the Office of Naval Research under grant number N00014-18-1-2764 to MJ, and by the NSF IFML 2019844 award and research gifts by UT Austin ML grant to NH.

\bibliographystyle{plainnat}
\bibliography{ref}

\begin{thebibliography}{83}
\providecommand{\natexlab}[1]{#1}
\providecommand{\url}[1]{\texttt{#1}}
\expandafter\ifx\csname urlstyle\endcsname\relax
  \providecommand{\doi}[1]{doi: #1}\else
  \providecommand{\doi}{doi: \begingroup \urlstyle{rm}\Url}\fi

\bibitem[Abid and Gower(2018)]{Abid-2018-Greedy}
B.~K. Abid and R.~M. Gower.
\newblock Greedy stochastic algorithms for entropy-regularized optimal
  transport problems.
\newblock In \emph{AISTATS}, 2018.

\bibitem[Allen-Zhu et~al.(2017)Allen-Zhu, Li, Oliveira, and
  Wigderson]{Allen-2017-Much}
Z.~Allen-Zhu, Y.~Li, R.~Oliveira, and A.~Wigderson.
\newblock Much faster algorithms for matrix scaling.
\newblock In \emph{FOCS}, pages 890--901. IEEE, 2017.

\bibitem[Altschuler et~al.(2017)Altschuler, Weed, and
  Rigollet]{Altschuler-2017-Near}
J.~Altschuler, J.~Weed, and P.~Rigollet.
\newblock Near-linear time approximation algorithms for optimal transport via
  {S}inkhorn iteration.
\newblock In \emph{NeurIPS}, pages 1964--1974, 2017.

\bibitem[Altschuler et~al.(2019)Altschuler, Bach, Rudi, and
  Niles-Weed]{Altschuler-2019-Massively}
J.~Altschuler, F.~Bach, A.~Rudi, and J.~Niles-Weed.
\newblock Massively scalable {S}inkhorn distances via the {N}ystr{\"o}m method.
\newblock In \emph{NeurIPS}, pages 4429--4439, 2019.

\bibitem[Arjovsky et~al.(2017)Arjovsky, Chintala, and
  Bottou]{Arjovsky-2017-Wasserstein}
M.~Arjovsky, S.~Chintala, and L.~Bottou.
\newblock Wasserstein generative adversarial networks.
\newblock In \emph{ICML}, pages 214--223, 2017.

\bibitem[Balaji et~al.(2020)Balaji, Chellappa, and Feizi]{Balaji-2020-Robust}
Y.~Balaji, R.~Chellappa, and S.~Feizi.
\newblock Robust optimal transport with applications in generative modeling and
  domain adaptation.
\newblock In \emph{NeurIPS}, 2020.

\bibitem[Bernton et~al.(2019)Bernton, Jacob, Gerber, and
  Robert]{Bernton-2019-Parameter}
E.~Bernton, P.~E. Jacob, M.~Gerber, and C.~P. Robert.
\newblock On parameter estimation with the {W}asserstein distance.
\newblock \emph{Information and Inference: A Journal of the IMA}, 8\penalty0
  (4):\penalty0 657--676, 2019.

\bibitem[Bertsimas and Tsitsiklis(1997)]{Bertsimas-1997-Introduction}
D.~Bertsimas and J.~Tsitsiklis.
\newblock \emph{Introduction to Linear Optimization}.
\newblock Athena Scientific, 1997.

\bibitem[Bhagoji et~al.(2019)Bhagoji, Cullina, and Mittal]{Bhagoji-2019-Lower}
A.~N. Bhagoji, D.~Cullina, and P.~Mittal.
\newblock Lower bounds on adversarial robustness from optimal transport.
\newblock In \emph{NeurIPS}, pages 7498--7510, 2019.

\bibitem[Blanchet and Murthy(2019)]{Blanchet-2019-Quantifying}
J.~Blanchet and K.~Murthy.
\newblock Quantifying distributional model risk via optimal transport.
\newblock \emph{Mathematics of Operations Research}, 44\penalty0 (2):\penalty0
  565--600, 2019.

\bibitem[Blanchet et~al.(2018)Blanchet, Jambulapati, Kent, and
  Sidford]{Blanchet-2018-Towards}
J.~Blanchet, A.~Jambulapati, C.~Kent, and A.~Sidford.
\newblock Towards optimal running times for optimal transport.
\newblock \emph{ArXiv Preprint: 1810.07717}, 2018.

\bibitem[Blanchet et~al.(2019)Blanchet, Kang, and Murthy]{Blanchet-2019-Robust}
J.~Blanchet, Y.~Kang, and K.~Murthy.
\newblock Robust {W}asserstein profile inference and applications to machine
  learning.
\newblock \emph{Journal of Applied Probability}, 56\penalty0 (3):\penalty0
  830--857, 2019.

\bibitem[Blondel et~al.(2018)Blondel, Seguy, and Rolet]{Blondel-2018-Smooth}
M.~Blondel, V.~Seguy, and A.~Rolet.
\newblock Smooth and sparse optimal transport.
\newblock In \emph{AISTATS}, pages 880--889, 2018.

\bibitem[Bonneel et~al.(2015)Bonneel, Rabin, Peyr{\'e}, and
  Pfister]{Bonneel-2015-Sliced}
N.~Bonneel, J.~Rabin, G.~Peyr{\'e}, and H.~Pfister.
\newblock Sliced and radon {W}asserstein barycenters of measures.
\newblock \emph{Journal of Mathematical Imaging and Vision}, 51\penalty0
  (1):\penalty0 22--45, 2015.

\bibitem[Carri\`{e}re et~al.(2017)Carri\`{e}re, Cuturi, and
  Oudot]{Carriere-2017-Sliced}
M.~Carri\`{e}re, M.~Cuturi, and S.~Oudot.
\newblock Sliced {W}asserstein kernel for persistence diagrams.
\newblock In \emph{ICML}, pages 1--10, 2017.

\bibitem[Chakrabarty and Khanna(2018)]{Chakrabarty-2018-Better}
D.~Chakrabarty and S.~Khanna.
\newblock Better and simpler error analysis of the {S}inkhorn-{K}nopp algorithm
  for matrix scaling.
\newblock In \emph{1st Symposium on Simplicity in Algorithms}, 2018.

\bibitem[Chizat et~al.(2020)Chizat, Roussillon, L{\'e}ger, Vialard, and
  Peyr{\'e}]{Chizat-2020-Faster}
L.~Chizat, P.~Roussillon, F.~L{\'e}ger, F-X. Vialard, and G.~Peyr{\'e}.
\newblock Faster {W}asserstein distance estimation with the {S}inkhorn
  divergence.
\newblock In \emph{NeurIPS}, pages 2257--2269, 2020.

\bibitem[Cohen et~al.(2017)Cohen, Madry, Tsipras, and Vladu]{Cohen-2017-Matrix}
M.~B. Cohen, A.~Madry, D.~Tsipras, and A.~Vladu.
\newblock Matrix scaling and balancing via box constrained {N}ewton's method
  and interior point methods.
\newblock In \emph{FOCS}, pages 902--913. IEEE, 2017.

\bibitem[Courty et~al.(2017)Courty, Flamary, Tuia, and
  Rakotomamonjy]{Courty-2017-Optimal}
N.~Courty, R.~Flamary, D.~Tuia, and A.~Rakotomamonjy.
\newblock Optimal transport for domain adaptation.
\newblock \emph{IEEE Transactions on Pattern Analysis and Machine
  Intelligence}, 39\penalty0 (9):\penalty0 1853--1865, 2017.

\bibitem[Cuturi(2013)]{Cuturi-2013-Sinkhorn}
M.~Cuturi.
\newblock Sinkhorn distances: {L}ightspeed computation of optimal transport.
\newblock In \emph{NeurIPS}, pages 2292--2300, 2013.

\bibitem[Cuturi and Doucet(2014)]{Cuturi-2014-Fast}
M.~Cuturi and A.~Doucet.
\newblock Fast computation of {W}asserstein barycenters.
\newblock In \emph{ICML}, pages 685--693, 2014.

\bibitem[Cuturi and Peyr\'{e}(2016)]{Cuturi-2016-Smoothed}
M.~Cuturi and G.~Peyr\'{e}.
\newblock A smoothed dual approach for variational {W}asserstein problems.
\newblock \emph{SIAM Journal on Imaging Sciences}, 9\penalty0 (1):\penalty0
  320–343, 2016.

\bibitem[Cuturi et~al.(2019)Cuturi, Teboul, and
  Vert]{Cuturi-2019-Differentiable}
M.~Cuturi, O.~Teboul, and J-P. Vert.
\newblock Differentiable ranks and sorting using optimal transport.
\newblock In \emph{NeurIPS}, pages 6861--6871, 2019.

\bibitem[Dong et~al.(2020)Dong, Gao, Peng, Razenshteyn, and
  Sawlani]{Dong-2020-study}
Y.~Dong, Y.~Gao, R.~Peng, I.~Razenshteyn, and S.~Sawlani.
\newblock A study of performance of optimal transport.
\newblock \emph{ArXiv Preprint: 2005.01182}, 2020.

\bibitem[Dudley(1969)]{Dudley-1969-Speed}
R.~M. Dudley.
\newblock The speed of mean {G}livenko-{C}antelli convergence.
\newblock \emph{The Annals of Mathematical Statistics}, 40\penalty0
  (1):\penalty0 40--50, 1969.

\bibitem[Dvurechenskii et~al.(2018)Dvurechenskii, Dvinskikh, Gasnikov, Uribe,
  and Nedich]{Dvurechenskii-2018-Decentralize}
P.~Dvurechenskii, D.~Dvinskikh, A.~Gasnikov, C.~Uribe, and A.~Nedich.
\newblock Decentralize and randomize: {F}aster algorithm for {W}asserstein
  barycenters.
\newblock In \emph{NeurIPS}, pages 10783--10793, 2018.

\bibitem[Dvurechensky et~al.(2018)Dvurechensky, Gasnikov, and
  Kroshnin]{Dvurechensky-2018-Computational}
P.~Dvurechensky, A.~Gasnikov, and A.~Kroshnin.
\newblock Computational optimal transport: {C}omplexity by accelerated gradient
  descent is better than by {S}inkhorn’s algorithm.
\newblock In \emph{ICML}, pages 1367--1376, 2018.

\bibitem[Fercoq and Richt{\'a}rik(2015)]{Fercoq-2015-Accelerated}
Olivier Fercoq and Peter Richt{\'a}rik.
\newblock Accelerated, parallel, and proximal coordinate descent.
\newblock \emph{SIAM Journal on Optimization}, 25\penalty0 (4):\penalty0
  1997--2023, 2015.

\bibitem[Feydy et~al.(2019)Feydy, S{\'e}journ{\'e}, Vialard, Amari, Trouv{\'e},
  and Peyr{\'e}]{Feydy-2019-Interpolating}
J.~Feydy, T.~S{\'e}journ{\'e}, F-X. Vialard, S-I. Amari, A.~Trouv{\'e}, and
  G.~Peyr{\'e}.
\newblock Interpolating between optimal transport and {MMD} using {S}inkhorn
  divergences.
\newblock In \emph{AISTATS}, pages 2681--2690. PMLR, 2019.

\bibitem[Flamary and Courty(2017)]{Flamary-2017-Pot}
R.~Flamary and N.~Courty.
\newblock {POT}: Python optimal transport library, 2017.
\newblock URL \url{https://github.com/rflamary/POT}.

\bibitem[Fournier and Guillin(2015)]{Fournier-2015-Rate}
N.~Fournier and A.~Guillin.
\newblock On the rate of convergence in {W}asserstein distance of the empirical
  measure.
\newblock \emph{Probability Theory and Related Fields}, 162\penalty0
  (3-4):\penalty0 707--738, 2015.

\bibitem[Gabow and Tarjan(1991)]{Gabow-1991-Faster}
H.~N. Gabow and R.~E. Tarjan.
\newblock Faster scaling algorithms for general graph matching problems.
\newblock \emph{Journal of the ACM (JACM)}, 38\penalty0 (4):\penalty0 815--853,
  1991.

\bibitem[Genevay et~al.(2016)Genevay, Cuturi, Peyr{\'e}, and
  Bach]{Genevay-2016-Stochastic}
A.~Genevay, M.~Cuturi, G.~Peyr{\'e}, and F.~Bach.
\newblock Stochastic optimization for large-scale optimal transport.
\newblock In \emph{NeurIPS}, pages 3440--3448, 2016.

\bibitem[Genevay et~al.(2019)Genevay, Chizat, Bach, Cuturi, and
  Peyr{\'e}]{Genevay-2019-Sample}
A.~Genevay, L.~Chizat, F.~Bach, M.~Cuturi, and G.~Peyr{\'e}.
\newblock Sample complexity of {S}inkhorn divergences.
\newblock In \emph{AISTATS}, pages 1574--1583. PMLR, 2019.

\bibitem[Gulrajani et~al.(2017)Gulrajani, Ahmed, Arjovsky, Dumoulin, and
  Courville]{Gulrajani-2017-Improved}
I.~Gulrajani, F.~Ahmed, M.~Arjovsky, V.~Dumoulin, and A.~C. Courville.
\newblock Improved training of {W}asserstein {GAN}s.
\newblock In \emph{NeurIPS}, pages 5767--5777, 2017.

\bibitem[Guminov et~al.(2021)Guminov, Dvurechensky, Tupitsa, and
  Gasnikov]{Guminov-2021-Combination}
S.~Guminov, P.~Dvurechensky, N.~Tupitsa, and A.~Gasnikov.
\newblock On a combination of alternating minimization and {N}esterov’s
  momentum.
\newblock In \emph{ICML}, pages 3886--3898. PMLR, 2021.

\bibitem[Guo et~al.(2020)Guo, Ho, and Jordan]{Guo-2020-Fast}
W.~Guo, N.~Ho, and M.~Jordan.
\newblock Fast algorithms for computational optimal transport and {W}asserstein
  barycenter.
\newblock In \emph{AISTATS}, pages 2088--2097. PMLR, 2020.

\bibitem[Ho et~al.(2019)Ho, Huynh, Phung, and Jordan]{Ho-2019-Probabilistic}
N.~Ho, V.~Huynh, D.~Phung, and M.~I. Jordan.
\newblock Probabilistic multilevel clustering via composite transportation
  distance.
\newblock \emph{AISTATS}, pages 3149--3157, 2019.

\bibitem[Jambulapati et~al.(2019)Jambulapati, Sidford, and
  Tian]{Jambulapati-2019-Direct}
A.~Jambulapati, A.~Sidford, and K.~Tian.
\newblock A direct tilde $\{$O$\}$(1/epsilon) iteration parallel algorithm for
  optimal transport.
\newblock In \emph{NeurIPS}, pages 11355--11366, 2019.

\bibitem[Kalantari and Khachiyan(1996)]{Kalantari-1996-Complexity}
B.~Kalantari and L.~Khachiyan.
\newblock On the complexity of nonnegative matrix scaling.
\newblock \emph{Linear Algebra and its Applications}, 240:\penalty0 87--103,
  1996.

\bibitem[Kalantari et~al.(2008)Kalantari, Lari, Ricca, and
  Simeone]{Kalantari-2008-Complexity}
B.~Kalantari, I.~Lari, F.~Ricca, and B.~Simeone.
\newblock On the complexity of general matrix scaling and entropy minimization
  via the {RAS} algorithm.
\newblock \emph{Mathematical Programming}, 112\penalty0 (2):\penalty0 371--401,
  2008.

\bibitem[Kantorovich(1942)]{Kantorovich-1942-Translocation}
L.~V. Kantorovich.
\newblock On the translocation of masses.
\newblock In \emph{Dokl. Akad. Nauk. USSR (NS)}, volume~37, pages 199--201,
  1942.

\bibitem[Knight(2008)]{Knight-2008-Sinkhorn}
P.~A. Knight.
\newblock The {S}inkhorn-{K}nopp algorithm: convergence and applications.
\newblock \emph{SIAM Journal on Matrix Analysis and Applications}, 30\penalty0
  (1):\penalty0 261--275, 2008.

\bibitem[Kolouri et~al.(2019)Kolouri, Nadjahi, Simsekli, Badeau, and
  Rohde]{Kolouri-2019-Generalized}
S.~Kolouri, K.~Nadjahi, U.~Simsekli, R.~Badeau, and G.~Rohde.
\newblock Generalized sliced {W}asserstein distances.
\newblock In \emph{NeurIPS}, pages 261--272, 2019.

\bibitem[Kuhn(1955)]{Kuhn-1955-Hungarian}
H.~W. Kuhn.
\newblock The {H}ungarian method for the assignment problem.
\newblock \emph{Naval Research Logistics Quarterly}, 2\penalty0 (1-2):\penalty0
  83--97, 1955.

\bibitem[Kuhn(1956)]{Kuhn-1956-Variants}
H.~W. Kuhn.
\newblock Variants of the {H}ungarian method for assignment problems.
\newblock \emph{Naval Research Logistics Quarterly}, 3\penalty0 (4):\penalty0
  253--258, 1956.

\bibitem[Lahn et~al.(2019)Lahn, Mulchandani, and Raghvendra]{Lahn-2019-Graph}
N.~Lahn, D.~Mulchandani, and S.~Raghvendra.
\newblock A graph theoretic additive approximation of optimal transport.
\newblock In \emph{NeurIPS}, pages 13813--13823, 2019.

\bibitem[Lee and Sidford(2014)]{Lee-2014-Path}
Y.~T. Lee and A.~Sidford.
\newblock Path finding methods for linear programming: Solving linear programs
  in $\widetilde{O}$(sqrt(rank)) iterations and faster algorithms for maximum
  flow.
\newblock In \emph{FOCS}, pages 424--433. IEEE, 2014.

\bibitem[Lei(2020)]{Lei-2020-Convergence}
J.~Lei.
\newblock Convergence and concentration of empirical measures under wasserstein
  distance in unbounded functional spaces.
\newblock \emph{Bernoulli}, 26\penalty0 (1):\penalty0 767--798, 2020.

\bibitem[Lin et~al.(2015)Lin, Lu, and Xiao]{Lin-2015-Accelerated}
Q.~Lin, Z.~Lu, and L.~Xiao.
\newblock An accelerated randomized proximal coordinate gradient method and its
  application to regularized empirical risk minimization.
\newblock \emph{SIAM Journal on Optimization}, 25\penalty0 (4):\penalty0
  2244--2273, 2015.

\bibitem[Lin et~al.(2019{\natexlab{a}})Lin, Ho, and Jordan]{Lin-2019-Efficient}
T.~Lin, N.~Ho, and M.~Jordan.
\newblock On efficient optimal transport: {A}n analysis of greedy and
  accelerated mirror descent algorithms.
\newblock In \emph{ICML}, pages 3982--3991, 2019{\natexlab{a}}.

\bibitem[Lin et~al.(2019{\natexlab{b}})Lin, Hu, and Guo]{Lin-2019-Sparsemax}
T.~Lin, Z.~Hu, and X.~Guo.
\newblock Sparsemax and relaxed {W}asserstein for topic sparsity.
\newblock In \emph{WSDM}, pages 141--149. ACM, 2019{\natexlab{b}}.

\bibitem[Lu et~al.(2018)Lu, Freund, and Mirrokni]{Lu-2018-Accelerating}
H.~Lu, R.~Freund, and V.~Mirrokni.
\newblock Accelerating greedy coordinate descent methods.
\newblock In \emph{ICML}, pages 3263--3272, 2018.

\bibitem[Mena and Niles-Weed(2019)]{Mena-2019-Statistical}
G.~Mena and J.~Niles-Weed.
\newblock Statistical bounds for entropic optimal transport: sample complexity
  and the central limit theorem.
\newblock In \emph{NeurIPS}, pages 4541--4551, 2019.

\bibitem[Monge(1781)]{Monge-1781-Memoire}
G.~Monge.
\newblock M{\'e}moire sur la th{\'e}orie des d{\'e}blais et des remblais.
\newblock \emph{Histoire de l'Acad{\'e}mie Royale des Sciences de Paris}, 1781.

\bibitem[Munkres(1957)]{Munkres-1957-Algorithms}
J.~Munkres.
\newblock Algorithms for the assignment and transportation problems.
\newblock \emph{Journal of the Society for Industrial and Applied Mathematics},
  5\penalty0 (1):\penalty0 32--38, 1957.

\bibitem[Nesterov(2005)]{Nesterov-2005-Smooth}
Y.~Nesterov.
\newblock Smooth minimization of non-smooth functions.
\newblock \emph{Mathematical Programming}, 103\penalty0 (1):\penalty0 127--152,
  2005.

\bibitem[Nesterov(2018)]{Nesterov-2018-Lectures}
Y.~Nesterov.
\newblock \emph{Lectures on Convex Optimization}, volume 137.
\newblock Springer, 2018.

\bibitem[Nguyen et~al.(2021)Nguyen, Ho, Pham, and
  Bui]{Nguyen-2021-Distributional}
K.~Nguyen, N.~Ho, T.~Pham, and H.~Bui.
\newblock Distributional sliced-{W}asserstein and applications to generative
  modeling.
\newblock In \emph{ICLR}, 2021.
\newblock URL \url{https://openreview.net/forum?id=QYjO70ACDK}.

\bibitem[Nguyen(2013)]{Nguyen-2013-Convergence}
X.~Nguyen.
\newblock Convergence of latent mixing measures in finite and infinite mixture
  models.
\newblock \emph{Annals of Statistics}, 4\penalty0 (1):\penalty0 370--400, 2013.

\bibitem[Nguyen(2016)]{Nguyen-2016-Borrowing}
X.~Nguyen.
\newblock Borrowing strength in hierarchical {B}ayes: {p}osterior concentration
  of the {D}irichlet base measure.
\newblock \emph{Bernoulli}, 22\penalty0 (3):\penalty0 1535--1571, 2016.

\bibitem[Orlin(1997)]{Orlin-1997-Polynomial}
J.~B. Orlin.
\newblock A polynomial time primal network simplex algorithm for minimum cost
  flows.
\newblock \emph{Mathematical Programming}, 78\penalty0 (2):\penalty0 109--129,
  1997.

\bibitem[Orlin and Ahuja(1992)]{Orlin-1992-New}
J.~B. Orlin and R.~K. Ahuja.
\newblock New scaling algorithms for the assignment and minimum mean cycle
  problems.
\newblock \emph{Mathematical Programming}, 54\penalty0 (1):\penalty0 41--56,
  1992.

\bibitem[Orlin et~al.(1993)Orlin, Plotkin, and Tardos]{Orlin-1993-Polynomial}
J.~B. Orlin, S.~A. Plotkin, and E.~Tardos.
\newblock Polynomial dual network simplex algorithms.
\newblock \emph{Mathematical Programming}, 60\penalty0 (1):\penalty0 255--276,
  1993.

\bibitem[Paty and Cuturi(2019)]{Paty-2019-Subspace}
F-P. Paty and M.~Cuturi.
\newblock Subspace robust {W}asserstein distances.
\newblock In \emph{ICML}, pages 5072--5081. PMLR, 2019.

\bibitem[Pele and Werman(2009)]{Pele-2009-Fast}
O.~Pele and M.~Werman.
\newblock Fast and robust earth mover’s distance.
\newblock In \emph{ICCV}. IEEE, 2009.

\bibitem[Peyr{\'e} and Cuturi(2019)]{Peyre-2019-Computational}
G.~Peyr{\'e} and M.~Cuturi.
\newblock Computational optimal transport.
\newblock \emph{Foundations and Trends{\textregistered} in Machine Learning},
  11\penalty0 (5-6):\penalty0 355--607, 2019.

\bibitem[Peyr\'{e} et~al.(2016)Peyr\'{e}, Cuturi, and
  Solomon]{Peyre-2016-Averaging}
G.~Peyr\'{e}, M.~Cuturi, and J.~Solomon.
\newblock {G}romov-{W}asserstein averaging of kernel and distance matrices.
\newblock In \emph{ICML}, pages 2664--2672, 2016.

\bibitem[Pydi and Jog(2020)]{Pydi-2020-Adversarial}
M.~S. Pydi and V.~Jog.
\newblock Adversarial risk via optimal transport and optimal couplings.
\newblock In \emph{ICML}, pages 7814--7823. PMLR, 2020.

\bibitem[Quanrud(2019)]{Quanrud-2019-Approximating}
K.~Quanrud.
\newblock Approximating optimal transport with linear programs.
\newblock In \emph{SOSA}, pages 61--69, 2019.

\bibitem[Rolet et~al.(2016)Rolet, Cuturi, and Peyr\'{e}]{Rolet-2016-Fast}
A.~Rolet, M.~Cuturi, and G.~Peyr\'{e}.
\newblock Fast dictionary learning with a smoothed {W}asserstein loss.
\newblock In \emph{AISTATS}, pages 630--638, 2016.

\bibitem[Schmidt et~al.(2017)Schmidt, Roux, and Bach]{Schmidt-2017}
M.~Schmidt, N.~L. Roux, and F.~Bach.
\newblock Minimizing finite sums with the stochastic average gradient.
\newblock \emph{Mathematical Programming}, 162:\penalty0 83–112, 2017.

\bibitem[Schneider and Zenios(1990)]{Schneider-1990-Comparative}
M.~H. Schneider and S.~A. Zenios.
\newblock A comparative study of algorithms for matrix balancing.
\newblock \emph{Operations Research}, 38\penalty0 (3):\penalty0 439--455, 1990.

\bibitem[Sherman(2017)]{Sherman-2017-Area}
J.~Sherman.
\newblock Area-convexity, $\ell_{\infty}$ regularization, and undirected
  multicommodity flow.
\newblock In \emph{STOC}, pages 452--460. ACM, 2017.

\bibitem[Sinkhorn(1974)]{Sinkhorn-1974-Diagonal}
R.~Sinkhorn.
\newblock Diagonal equivalence to matrices with prescribed row and column sums.
\newblock \emph{Proceedings of the American Mathematical Society}, 45\penalty0
  (2):\penalty0 195--198, 1974.

\bibitem[Sommerfeld and Munk(2018)]{Sommerfeld-2018-Inference}
M.~Sommerfeld and A.~Munk.
\newblock Inference for empirical {W}asserstein distances on finite spaces.
\newblock \emph{Journal of the Royal Statistical Society: Series B (Statistical
  Methodology)}, 80\penalty0 (1):\penalty0 219--238, 2018.

\bibitem[Sommerfeld et~al.(2019)Sommerfeld, Zemel, and
  Munk]{Sommerfeld-2019-Optimal}
M.~Sommerfeld, Y.~Zemel, and A.~Munk.
\newblock Optimal transport: {F}ast probabilistic approximation with exact
  solvers.
\newblock \emph{Journal of Machine Learning Research}, 20:\penalty0 1--23,
  2019.

\bibitem[Srivastava et~al.(2015)Srivastava, Cevher, Dinh, and
  Dunson]{Srivastava-2015-WASP}
S.~Srivastava, V.~Cevher, Q.~Dinh, and D.~Dunson.
\newblock {WASP}: {S}calable {B}ayes via barycenters of subset posteriors.
\newblock In \emph{AISTATS}, pages 912--920, 2015.

\bibitem[Srivastava et~al.(2018)Srivastava, Li, and
  Dunson]{Srivastava-2018-Scalable}
S.~Srivastava, C.~Li, and D.~Dunson.
\newblock Scalable {B}ayes via barycenter in {W}asserstein space.
\newblock \emph{Journal of Machine Learning Research}, 19\penalty0
  (8):\penalty0 1--35, 2018.

\bibitem[Tolstikhin et~al.(2018)Tolstikhin, Bousquet, Gelly, and
  Schoelkopf]{Tolstikhin-2018-Wasserstein}
I.~Tolstikhin, O.~Bousquet, S.~Gelly, and B.~Schoelkopf.
\newblock Wasserstein auto-encoders.
\newblock In \emph{ICLR}, 2018.

\bibitem[van~den Brand et~al.(2021)van~den Brand, Lee, Liu, Saranurak, Sidford,
  Song, and Wang]{Van-2021-Minimum}
J.~van~den Brand, Y.~T. Lee, Y.~P. Liu, T.~Saranurak, A.~Sidford, Z.~Song, and
  D.~Wang.
\newblock Minimum cost flows, {MDP}s, and $\ell_1$-regression in nearly linear
  time for dense instances.
\newblock In \emph{STOC}, pages 859--869, 2021.

\bibitem[Villani(2009)]{Villani-2009-Optimal}
C.~Villani.
\newblock \emph{Optimal Transport: Old and New}, volume 338.
\newblock Springer, 2009.

\bibitem[Weed and Bach(2019)]{Weed-2019-Sharp}
J.~Weed and F.~Bach.
\newblock Sharp asymptotic and finite-sample rates of convergence of empirical
  measures in {W}asserstein distance.
\newblock \emph{Bernoulli}, 25\penalty0 (4A):\penalty0 2620--2648, 2019.

\end{thebibliography}
\end{document}